\title{One-Counter Automata with Counter Observability}
\author{Benedikt Bollig}
\affil{LSV, ENS Cachan, CNRS, Inria, Universit{\'e} Paris-Saclay, France\\\texttt{bollig@lsv.fr}}
\authorrunning{B.\ Bollig}
\theoremstyle{plain}
\newtheorem{claim}[theorem]{Claim}
\newtheorem{proposition}[theorem]{Proposition}
\renewcommand{\paragraph}{\@startsection{paragraph}{6}{\z@}{2ex}{-0.7em}{\normalsize\bf}}
\newcounter{todocounter}
\newcommand{\xRightarrow}[2][]{\ext@arrow 0359\Rightarrowfill@{#1}{#2}}
\renewcommand{\epsilon}{\varepsilon}
\renewcommand{\phi}{\varphi}
\newcommand{\A}{\mathcal{A}}
\newcommand{\B}{\mathcal{B}}
\newcommand{\C}{\mathcal{C}}
\newcommand{\F}{\mathcal{B}}
\newcommand{\SA}{\mathcal{S}}
\newcommand{\init}{\iota}
\newcommand{\sem}[1]{\ensuremath{[\![#1]\!]}}
\newcommand{\true}{\mathit{true}}
\newcommand{\false}{\mathit{false}}
\newcommand{\N}{\mathds{N}}
\newcommand{\df}{:=}
\newcommand{\inc}{\raisebox{-0.1ex}{$\shneg$}}
\newcommand{\dec}{\raisebox{-0.2ex}{$\shpos$}}
\newcommand{\sinc}{\raisebox{-0.1ex}{$\scriptstyle\shneg$}}
\newcommand{\sdec}{\raisebox{-0.2ex}{$\scriptstyle\shpos$}}
\newcommand{\incmode}{\mathord{\nearrow}}
\newcommand{\decmode}{\mathord{\searrow}}
\newcommand{\nmode}{\mathord{\xrightarrow{~\;\,}}}
\newcommand{\kform}[1]{\pi_{#1}}
\newcommand{\srelp}[1]{\Longrightarrow_{\!#1}}
\newcommand{\srelpp}[2]{\stackrel{#2}{\Longrightarrow}_{\!#1}}
\newcommand{\Conf}{\mathit{Conf}}
\newcommand{\Confp}[1]{\mathit{Conf}_{\!\!#1}}
\newcommand{\cextr}{\mathit{oca}}
\newcommand{\vextr}{\mathit{vis}}
\newcommand{\oextr}{\mathit{obs}}
\newcommand{\extr}{\eta}
\newcommand{\cL}{{L}_\mathit{oca}}
\newcommand{\vL}{{L}_\mathit{vis}}
\newcommand{\oL}{L_\mathit{obs}}
\newcommand{\eL}{L_\extr}
\newcommand{\cv}{x}
\newcommand{\trans}{\Delta}
\newcommand{\OCO}{\text{OCA}\xspace}
\newcommand{\OCOs}{\text{OCAs}\xspace}
\newcommand{\ECAV}{\text{eOCA}\xspace}
\newcommand{\ECAVs}{\text{eOCAs}\xspace}
\newcommand{\set}[1]{[#1]}
\newcommand{\setz}[1]{[#1]_0}
\newcommand{\cR}{K}
\newcommand{\Op}{\mathsf{Op}}
\newcommand{\op}{\mathit{op}}
\newcommand{\consta}{c}
\newcommand{\constb}{d}
\newcommand{\lab}{\sigma}
\newcommand{\obsletter}{\tau}
\newcommand{\SetGuards}{\Omega}
\newcommand{\Guards}{\mathsf{Guards}}
\newcommand{\Guardsmod}{\Guards^\textup{mod}}
\newcommand{\URel}[3]{X^#3_{#1,#2}}
\newcommand{\pairs}[1]{\cR}
\newcommand{\cA}[1]{\bar{#1}}
\newcommand{\dA}[1]{#1^{\mathit{det}}}
\newcommand{\eA}[1]{#1^{\mathit{ext}}}
\newcommand{\Benc}{\B_\mathsf{enc}}
\newcommand{\thr}{m}
\begin{document}

\maketitle

\pagestyle{fancy}
\fancyhead{}
\renewcommand{\headrulewidth}{0pt}
\fancyfoot[C]{\vspace{2ex}\thepage}

\begin{abstract}
In a one-counter automaton (OCA), one can produce a letter from some finite alphabet, increment and decrement the counter by one, or compare it with constants up to some threshold. It is well-known that universality and language inclusion for OCAs are undecidable. In this paper, we consider OCAs with counter observability: Whenever the automaton produces a letter, it outputs the current counter value along with it. Hence, its language is now a set of words over an infinite alphabet. We show that universality and inclusion for that model are PSPACE-complete, thus no harder than the corresponding problems for finite automata. In fact, by establishing a link with visibly one-counter automata, we show that OCAs with counter observability are effectively determinizable and closed under all boolean operations.
\end{abstract}

\section{Introduction}

One-counter automata (OCAs) are a fundamental model of infinite-state systems. Their expressive power resides between finite automata and pushdown automata. Unlike finite automata, however, OCAs are not robust: They lack closure under complementation and have an undecidable universality, equivalence, and inclusion problem \cite{Greibach69,Ibarra79}.
Several directions to overcome this drawback have been taken. One may underapproximate the above decision problems in terms of bisimilarity \cite{Jancar2000} or overapproximate the system behavior by a finite-state abstraction, e.g., in terms of the downward closure or preserving the Parikh image \cite{vanLeeuwen1978,Parikh1966}.

In this paper, we consider a new and simple way of obtaining a robust model of one-counter systems. Whenever the automaton produces a letter from a finite alphabet $\Sigma$, it will also output the \emph{current} counter value along with it (transitions that manipulate the counter are not concerned). Hence, its language is henceforth a subset of $(\Sigma \times \N)^\ast$.
For obvious reasons, we call this variant \emph{OCAs with counter observability}. We will show that, under the observability semantics, OCAs form a robust automata model: They are closed under all boolean operations. Moreover, their universality and inclusion problem are in PSPACE and, as a simple reduction from universality for finite automata shows, PSPACE-complete.

These results may come as a surprise given that universality for OCAs is undecidable and introducing counter observability seems like an extension of OCAs. But, actually, the problem becomes quite different. The fact that a priori hidden details from a run (in terms of the counter values) are revealed makes the model more robust and the decision problems easier. Note that this is also what happens in input-driven/visibly pushdown automata \cite{Mehlhorn80,Alur2009} or their restriction of visibly OCAs \cite{BaranyLS06,S:CSL:06}. They all recognize languages over a \emph{finite} alphabet and the stack/counter operation can be deduced from the letter that is read. Interestingly, our proofs establish a link between the observability semantics and visibly OCAs, which somehow explains the robustness of OCAs under the observability semantics.

However, it is worth noting that revealing details from a system configuration does not always help, quite the contrary:
Though timed automata and counter automata are closely related \cite{HaaseOW16}, the universality problem of timed automata is decidable only if time stamps are excluded from the semantics \cite{AD94}.

Note that it is not only for the pure fact that we obtain a robust model that we consider counter observability. Counter values usually have a \emph{meaning}, such as energy level, value of a variable, or number of items yet to be produced (cf.\ Example~\ref{ex:job}). In those contexts, it is natural to include them in the semantics, just like including time stamps in timed automata.

\paragraph{Related Work.}

Apart from the connection with visibly OCAs, another model closely related to ours is that of \emph{strong automata} \cite{CzybaST15}. Strong automata operate on infinite alphabets and were introduced as an extension of \emph{symbolic automata} \cite{Bes08,DAntoniV14}.
Essentially, a transition of a strong automaton is labeled with a formula from monadic second-order (MSO) logic over some infinite structure, say $(\N,+1)$. In fact, the formula has two free first-order variables so that it defines a binary relation over $\N$. This relation is interpreted as a constraint between successive letters from the infinite alphabet. We will show that \OCOs with the observability semantics and strong automata over $(\N,+1)$ (extended by a component for the finite alphabet $\Sigma$) are expressively equivalent. This underpins a certain naturalness of the observability semantics. Note that the universality and the inclusion problem have been shown decidable for strong automata over $(\N,+1)$ \cite{CzybaST15}. However, strong automata do not allow us to derive any elementary complexity upper bounds. In fact, our model can be seen as an operational counterpart of strong automata over $(\N,+1)$.

\paragraph{Outline.}

Section~\ref{sec:cav} defines OCAs and their different semantics. Section~\ref{sec:det} relates the observability semantics with visibly OCAs and shows that, under the new semantics, \OCOs are closed under boolean operations and have a PSPACE-complete universality and inclusion problem. In Section~\ref{sec:strong-aut}, we show expressive equivalence of strong automata and OCAs with counter observability.
We conclude in Section~\ref{sec:conclusion}.
Omitted proofs or proof details can be found in the appendix.

\section{One-Counter Automata with Counter Observability}\label{sec:cav}

For $n \in \N = \{0,1,2,\ldots\}$, we set $\set{n} \df \{1,\ldots,n\}$ and $\setz{n} \df \{0,1,\ldots,n\}$. Given an alphabet $\Sigma$, the set of finite words over $\Sigma$,
including the empty word $\epsilon$, is denoted by $\Sigma^\ast$.

\subsection{One-Counter Automata and Their Semantics}

We consider ordinary one-counter automata over some nonempty finite alphabet $\Sigma$. In addition to a finite-state control and transitions that produce a letter from $\Sigma$, they have a counter that can be incremented, decremented, or tested for  values up to some threshold $\thr \in \N$ (as defined in \cite{BaranyLS06}). Accordingly, the set of \emph{counter operations} is $\Op = \{\inc,\dec\}$, where $\inc$ stands for ``increment the counter by one'' and $\dec$ for ``decrement the counter by one''. A transition is of the form $(q,k,\sigma,q')$ where $q,q'$ are states, $k \in \setz{m}$ is a counter test, and $\sigma \in \Sigma \cup \Op$. It leads from $q$ to $q'$, while $\sigma$ either produces a letter from $\Sigma$ or modifies the counter. However, the transition can only be taken if the current counter value $x \in \N$ satisfies
$k = \min\{\cv,\thr\}$. That is,
counter values can be checked against any number strictly below $\thr$ or for being at least $\thr$.
In particular, if $\thr = 1$, then we deal with the classical definition of one-counter automata, which only allows for zero and non-zero tests.

\begin{definition}[OCA, cf.\ \cite{BaranyLS06}]
A \emph{one-counter automaton} (or simply OCA) is a tuple $\A = (Q,\Sigma,\init,F,\thr,\Delta)$ where $Q$ is a finite set of \emph{states}, $\Sigma$ is a nonempty finite alphabet (disjoint from $\Op$), $\init \in Q$ is the \emph{initial state}, $F \subseteq Q$ is the set of \emph{final states}, $m \in \N$ is the \emph{threshold}, and $\Delta \subseteq Q \times \setz{m} \times (\Sigma \cup \Op) \times Q$ is the \emph{transition relation}.
We also say that $\A$ is an $\thr$-OCA.
Its size is defined as $|Q| + |\Sigma| + \thr + |\Delta|$.
\end{definition}

An OCA $\A = (Q,\Sigma,\init,F,\thr,\Delta)$ can have several different semantics:
\begin{itemize}
\item $\cL(\A) \subseteq \Sigma^\ast$ is the classical semantics when $\A$ is seen as an ordinary OCA.

\item $\vL(\A) \subseteq (\Sigma \cup \Op)^\ast$ is the \emph{visibly semantics} where, in addition to the letters from $\Sigma$, all counter movements are made apparent.

\item $\oL(\A) \subseteq (\Sigma \times \N)^\ast$ is the \emph{semantics with counter observability} where the current counter value is output each time a $\Sigma$-transition is taken.
\end{itemize}

We define all three semantics in one go. Let $\Confp{\A} \df Q \times \N$ be the set of \emph{configurations} of $\A$. In a configuration $(q,\cv)$, $q$ is the current state and $\cv$ is the current counter value. The \emph{initial} configuration is $(\init,0)$, and a configuration $(q,\cv)$ is \emph{final} if $q \in F$.


\newcommand{\trace}{\mathit{trace}}

We determine a global transition relation
${\srelp{\A}} \subseteq \Confp{\A} \times ((\Sigma \times \N) \cup \Op) \times \Confp{\A}$.
For two configurations $(q,\cv),(q',\cv') \in \Confp{\A}$ and $\obsletter \in (\Sigma \times \N) \cup \Op$, we have $(q,\cv) \srelpp{\A}{\obsletter\,} (q',\cv')$ if one of the following holds:
\begin{itemize}\itemsep=0.5ex
\item $\obsletter = \inc$ and $\cv' = \cv + 1$ and $(q,\min\{\cv,\thr\},\inc,q') \in \Delta$, or

\item $\obsletter = \dec$ and $\cv' = \cv - 1$ and $(q,\min\{\cv,\thr\},\dec,q') \in \Delta$, or

\item $\cv' = \cv$ and there is $a \in \Sigma$ such that $\obsletter = (a,x)$ and $(q,\min\{\cv,\thr\},a,q') \in \Delta$.
\end{itemize}

A \emph{partial run} of $\A$ is a sequence
$\rho = (q_0,\cv_0) \srelpp{\A}{\obsletter_1\;} (q_1,\cv_1) \srelpp{\A}{\obsletter_2\;} \ldots \srelpp{\A}{\obsletter_n\;} (q_n,\cv_n)$,
with $n \ge 0$.
If, in addition, $(q_0,\cv_0)$ is the initial configuration, then we say that $\rho$ is a \emph{run}.
We call $\rho$ \emph{accepting} if its last configuration $(q_n,\cv_n)$ is final.

Now, the semantics of $\A$ that we consider depends on what we would like to extract from $\trace(\rho) \df \obsletter_1 \ldots \obsletter_n \in ((\Sigma \times \N) \cup \Op)^\ast$. We let (given $(a,x) \in \Sigma \times \N$):
\begin{itemize}\itemsep=0.5ex
\item $\cextr((a,x)) = a$ and $\cextr(\inc) = \cextr(\dec) = \epsilon$

\item $\vextr((a,x)) = a$ and $\vextr(\inc) = \inc$ and $\vextr(\dec) = \dec$

\item $\oextr((a,x)) = (a,x)$ and $\oextr(\inc) = \oextr(\dec) = \epsilon$
\end{itemize}
Moreover, we extend each such mapping $\extr \in \{\cextr,\vextr,\oextr\}$ to $\obsletter_1 \ldots \obsletter_n \in ((\Sigma \times \N) \cup \Op)^\ast$ letting $\extr(\obsletter_1 \ldots \obsletter_n) \df \extr(\tau_1) \cdot \ldots \cdot \extr(\tau_n)$. Note that, hereby $u \cdot \epsilon = \epsilon \cdot u = u$ for any word $u$.

Finally, we let $\eL(\A) = \{\extr(\trace(\rho)) \mid \rho \text{ is an accepting run of } \A\}$.

\newcommand{\req}{\mathsf{req}}
\newcommand{\ack}{\mathsf{prod}}

\begin{figure}[t]
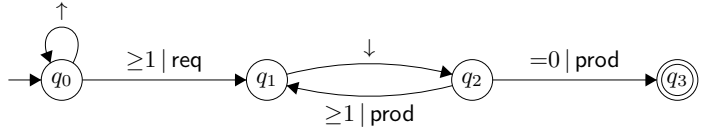

\begin{center}
\scalebox{0.9}{
\begin{gpicture}
\gasset{Nframe=y,Nw=5,Nh=5,Nmr=8,ilength=4,flength=4,AHangle=30} 

\node[Nmarks=i](q0)(0,0){$q_0$}
\node(q1)(25,0){$q_1$}
\node(q2)(50,0){$q_2$}
\node[Nmarks=r](q3)(75,0){$q_3$}

\drawedge(q0,q1){$\mathord{\ge} 1\,|\,\req$}
\drawloop[ELside=r,loopCW=n,loopdiam=4,loopangle=90](q0){$\inc$}
\drawedge[curvedepth=2,ELside=l](q1,q2){$\dec$}
\drawedge[curvedepth=2,ELside=l](q2,q1){$\mathord{\ge} 1\,|\,\ack$}
\drawedge(q2,q3){$\mathord{=}0\,|\,\ack$}

\end{gpicture}
}
\end{center}
\caption{A one-counter automaton with threshold $\thr = 1$\label{fig:cav}}
\end{figure}

\begin{example}\label{ex:job}
Consider the 1-OCA $\A$ from Figure~\ref{fig:cav} over $\Sigma=\{\req,\ack\}$.
For readability, counter tests $0$ and $1$ are written as $\mathord{=}0$ and $\mathord{\ge} 1$, respectively.
A transition without counter test stands for two distinct transitions, one for $\mathord{=}0$ and one for $\mathord{\ge} 1$ (i.e., the counter value may actually be arbitrary).
When looking at the semantics $\oL(\A)$, i.e., with counter observability, we can think of $(\req,n)$ signalizing that the production of $n \ge 1$ items is required (where $n$ is the current counter value). Moreover, $\ack$ indicates that an item has been produced so that, along a run, the counter value represents the number of items yet to be produced. It is thus natural to include it in the semantics. Concretely, we have
\begin{itemize}\itemsep=0.5ex
\item $\cL(\A) = \{\req\; \ack^n \mid n \ge 1\}$
\item $\vL(\A) = \{\inc^n \req\, (\dec\, \ack)^n \mid n \ge 1\}$
\item $\oL(\A) = \{(\req,n)(\ack,n-1)(\ack,n-2) \ldots (\ack,0) \mid n \ge 1\}$
\end{itemize}
Apparently, $\vL(\A)$ and $\oL(\A)$ are the only meaningful semantics in the context described above.
\end{example}

\newcommand{\vpalphabet}{\Gamma}

\begin{remark}
Visibly OCAs \cite{BaranyLS06,S:CSL:06} usually allow for general input alphabets $\vpalphabet$, which are partitioned into $\vpalphabet = \vpalphabet_\textup{inc} \uplus \vpalphabet_\textup{dec} \uplus \vpalphabet_\textup{nop}$ so that every $\gamma \in \vpalphabet$ is associated with a unique counter operation (or ``no counter operation'' if $\gamma \in \vpalphabet_\textup{nop})$. In fact, we consider here (wrt.\ the visibly semantics) a particular case where $\vpalphabet = \Sigma \cup \Op$ with $\vpalphabet_\textup{inc} = \{\inc\}$, $\vpalphabet_\textup{dec} = \{\dec\}$, and $\vpalphabet_\textup{nop} = \Sigma$.
\end{remark}

\subsection{Standard Results for OCAs}

Let us recall some well-known results for classical OCAs and visibly OCAs. For $\extr \in \{\cextr,\vextr,\oextr\}$, the \emph{nonemptiness problem for OCAs} wrt.\ the $\extr$-semantics is defined as follows: Given an \OCO $\A$, do we have $\eL(\A) \neq \emptyset$\,? Of course, this reduces to a reachability problem that is independent of the actual choice of the semantics:

\begin{theorem}[\!\!\cite{Valiant1975}]\label{thm:emptiness}
The nonemptiness problem for OCAs is NL-complete, wrt.\ any of the three semantics.
\end{theorem}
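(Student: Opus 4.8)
The plan is to reduce nonemptiness of an OCA to a reachability question in a graph that is polynomial in the size of $\A$, and then invoke the standard fact that reachability in such graphs is in NL; NL-hardness follows by an easy reduction from graph reachability (\textsc{STCon}). First I would observe that, by the definitions of $\cextr$, $\vextr$, and $\oextr$, in all three cases $\eL(\A)\neq\emptyset$ iff $\A$ has \emph{some} accepting run, i.e.\ iff some final configuration $(q,\cv)$ with $q\in F$ is reachable from the initial configuration $(\init,0)$ via $\srelp{\A}$; the extraction map $\extr$ only post-processes the trace and never affects which runs exist. So the nonemptiness problem is, uniformly, the configuration-reachability problem for the one-counter transition system.

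Next I would argue that this reachability problem is decidable in NL. The key point is that an OCA is a one-counter automaton in the classical sense, and reachability for such machines is well known to be in NL (this is exactly the content of \cite{Valiant1975}); concretely, if a final state is reachable at all, it is reachable by a run whose counter value stays bounded by a polynomial in $|Q|+\thr$ — intuitively, the counter only needs to climb high enough to pass the test $\testge{\thr}$, then it behaves like a finite automaton on the "large" region, so cycles that increase the counter without need can be excised. One can therefore guess the run on the fly, maintaining only the current state (logarithmically many bits) and the current counter value \emph{capped at the polynomial bound} (again logarithmically many bits), checking at each step that the guessed transition is enabled by the test $k=\min\{\cv,\thr\}$, and accepting upon reaching a state in $F$. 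This is a nondeterministic logarithmic-space procedure, so the problem is in NL. An alternative, equally acceptable, route is to cite \cite{Valiant1975} directly for the NL upper bound on OCA reachability and simply note the semantics-independence.

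For NL-hardness I would give a trivial reduction from directed $s$-$t$-reachability: given a directed graph with vertices $V$ and distinguished $s,t$, build a $1$-OCA (in fact even a $0$-threshold OCA suffices, or we may take $\thr=1$ and never touch the counter) with state set $V$, initial state $s$, final states $\{t\}$, alphabet any singleton, and, for each edge $(u,v)$, a $\Sigma$-transition $(u,\min\{\cv,\thr\},a,v)$ with both possible test values so it is always enabled. Then $t$ is reachable in the graph iff $\A$ has an accepting run iff $\eL(\A)\neq\emptyset$ for every $\extr$. This reduction is clearly computable in logarithmic space.

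The main obstacle — really the only non-routine part — is justifying the polynomial bound on the counter value along a shortest accepting run, which is what makes the NL upper bound go through; everything else is bookkeeping. Since this is precisely the classical result of Valiant and Paterson cited as Theorem~\ref{thm:emptiness}'s source \cite{Valiant1975}, I would lean on that citation for the upper bound rather than reprove the pumping argument in detail, and devote the written proof mainly to (i) the semantics-independence observation and (ii) the hardness reduction.
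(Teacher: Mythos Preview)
Your proposal is correct and matches the paper's approach: the paper does not actually prove this theorem but simply remarks that nonemptiness reduces to a reachability problem independent of the choice of semantics and then cites \cite{Valiant1975} for NL-completeness. Your write-up makes exactly this semantics-independence observation and leans on the same citation for the upper bound, while additionally spelling out the polynomial-counter-bound intuition and the standard \textsc{STCon} hardness reduction---more detail than the paper itself provides, but entirely in the same spirit.
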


However, the universality (and, therefore, inclusion) problem for classical OCAs is undecidable:

\begin{theorem}[\hspace{-0.05em}\cite{Greibach69,Ibarra79}]
The following problem is undecidable: Given an OCA $\A$
with alphabet $\Sigma$, do we have $\cL(\A) = \Sigma^\ast$\,?
\end{theorem}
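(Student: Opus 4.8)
The plan is to reduce the complement of the halting problem for deterministic two‑counter (Minsky) machines to OCA universality; since non‑halting of Minsky machines is undecidable, this gives undecidability of universality. This is the right direction: non‑universality is recursively enumerable (membership of a fixed word $w$ in $\cL(\A)$ is decidable, e.g.\ via Theorem~\ref{thm:emptiness} applied to a product of $\A$ with an automaton for $\{w\}$), so one must reduce from a co\nobreakdash-r.e.-complete problem. Concretely, I would fix a deterministic Minsky machine $M$ over two counters with a distinguished halting state, and encode a configuration $(q,n_1,n_2)$ of $M$ by the word $q\,\$\,\mathtt{1}^{n_1}\,\$\,\mathtt{1}^{n_2}$ over a finite alphabet $\Sigma$ comprising the states of $M$, separators $\$$ and $\#$, and the symbol $\mathtt{1}$; a finite computation is then encoded by $w = \#\,C_0\,\#\,C_1\,\#\cdots\#\,C_k\,\#$. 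Call such a $w$ a \emph{valid halting run} of $M$ if $C_0$ is the initial configuration, $q_k$ is the halting state, and each $C_{i+1}$ is the (unique) $M$-successor of $C_i$. Then $M$ halts iff some $w \in \Sigma^\ast$ is a valid halting run.

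The heart of the argument is an effective construction, from $M$, of an OCA $\A_M$ with $\cL(\A_M) = \Sigma^\ast \setminus \{\,w : w \text{ is a valid halting run of } M\,\}$; hence $\cL(\A_M) = \Sigma^\ast$ iff $M$ does not halt. The automaton $\A_M$ nondeterministically guesses one \emph{defect} of its input $w$ and checks it. Either (i) $w$ does not have the prescribed regular shape, or $C_0$ is not initial, or $q_k$ is not the halting state — each such defect is checked by a finite automaton with the counter idle; or (ii) there is an index $i$ with $C_{i+1}$ not equal to the $M$-successor of $C_i$. For case (ii), $\A_M$ advances to the $i$-th block with the counter idle — it can observe whether a unary block is empty, hence resolve the zero‑tests that determine which instruction $M$ applies at $q_i$ — and then refutes \emph{one} of three subconditions: the new state is wrong (finite‑state), or the update of the first counter is wrong, or the update of the second counter is wrong. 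Since a single Minsky step changes at most one of the two unbounded counters by exactly one, refuting a counter update reduces to checking that two consecutive unary blocks $\mathtt{1}^{n}$ and $\mathtt{1}^{n'}$ — separated by a bounded amount of material that is skipped with the counter idle — fail the expected relation $n' \in \{n-1,n,n+1\}$: $\A_M$ increments while reading $\mathtt{1}^{n}$, decrements while reading $\mathtt{1}^{n'}$, and a bounded amount of control around the moment the counter first reaches $0$ decides whether $n'-n$ has the expected value. It then consumes the remainder of $w$ with the counter idle and accepts; the possibly nonzero final counter value is harmless, since reading $\Sigma$-letters imposes no counter constraint. Threshold $\thr = 1$ (zero/non‑zero tests only) already suffices.

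The hard part is ensuring that a \emph{single} counter is enough for case (ii). This is exactly why the reduction is set up from a two‑counter machine rather than a Turing machine: a Turing step can modify a tape word of unbounded length, and locating a guessed position simultaneously in two consecutive configurations would seem to need two independent counts, whereas for a Minsky machine only one unbounded quantity changes per step, so a single up‑then‑down count handles the one comparison required. The remaining points are purely finite‑state bookkeeping: the off‑by‑one and non‑negativity issues around the counter reaching $0$, the case distinction over the instruction types of $M$ (in particular zero‑tests), and the degenerate situation in which $q_i$ is already the halting state while $w$ still continues (then $C_i$ has no successor, so $w$ is again not a valid halting run and $\A_M$ accepts).
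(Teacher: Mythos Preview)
The paper does not prove this theorem at all: it is stated with citations to \cite{Greibach69,Ibarra79} and no argument is given, so there is nothing to compare your approach against. Your plan is the standard reduction from (non-)halting of two-counter Minsky machines via ``accept all non-computations'', and it is correct in outline.

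One inaccuracy worth fixing: you write that the two unary blocks to be compared are ``separated by a bounded amount of material that is skipped with the counter idle''. With your encoding $C_i = q_i\,\$\,\mathtt{1}^{n_1}\,\$\,\mathtt{1}^{n_2}$, the first-counter blocks of $C_i$ and $C_{i+1}$ are separated by $\$\,\mathtt{1}^{n_2}\,\#\,q_{i+1}\,\$$, and the second-counter blocks by $\#\,q_{i+1}\,\$\,\mathtt{1}^{n_1'}\,\$$; both have \emph{unbounded} length. This does not break the construction---what you actually need, and what holds, is that this intervening material can be traversed with the counter idle, because the separators $\$$ and $\#$ let the finite control detect where the next relevant block begins. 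Just drop the word ``bounded'' and keep the ``counter idle'' justification.
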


In this paper, we show that universality and inclusion are decidable when considering counter observability. To do so, we make use of the theory of the visibly semantics. Concretely, we exploit determinizability as well as closure under complementation and intersection.
In fact, the following definition of determinism only makes sense for the visibly semantics, but we will see later that a subclass of deterministic OCAs gives a natural notion of determinism for the observability semantics as well.

\begin{definition}[deterministic OCA]\label{def:doca}
An OCA $\A = (Q,\Sigma,\init,F,\thr,\Delta)$ is called \emph{deterministic} (dOCA or $\thr$-dOCA) if, for all $(q,k,\lab) \in Q \times \setz{m} \times (\Sigma \cup \Op)$, there is exactly one $q' \in Q$ such that $(q,k,\lab,q') \in \Delta$. In that case, $\Delta$ represents a (total) function $\delta: Q \times \setz{m} \times (\Sigma \cup \Op) \to Q$ so that we rather consider $\A$ to be the tuple $(Q,\Sigma,\init,F,\thr,\delta)$.
\end{definition}

A powerset construction like for finite automata can be used to determinize OCAs wrt.\ the visibly semantics \cite{BaranyLS06}. That construction also preserves the two other semantics. However, Definition~\ref{def:doca} only guarantees uniqueness of runs for words from $(\Sigma \cup \Op)^\ast$. That is, for complementation, we have to restrict to the visibly semantics (cf.\ Lemma~\ref{lem:compl} below).

\begin{lemma}[cf.\ \cite{BaranyLS06}]\label{lem:det}
Let $\A$ be an $\thr$-OCA. There is an $\thr$-dOCA $\dA{\A}$ of exponential size such that $\eL(\dA{\A}) = \eL(\A)$ for all $\extr \in \{\cextr,\vextr,\oextr\}$.
\end{lemma}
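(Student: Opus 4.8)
The plan is to adapt the classical subset construction for finite automata to the one-counter setting, exploiting the fact that, under the visibly semantics, the counter movement along any word from $(\Sigma \cup \Op)^\ast$ is fully determined by the word itself. First I would observe that, given a word $w \in (\Sigma \cup \Op)^\ast$, the sequence of counter values along \emph{any} partial run of $\A$ reading $w$ is the same: it is obtained by starting at $0$ and applying each $\inc$/$\dec$ as dictated by $w$ (and the run simply fails to exist if the prefix values ever go negative, or if a letter-transition is taken from a state with no matching outgoing transition for the induced counter test). Consequently, the only genuine nondeterminism is in the choice of control state, exactly as for finite automata, and a powerset construction on $Q$ suffices.

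Concretely, I would define $\dA{\A} = (2^Q, \Sigma, \{\init\}, F', \thr, \delta)$ where, for a macro-state $S \subseteq Q$, a test value $k \in \setz{\thr}$, and $\lab \in \Sigma \cup \Op$, we set $\delta(S,k,\lab) \df \{q' \in Q \mid (q,k,\lab,q') \in \Delta \text{ for some } q \in S\}$. The set of final macro-states is $F' \df \{S \subseteq Q \mid S \cap F \neq \emptyset\}$. This is manifestly a total function, so $\dA{\A}$ is a $\thr$-dOCA by Definition~\ref{def:doca}, and its size is $|2^Q| + |\Sigma| + \thr + |\delta|$, which is exponential in the size of $\A$.

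The heart of the argument is the invariant: for every $w \in (\Sigma \cup \Op)^\ast$ there is a value $\cv_w \in \N$ (or ``undefined'') such that $\dA{\A}$ has a unique partial run on $w$ from $(\{\init\},0)$, this run ends in a configuration $(S_w, \cv_w)$, and $S_w = \{q \in Q \mid \A \text{ has a partial run on } w \text{ from } (\init,0) \text{ ending in } (q,\cv_w)\}$ — moreover $S_w = \emptyset$ precisely when $\A$ has no such run at all, and in that case the value component is irrelevant. I would prove this by induction on $|w|$, the key point being that because the counter value $\cv_w$ is identical across all partial runs of $\A$ on $w$, the counter test $\min\{\cv_w,\thr\}$ that gates the next transition is the same for every $q \in S_w$, so the powerset step correctly collects exactly the reachable successor states while the counter of $\dA{\A}$ faithfully tracks $\cv_w$. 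Once the invariant is established, I would conclude as follows. Fix $\extr \in \{\cextr,\vextr,\oextr\}$. Every accepting run of $\A$ induces a trace $\tau_1 \ldots \tau_n$ whose underlying word $w \in (\Sigma \cup \Op)^\ast$ (replacing each $(a,x)$ by $a$) satisfies, by the invariant, that $\dA{\A}$ has an accepting run on $w$ visiting the same counter values at the same positions; reading off $\extr$ from this run of $\dA{\A}$ gives the same element of $\eL(\A)$. Conversely, every accepting run of $\dA{\A}$ ends in some $S_w$ with $S_w \cap F \neq \emptyset$, so by the invariant $\A$ has an accepting partial run on $w$ along the identical counter-value sequence, yielding the same $\extr$-image. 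Hence $\eL(\dA{\A}) = \eL(\A)$ for all three semantics simultaneously.

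The main obstacle, and the step I would be most careful about, is making precise the claim that the counter-value sequence is a function of $w$ alone and dovetailing this with the possibility that a run may simply not exist (either because a prefix forces the counter negative, or because some letter-transition is unavailable from the current state). One must check that ``$S_w = \emptyset$'' in $\dA{\A}$ correctly captures ``no partial run of $\A$ on $w$'': if the counter would go negative, neither $\A$ nor $\dA{\A}$ can take the offending $\dec$, so both halt; if all counter moves are legal but at some point no state in $S$ has an outgoing $\lab$-transition for the current test, then $\delta$ sends $S$ to $\emptyset$ and, by totality of $\delta$, $\dA{\A}$ keeps running on $\emptyset$ forever, which is harmless since $\emptyset \notin F'$. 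Getting these boundary cases right — and noting that $\dA{\A}$, being a $\thr$-dOCA, genuinely has a run on \emph{every} word of $(\Sigma \cup \Op)^\ast$ whereas $\A$ need not — is the one place where the argument is more delicate than the plain finite-automaton subset construction.
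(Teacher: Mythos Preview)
Your proposal is correct and follows exactly the approach the paper indicates: the paper does not actually prove Lemma~\ref{lem:det} but merely remarks that ``a powerset construction like for finite automata can be used to determinize OCAs wrt.\ the visibly semantics'' and that ``that construction also preserves the two other semantics,'' citing \cite{BaranyLS06}. Your sketch spells this out in full, with the right invariant (the counter-value sequence along any run is determined by the underlying word in $(\Sigma \cup \Op)^\ast$, so only the control state is genuinely nondeterministic).

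One small slip worth fixing: in your last paragraph you write that $\dA{\A}$, being an $\thr$-dOCA, ``genuinely has a run on \emph{every} word of $(\Sigma \cup \Op)^\ast$.'' That is not quite right---totality of $\delta$ does not help when the counter is $0$ and the next symbol is $\dec$; in that case $\dA{\A}$ halts just as $\A$ does (as you yourself note two sentences earlier). The correct statement is that $\dA{\A}$ has a run on every word in $\mathsf{WF}_\Sigma$. This does not affect your argument, since on non-well-formed words neither automaton has a run and hence neither contributes to any of the three languages.
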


A (visibly) dOCA can be easily complemented wrt.\ the set of \emph{well-formed words} $\mathsf{WF}_\Sigma \df \{w \in (\Sigma \cup \Op)^\ast \mid$ no prefix of $w$ contains more $\dec$'s than $\inc$'s$\}$. In fact, for all OCAs $\A$ with alphabet $\Sigma$, we have $\vL(\A) \subseteq \mathsf{WF}_\Sigma$.

\begin{lemma}\label{lem:compl}
Let $\A=(Q,\Sigma,\init,F,\thr,\delta)$ be a dOCA and define $\cA{\A}$ as the dOCA $(Q,\Sigma,\init,Q \setminus F,\thr,\delta)$. Then, $\vL(\cA{\A}) = \mathsf{WF}_\Sigma \setminus \vL(\A)$.
\end{lemma}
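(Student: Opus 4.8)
The plan is to exploit the fact that $\A$ is deterministic on all of $(\Sigma\cup\Op)^\ast$, so that a word $w\in\mathsf{WF}_\Sigma$ induces a unique maximal partial run starting from the initial configuration. The only subtlety is that, even though $w$ is well-formed (no prefix has more $\dec$'s than $\inc$'s, hence every counter value along the run is nonnegative), the run may still \emph{get stuck}: the transition function $\delta$ is total in the state/test/letter argument, but reading a letter $\lab$ requires the test $\min\{\cv,\thr\}$ to match what the transition stipulates. Since $\delta$ fixes, for each $(q,k,\lab)$, exactly one successor state, and since at any configuration $(q,\cv)$ the value $k=\min\{\cv,\thr\}$ is determined, there is in fact always exactly one applicable transition for each letter $\lab$ we wish to read. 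So in a (visibly) dOCA, \emph{every} well-formed word $w$ has a unique run from $(\init,0)$, and it never gets stuck: this is the key observation.

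First I would make that observation precise: by induction on $|w|$, show that for every $w\in\mathsf{WF}_\Sigma$ there is exactly one partial run $\rho_w=(\init,0)\srelpp{\A}{v_1}\ldots\srelpp{\A}{v_n}(q_n,\cv_n)$ with $\vextr(v_1\ldots v_n)=w$. The inductive step: given $wa\in\mathsf{WF}_\Sigma$ with $a\in\Sigma\cup\Op$, run $\rho_w$ ends in a unique $(q_n,\cv_n)$; if $a=\inc$ take the unique transition $(q_n,\min\{\cv_n,\thr\},\inc,q')$; if $a=\dec$ then well-formedness of $wa$ gives $\cv_n\ge 1$, so $\cv_n-1\ge 0$ and again there is a unique transition; if $a\in\Sigma$ likewise. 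Uniqueness at each step is exactly Definition~\ref{def:doca}. Note $\cA{\A}$ has the same $Q$, $\init$, $\thr$, and $\delta$ as $\A$, so $\rho_w$ is literally the same run in both automata — only the final-state set differs.

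It then follows that for any $w\in\mathsf{WF}_\Sigma$: $w\in\vL(\A)$ iff the unique run $\rho_w$ ends in a state of $F$, and $w\in\vL(\cA{\A})$ iff $\rho_w$ ends in a state of $Q\setminus F$. Since exactly one of these holds, $w\in\vL(\cA{\A})\iff w\notin\vL(\A)$, i.e.\ $\vL(\cA{\A})\cap\mathsf{WF}_\Sigma=\mathsf{WF}_\Sigma\setminus\vL(\A)$. To finish I would invoke the general fact noted just before the lemma, $\vL(\B)\subseteq\mathsf{WF}_\Sigma$ for every OCA $\B$ (no run can ever drive the counter below $0$, so every trace is well-formed), applied to $\B=\cA{\A}$: this upgrades the intersection to an equality, $\vL(\cA{\A})=\mathsf{WF}_\Sigma\setminus\vL(\A)$.

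The only place that needs care — the ``main obstacle,'' though it is mild — is the stuck-run worry: one must be sure that totality of $\delta$ together with well-formedness of $w$ really does prevent the run from blocking, in particular for $\dec$-letters at counter value forced to stay $\ge 0$, and for $\Sigma$- and $\inc$-letters where the test value is whatever it happens to be but $\delta$ is defined for all test values in $\setz{m}$. Once that is nailed down, the rest is the usual finite-automaton complementation argument transported verbatim to the visibly setting.
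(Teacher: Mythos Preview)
Your argument is correct and is exactly the standard one; the paper in fact states this lemma without proof, treating it as a routine consequence of determinism plus the observation $\vL(\B)\subseteq\mathsf{WF}_\Sigma$. Your careful handling of the ``stuck-run'' issue (totality of $\delta$ on all of $\setz{\thr}$ together with well-formedness guaranteeing $\cv\ge 1$ before any $\dec$) is the only nontrivial point, and you have it right.
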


Finally, visibility of counter operations allows us to simulate two OCAs in sync by a straightforward product construction (cf.\ Appendix~\ref{app:prod}):

\begin{lemma}[cf.\ \cite{Alur2009}]\label{lem:prod}
Let $\A_1$ be an $\thr_1$-OCA and $\A_2$ be an $\thr_2$-OCA over the same alphabet. There is a $\max\{m_1,m_2\}$-OCA $\A_1 \times \A_2$ of polynomial size such that $\vL(\A_1 \times \A_2) = \vL(\A_1) \cap \vL(\A_2)$. Moreover, if $\A_1$ and $\A_2$ are deterministic, then so is $\A_1 \times \A_2$.
\end{lemma}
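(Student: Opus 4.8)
The plan is to build the obvious synchronized product of $\A_1$ and $\A_2$ using a \emph{single shared counter}. This is legitimate because, under the visibly semantics, every symbol of an input word is produced by exactly one transition ($\vextr$ maps each trace letter to one symbol, and never to $\epsilon$), letters of $\Sigma$ are produced verbatim, and the symbols $\inc,\dec$ prescribe the counter operation. Hence along \emph{any} run of $\A_i$ whose $\vextr$-image is $w$, the counter value after $j$ steps equals the number of $\inc$'s minus the number of $\dec$'s among the first $j$ symbols of $w$, a quantity that depends on $w$ alone. So two runs of $\A_1$ and of $\A_2$ with the same $\vextr$-image agree on their counter values position by position, and they can be merged.

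Concretely, writing $\A_i = (Q_i,\Sigma,\init_i,F_i,\thr_i,\Delta_i)$ and $\thr \df \max\{\thr_1,\thr_2\}$, I would define $\A_1 \times \A_2 \df (Q_1 \times Q_2,\ \Sigma,\ (\init_1,\init_2),\ F_1 \times F_2,\ \thr,\ \Delta)$ where, for $k \in \setz{\thr}$ and $\lab \in \Sigma \cup \Op$, we put $((q_1,q_2),k,\lab,(q_1',q_2')) \in \Delta$ iff $(q_1,\min\{k,\thr_1\},\lab,q_1') \in \Delta_1$ and $(q_2,\min\{k,\thr_2\},\lab,q_2') \in \Delta_2$. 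The size is polynomial in $|\A_1| + |\A_2|$. The translation of test values is sound: if the shared counter holds $\cv$, a transition of $\A_1 \times \A_2$ fires for $k = \min\{\cv,\thr\}$, and since $\thr_i \le \thr$ one has $\min\{k,\thr_i\} = \min\{\cv,\thr_i\}$, exactly the test value seen by $\A_i$. If $\A_1,\A_2$ are deterministic with transition functions $\delta_1,\delta_2$, then $((q_1,q_2),k,\lab) \mapsto (\delta_1(q_1,\min\{k,\thr_1\},\lab),\,\delta_2(q_2,\min\{k,\thr_2\},\lab))$ is a total function, so $\A_1 \times \A_2$ is deterministic as well.

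The correctness statement $\vL(\A_1 \times \A_2) = \vL(\A_1) \cap \vL(\A_2)$ then follows from a routine induction on run length: there is a partial run of $\A_1 \times \A_2$ from $((\init_1,\init_2),0)$ to $((q_1,q_2),\cv)$ with $\vextr$-image $w$ iff there are partial runs of $\A_1$ from $(\init_1,0)$ to $(q_1,\cv)$ and of $\A_2$ from $(\init_2,0)$ to $(q_2,\cv)$, both with $\vextr$-image $w$; the inductive step is immediate from the definition of $\Delta$, the test-soundness observation, and the fact noted above that the counter value after a step is determined by the word read, so the three runs stay in lockstep. Instantiating $(q_1,q_2) \in F_1 \times F_2$ gives the claim. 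I do not expect a genuine obstacle here: the only points requiring care are the threshold bookkeeping $\min\{\min\{\cv,\thr\},\thr_i\} = \min\{\cv,\thr_i\}$ and the observation that visibility fixes the counter value at every position — which is precisely what makes a shared-counter product sound.
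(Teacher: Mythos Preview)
Your proposal is correct and matches the paper's proof essentially verbatim: the same product construction with state space $Q_1\times Q_2$, threshold $\max\{m_1,m_2\}$, and transitions $((q_1,q_2),k,\sigma,(q_1',q_2'))\in\Delta$ iff $(q_i,\min\{k,m_i\},\sigma,q_i')\in\Delta_i$ for $i=1,2$, with the key identity $\min\{\min\{x,m\},m_i\}=\min\{x,m_i\}$ justifying soundness of the tests. The paper phrases correctness as a single-step bi-implication between $\Longrightarrow_{\A_1\times\A_2}$ and the conjunction of $\Longrightarrow_{\A_1}$ and $\Longrightarrow_{\A_2}$ (which immediately lifts to runs), while you phrase it as an induction on run length; these are the same argument.
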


\newcommand{\modform}[2]{#1 + #2\N}
\newcommand{\Qsink}{Q_\mathsf{sink}}
\newcommand{\fmap}{f}

\begin{figure}[t]
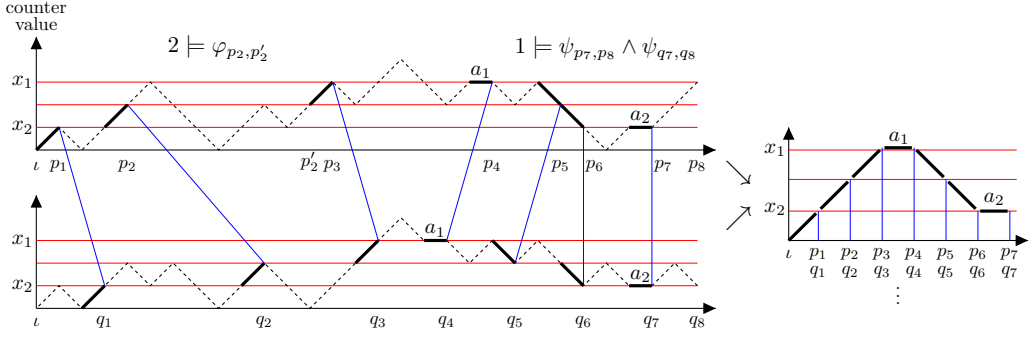

\begin{center}
\scalebox{1}{
\!\!\!\!\!\!\!
\begin{gpicture}
\unitlength=0.3mm
\gasset{Nframe=n,Nw=4.8,Nh=4.8,Nmr=8,ilength=4,flength=4,AHlength=5.5,AHLength=6,AHangle=25} 

\node(cv)(0,63){\scalebox{0.7}{counter}}
\node(cv)(0,56){\scalebox{0.7}{value}}

\drawline[AHnb=1](0,0)(0,50)
\drawline[AHnb=1](0,0)(298,0)
\drawline[AHnb=0,linewidth=0.1,linecolor=red](0,10)(290,10)
\drawline[AHnb=0,linewidth=0.1,linecolor=red](0,20)(290,20)
\drawline[AHnb=0,linewidth=0.1,linecolor=red](0,30)(290,30)

\drawline[AHnb=0,dash={1.5}0](0,0)(10,10)(20,0)(30,10)(40,20)(50,30)(60,20)(70,10)(80,0)(90,10)(100,20)(110,10)(120,20)(130,30)(140,20)(150,30)(160,40)(170,30)(180,20)(190,30)(200,30)(210,20)(220,30)(230,20)(240,10)(250,0)(260,10)(270,10)(280,20)(290,30)

\drawline[AHnb=0,linewidth=1.4](0,0)(10,10)
\drawline[AHnb=0,linewidth=1.4](30,10)(40,20)
\drawline[AHnb=0,linewidth=1.4](120,20)(130,30)
\drawline[AHnb=0,linewidth=1.4](190,30)(200,30)
\drawline[AHnb=0,linewidth=1.4](220,30)(230,20)
\drawline[AHnb=0,linewidth=1.4](230,20)(240,10)
\drawline[AHnb=0,linewidth=1.4](260,10)(270,10)

\node(a)(195,34.5){\scalebox{0.8}{$a_1$}}
\node(a)(265,14.7){\scalebox{0.8}{$a_2$}}

\node(x)(-6,30){\scalebox{0.8}{$x_1$}}
\node(x)(-6,10){\scalebox{0.8}{$x_2$}}

\node(q)(0,-7){\scalebox{0.7}{$\init$}}
\node(q)(10,-7){\scalebox{0.7}{$p_1$}}
\node(q)(40,-7){\scalebox{0.7}{$p_2$}}
\node(q)(120,-5){\scalebox{0.7}{$p_2'$}}
\node(q)(130,-7){\scalebox{0.7}{$p_3$}}
\node(q)(200,-7){\scalebox{0.7}{$p_4$}}
\node(q)(230,-7){\scalebox{0.7}{$p_5$}}
\node(q)(245,-7){\scalebox{0.7}{$p_6$}}
\node(q)(275,-7){\scalebox{0.7}{$p_7$}}
\node(q)(290,-7){\scalebox{0.7}{$p_8$}}

\node(f)(80,45){\scalebox{0.85}{$2 \models \phi_{p_2,p_2'}$}}

\node(f)(250,45){\scalebox{0.85}{$1 \models \psi_{p_7,p_8} \wedge \psi_{q_7,q_8}$}}

\put(0,-70){
\drawline[AHnb=1](0,0)(0,50)
\drawline[AHnb=1](0,0)(298,0)
\drawline[AHnb=0,linewidth=0.1,linecolor=red](0,10)(290,10)
\drawline[AHnb=0,linewidth=0.1,linecolor=red](0,20)(290,20)
\drawline[AHnb=0,linewidth=0.1,linecolor=red](0,30)(290,30)

\drawline[AHnb=0,dash={1.5}0](0,0)(10,10)(20,0)(30,10)(40,20)(50,10)(60,20)(70,10)(80,0)(90,10)(100,20)(110,10)(120,0)(130,10)(140,20)(150,30)(160,40)(170,30)(180,30)(190,20)(200,30)(210,20)(220,30)(230,20)(240,10)(250,20)(260,10)(270,10)(280,20)(290,10)

\drawline[AHnb=0,linewidth=1.4](20,0)(30,10)
\drawline[AHnb=0,linewidth=1.4](90,10)(100,20)
\drawline[AHnb=0,linewidth=1.4](140,20)(150,30)
\drawline[AHnb=0,linewidth=1.4](170,30)(180,30)
\drawline[AHnb=0,linewidth=1.4](200,30)(210,20)
\drawline[AHnb=0,linewidth=1.4](230,20)(240,10)
\drawline[AHnb=0,linewidth=1.4](260,10)(270,10)

\node(a)(175,34.5){\scalebox{0.8}{$a_1$}}
\node(a)(265,14.5){\scalebox{0.8}{$a_2$}}

\node(x)(-6,30){\scalebox{0.8}{$x_1$}}
\node(x)(-6,10){\scalebox{0.8}{$x_2$}}

\node(q)(0,-6){\scalebox{0.7}{$\init$}}
\node(q)(30,-6){\scalebox{0.7}{$q_1$}}
\node(q)(100,-6){\scalebox{0.7}{$q_2$}}
\node(q)(150,-6){\scalebox{0.7}{$q_3$}}
\node(q)(180,-6){\scalebox{0.7}{$q_4$}}
\node(q)(210,-6){\scalebox{0.7}{$q_5$}}
\node(q)(240,-6){\scalebox{0.7}{$q_6$}}
\node(q)(270,-6){\scalebox{0.7}{$q_7$}}
\node(q)(290,-6){\scalebox{0.7}{$q_8$}}
}

\put(330,-40){
\drawline[AHnb=1](0,0)(0,50)
\drawline[AHnb=1](0,0)(105,0)
\drawline[AHnb=0,linewidth=0.1,linecolor=red](0,13)(100,13)
\drawline[AHnb=0,linewidth=0.1,linecolor=red](0,27)(100,27)
\drawline[AHnb=0,linewidth=0.1,linecolor=red](0,40)(100,40)

\drawline[AHnb=0,linewidth=1.4](0,0)(12,12)
\drawline[AHnb=0,linewidth=1.4](14,14)(26,26)
\drawline[AHnb=0,linewidth=1.4](28,28)(40,40)
\drawline[AHnb=0,linewidth=1.4](42,41)(54,41)
\drawline[AHnb=0,linewidth=1.4](56,40)(68,28)
\drawline[AHnb=0,linewidth=1.4](70,26)(82,14)
\drawline[AHnb=0,linewidth=1.4](84,13)(96,13)

\node(a)(48.5,45.5){\scalebox{0.8}{$a_1$}}
\node(a)(90.2,18.3){\scalebox{0.8}{$a_2$}}

\node(x)(-6,40){\scalebox{0.8}{$x_1$}}
\node(x)(-6,14){\scalebox{0.8}{$x_2$}}

\node(vdots)(48,-22){\scalebox{0.7}{$\vdots$}}

\node(q)(0,-6){\scalebox{0.7}{$\init$}}
\node(q)(13,-6){\scalebox{0.7}{$p_1$}}
\node(q)(27,-6){\scalebox{0.7}{$p_2$}}
\node(q)(41,-6){\scalebox{0.7}{$p_3$}}
\node(q)(55,-6){\scalebox{0.7}{$p_4$}}
\node(q)(69,-6){\scalebox{0.7}{$p_5$}}
\node(q)(83,-6){\scalebox{0.7}{$p_6$}}
\node(q)(97,-6){\scalebox{0.7}{$p_7$}}

\node(q)(13,-14){\scalebox{0.7}{$q_1$}}
\node(q)(27,-14){\scalebox{0.7}{$q_2$}}
\node(q)(41,-14){\scalebox{0.7}{$q_3$}}
\node(q)(55,-14){\scalebox{0.7}{$q_4$}}
\node(q)(69,-14){\scalebox{0.7}{$q_5$}}
\node(q)(83,-14){\scalebox{0.7}{$q_6$}}
\node(q)(97,-14){\scalebox{0.7}{$q_7$}}

\node(arrow)(-22,30){\scalebox{1}{$\searrow$}}
\node(arrow)(-22,10){\scalebox{1}{$\nearrow$}}

\drawline[AHnb=0,linewidth=0.1,linecolor=blue](13,13)(13,0)
\drawline[AHnb=0,linewidth=0.1,linecolor=blue](27,27)(27,0)
\drawline[AHnb=0,linewidth=0.1,linecolor=blue](41,41)(41,0)
\drawline[AHnb=0,linewidth=0.1,linecolor=blue](55,41)(55,0)
\drawline[AHnb=0,linewidth=0.1,linecolor=blue](69,27)(69,0)
\drawline[AHnb=0,linewidth=0.1,linecolor=blue](83,13)(83,0)
\drawline[AHnb=0,linewidth=0.1,linecolor=blue](97,13)(97,0)
}

\drawline[AHnb=0,linewidth=0.1,linecolor=blue](10,10)(30,-60)
\drawline[AHnb=0,linewidth=0.1,linecolor=blue](40,20)(100,-50)
\drawline[AHnb=0,linewidth=0.1,linecolor=blue](130,30)(150,-40)
\drawline[AHnb=0,linewidth=0.1,linecolor=blue](200,30)(180,-40)
\drawline[AHnb=0,linewidth=0.1,linecolor=blue](230,20)(210,-50)
\drawline[AHnb=0,linewidth=0.1,linecolor=blue](240,10)(240,-60)
\drawline[AHnb=0,linewidth=0.1,linecolor=blue](270,10)(270,-60)

\end{gpicture}
}
\end{center}
\caption{Decompositions of two runs on $(a_1,3)(a_2,1)$, and corresponding runs in normal form\label{fig:stairs}}
\end{figure}

\section{Determinizing and Complementing \OCOs}\label{sec:det}

In this section, we will show that, under the \emph{observability semantics}, OCAs are effectively closed under all boolean operations. The main ingredient of the proof is a determinization procedure, which we first describe informally.

Let $\A = (Q,\Sigma,\init,F,\thr,\Delta)$ be the \OCO to be determinized (wrt.\ the observability semantics). Moreover, let $w =(a_1,x_1) \ldots (a_n,x_n) \in (\Sigma \times \N)^\ast$. Every run $\rho$ of $\A$ such that $\oextr(\trace(\rho)) = w$ has to have reached the counter value $x_1$ by the time it reads the first letter $a_1$. In particular, it has to perform at least $x_1$ counter increments. In other words, we can identify $x_1$ transitions that lift the counter value from $0$ to $1$, from $1$ to $2$, and, finally, from $x_1-1$ to $x_1$, respectively, and that are separated by partial runs that ``oscillate'' around the current counter value but, at the end, return to their original level. Similarly, before reading the second letter $a_2$, $\A$ will perform $|x_2 - x_1|$-many \emph{identical} counter operations to reach $x_2$, again separated by some oscillation phases, and so on. This is illustrated on the left hand side of Figure~\ref{fig:stairs} for two runs on the word $(a_1,3)(a_2,1)$.

We will transform $\A$ into another automaton that decomposes a run into oscillations and increment/decrement/letter transitions, but, in fact, abstracts away oscillations. Thus, the automaton starts in an increasing mode and goes straight to the value $x_1$. Once it reads letter $a_1$, it may go into an increasing or decreasing mode, and so on. Observe that a run $\rho$ of this new automaton is in a sort of normal form as illustrated on the right hand side of Figure~\ref{fig:stairs}. The crux is that $\vextr(\trace(\rho))$ is a \emph{unique} encoding of $w$: Of course, it determines the counter values output when a letter is read; and it is unique, since it continues incrementing (decrementing, respectively) until a letter is read.
This normalization and encoding finally allows us to apply known results on visibly one-counter automata for determinization and complementation.

There is a little issue here, since the possibility of performing an oscillation leading from $p_2$ to $p_2'$ (cf.\ left hand side of Figure~\ref{fig:stairs}) depends on the current counter value. However, it was shown in \cite{GMWT-lics09} that the set of counter values allowing for such a shortcut can be described as a boolean combination of arithmetic progressions that can be computed in polynomial time. We will, therefore, work with an extended version of \OCOs that includes arithmetic-progression tests (but is no more expressive, as we show afterwards).

The outline of this section is as follows: We present extended OCAs in Section~\ref{sec:ecav} and the link between the observability and the visibly semantics in Section~\ref{sec:obs-vis}. In Section~\ref{sec:univ}, we solve the universality and inclusion problem for OCAs wrt.\ the observability semantics.

\subsection{Extended One-Counter Automata}\label{sec:ecav}

While OCAs can only test a counter value up to some threshold, \emph{extended \OCOs} have access to boolean combinations of modulo constraints.
The set $\Guardsmod$ is given by the grammar
$\phi ::= \modform{\consta}{\constb} \mid \neg\phi \mid \phi \wedge \phi \mid \phi \vee \phi$
where $\consta,\constb \in \N$. We call $\modform{\consta}{\constb}$ an \emph{arithmetic-progression formula} and assume that $c$ and $d$ are encoded in unary. 
For $\cv \in \N$ (a counter value), we define $\cv \models \modform{\consta}{\constb}$ if $\cv = \consta + \constb \cdot i$ for some $i \in \N$.
Thus, we may use
$\true$ as an abbreviation for $\modform{0}{1}$.
The other formulas are interpreted as expected. Moreover, given $\phi \in \Guardsmod$, we set $\sem{\phi} \df \{\cv \in \N \mid \cv \models \phi\}$.

Before we introduce extended OCAs, we will state a lemma saying that the ``possibility'' of a shortcut in terms of an oscillation (see above) is definable in $\Guardsmod$. Let $\A=(Q,\Sigma,\init,F,\thr,\Delta)$ be an \OCO and $p,q \in Q$. By $\smash{\URel{p}{q}{\A}}$, we denote the set of natural numbers $\cv \in \N$ such that
$(p,\cv) \mathrel{(\srelpp{\A}{\sinc\;} \cup \srelpp{\A}{\sdec\;})^\ast} (q,\cv)$,
i.e., there is a partial run from $(p,\cv)$ to $(q,\cv)$ that performs only counter operations. Moreover, we define $Y_{p,q}^\A$ to be the set of natural numbers $\cv \in \N$ such that
$(p,\cv) \mathrel{(\srelpp{\A}{\sinc\;} \cup \srelpp{\A}{\sdec\;})^\ast} (q,\cv')$
for some $\cv' \in \N$. Note that $\smash{\URel{p}{q}{\A}} \subseteq Y_{p,q}^\A$.
The following result is due to \cite[Lemmas~6--9]{GMWT-lics09}:

\begin{lemma}[\hspace{-0.05em}\cite{GMWT-lics09}]\label{lem:rpq}
Let $\A=(Q,\Sigma,\init,F,\thr,\Delta)$ be an \OCO and $p,q \in Q$. We can compute, in polynomial time, guards $\phi_{p,q},\psi_{p,q} \in \Guardsmod$ such that $\sem{\phi_{p,q}} = \URel{p}{q}{\A}$ and $\sem{\psi_{p,q}} = Y_{p,q}^{\A}$. In particular, the constants in $\phi_{p,q}$ and $\psi_{p,q}$ are all polynomially bounded.
\end{lemma}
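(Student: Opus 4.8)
The plan is to follow the strategy of \cite{GMWT-lics09}, reducing the computation of $\URel{p}{q}{\A}$ and $Y_{p,q}^\A$ to a reachability analysis in a counter system whose state space is $Q$ augmented by a ``mode'' bit recording whether we are currently above, below, or at the threshold~$\thr$. Since only counter operations are allowed along the relevant partial runs, we may ignore all $\Sigma$-transitions and work with the pure counter system $(Q,\Delta')$ where $\Delta' \subseteq Q \times \setz{\thr} \times \Op \times Q$. The first observation is that above the threshold the automaton behaves like an ordinary integer counter system with \emph{no} tests at all: for counter values $\cv \ge \thr$, every transition is guarded by the constant $\thr$, so the set of enabled transitions from a state $q$ does not depend on $\cv$. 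Hence I would split the analysis into a \emph{bounded part} (counter values in $\setz{\thr-1}$, only finitely many configurations $Q \times \setz{\thr-1}$, so reachability there is decidable by brute force and the resulting sets are finite hence trivially in $\Guardsmod$), and an \emph{unbounded part} (counter values $\ge \thr$), which is the interesting case.

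For the unbounded part, I would consider the ``flat'' one-counter system over $\Z$ obtained by forgetting tests entirely, and analyze return paths. The key combinatorial fact is that if there is a partial run from $(p,\cv)$ to $(q,\cv)$ using only increments and decrements, with all intermediate counter values $\ge \thr$, then there is one of length at most polynomial in $|Q|$: indeed, the displacement-zero condition together with a pumping argument on the finite state set lets one cut out any sub-run of the form $(r,\cv_1) \rightsquigarrow^+ (r,\cv_1)$ that does not touch the threshold, so that the minimal witness has bounded ``height'' $O(|Q|)$ above $\thr$ and bounded length $O(|Q|^2)$. From this, the set of $\cv \ge \thr$ for which such a return run exists is seen to be \emph{eventually periodic} with period and threshold bounded by $O(|Q|^2)$: if $\cv$ and $\cv'$ are both large enough, the only difference between the two is a vertical shift of the whole run, so membership depends only on $\cv \bmod \pi$ for a suitable period $\pi$. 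An eventually periodic subset of $\N$ with polynomially bounded threshold and period is exactly a boolean combination of arithmetic-progression formulas $\modform{\consta}{\constb}$ with polynomially bounded constants, encodable in $\Guardsmod$; combining this with the finite bounded part (a finite, hence trivially $\Guardsmod$-definable, set) by a disjunction yields $\phi_{p,q}$ with $\sem{\phi_{p,q}} = \URel{p}{q}{\A}$. The same argument, dropping the requirement that the final counter value equal $\cv$ and instead allowing any $\cv' \in \N$, gives $\psi_{p,q}$ with $\sem{\psi_{p,q}} = Y_{p,q}^\A$; here one additionally has to observe that reachability of \emph{some} configuration with state $q$ via pure counter moves, starting from $(p,\cv)$, is again determined by whether $\cv$ clears certain polynomially bounded thresholds and by $\cv \bmod \pi$.

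Finally, for the complexity claim I would argue that all the relevant bounds — the length of minimal return runs, the height above the threshold, and hence the period $\pi$ and the threshold of eventual periodicity — are polynomial in $|Q| + \thr$, so that the sets $\URel{p}{q}{\A}$ and $Y_{p,q}^\A$ can be computed by a polynomial-time enumeration over the bounded configuration space together with a fixed-point computation of which residues modulo $\pi$ are reachable; since $\pi$ and the constants are polynomially bounded and unary-encoded, the resulting guards $\phi_{p,q},\psi_{p,q}$ have polynomial size and polynomially bounded constants as claimed.

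The main obstacle I anticipate is the precise pumping/cutting argument establishing that minimal return runs in the unbounded regime have polynomially bounded length and stay within a polynomial band above the threshold, and the careful bookkeeping needed to show that the ``boundary effects'' near $\cv = \thr$ do not spoil eventual periodicity. This is exactly the content of \cite[Lemmas~6--9]{GMWT-lics09}, so in the paper I would cite that development rather than reproduce it in full, and only sketch the reduction above.
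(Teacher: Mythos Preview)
The paper does not prove this lemma at all: it is stated with the attribution ``\cite{GMWT-lics09}'' and the text preceding it says ``The following result is due to \cite[Lemmas~6--9]{GMWT-lics09}'', with no further argument. Your bottom line---cite that development rather than reproduce it---is therefore exactly what the paper does.

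That said, the sketch you offer as motivation has two soft spots you should be aware of if you ever need to flesh it out. First, cutting out sub-runs of the form $(r,\cv_1)\rightsquigarrow^+(r,\cv_1)$ does not by itself bound the height or length of a minimal witness: configurations live in $Q\times\N$, so pigeonhole on \emph{configurations} does not fire; the actual argument pigeonholes on \emph{states} along an ascent (or descent) to extract a positive (resp.\ negative) cycle and then removes matched pairs of such cycles to lower the peak. Second, for $\cv\ge\thr$ you cannot simply ``forget tests entirely'': a return run from $(p,\cv)$ to $(q,\cv)$ may well dip below $\thr$, where the enabled transitions change, so the unbounded regime still interacts with the bounded one. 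The eventual-periodicity conclusion is correct, but it comes from a more careful case analysis (essentially what \cite[Lemmas~6--9]{GMWT-lics09} carry out) rather than from the flat $\Z$-system alone.
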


\begin{definition}[extended OCA]\label{def:eoco}
An \emph{extended OCA (eOCA)} is a tuple $\A=(Q,\Sigma,\init,\fmap,\Delta)$ where $Q,\Sigma,\init$ are like in an OCA, $\fmap: Q \to \Guardsmod$ is the \emph{acceptance condition}, and $\trans$ is the \emph{transition relation}: a \emph{finite} subset of $Q \times \Guardsmod \times (\Sigma \mathrel{\cup} \Op) \times Q$
\end{definition}

Runs and the languages $L_\extr(\A)$, with $\extr \in \{\cextr,\vextr,\oextr\}$, of an \ECAV $\A=(Q,\Sigma,\init,\fmap,\Delta)$ are defined very similarly to OCAs.
In fact, there are only two (slight) changes:
\begin{enumerate}\itemsep=0.5ex
\item The definition of ${\srelp{\A}} \subseteq \Confp{\A} \times ((\Sigma \times \N) \cup \Op) \times \Confp{\A}$ is now as follows: For $(q,\cv),(q',\cv') \in \Confp{\A}$ and $\obsletter \in (\Sigma \times \N) \cup \Op$, we have $(q,\cv) \srelpp{\A}{\obsletter\;} (q',\cv')$ if there is $\phi \in \Guardsmod$ such that $\cv \models \phi$ and one of the following holds:\vspace{0.5ex}
\begin{itemize}
\item $\obsletter = \inc$ and $\cv' = \cv + 1$ and $(q,\phi,\inc,q') \in \Delta$, or

\item $\obsletter = \dec$ and $\cv' = \cv - 1$ and $(q,\phi,\dec,q') \in \Delta$, or

\item $\cv' = \cv$ and there is $a \in \Sigma$ such that $\obsletter = (a,x)$ and $(q,\phi,a,q') \in \Delta$.
\end{itemize}
\item A run is now \emph{accepting} if its last configuration $(q,\cv)$ is such that $\cv \models \fmap(q)$.
\end{enumerate}

Apart from these modifications, the languages $\cL(\A)$, $\vL(\A)$, and $\oL(\A)$ are defined in exactly the same way as for OCAs.

\subsection{From OCAs with Counter Observability to Visibly OCAs}
\label{sec:obs-vis}

\newcommand{\enc}[1]{\mathsf{enc}(#1)}
\newcommand{\Encs}[1]{\mathsf{Enc}_{#1}}
\newcommand{\encmap}{\mathsf{enc}}
\newcommand{\eval}{\lambda}

To establish a link between the observability and the visibly semantics, we will encode a word $w = (a_1,x_1)(a_2,x_2) \ldots (a_n,x_n) \in (\Sigma \times \N)^\ast$ as a word $\enc{w} \in (\Sigma \cup \Op)^\ast$ as follows:
\[\enc{w} \df \inc^{x_1} a_1\, \textup{sign}(x_2-x_1)^{|x_2-x_1|}\, a_2 \,\ldots\, \textup{sign}(x_n-x_{n-1})^{|x_n - x_{n-1}|}\, a_n\]
where, for an integer $z \in \mathds{Z}$,
we let $\textup{sign}(z) = \inc$ if $z \ge 0$, and $\textup{sign}(z) = \dec$ if $z < 0$.
For example, $\enc{\epsilon} = \epsilon$ and $\enc{(a,5)(b,2)(c,4)} = \inc^5 a \,\dec^3 b \,\inc^2 c$.
The mapping $\encmap$ is extended to sets $L \subseteq (\Sigma \times \N)^\ast$ by $\enc{L} = \{\enc{w} \mid w \in L\}$.
Let $\Encs{\Sigma} \df \enc{(\Sigma \times \N)^\ast}$ denote the set of \emph{valid encodings}.
Note that $\encmap$ is a \emph{bijection} between $(\Sigma \times \N)^\ast$ and $\Encs{\Sigma}$ and that the latter is the set of words of the form $w = u_1 a_1 u_2 a_2 \ldots u_n a_n$ where $a_i \in \Sigma$ and $u_i \in \{\inc\}^\ast \cup \{\dec\}^\ast$ for all $i \in \{1,\ldots,n\}$, and such that no prefix of $w$ contains more $\dec$'s than $\inc$'s.

Obviously, there is a small dOCA whose visibly semantics is $\Encs{\Sigma}$ (see Appendix~\ref{app:benc}). It will be needed later for complementation of OCAs wrt.\ the observability semantics.

\begin{lemma}\label{lem:benc}
There is a $0$-dOCA $\Benc$ with only four states such that $\vL(\Benc) = \Encs{\Sigma}$.
\end{lemma}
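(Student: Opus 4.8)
The plan is to construct $\Benc$ directly as a $0$-dOCA with a fixed, small set of states that tracks, in a finite-memory way, which part of a valid encoding we are currently reading. Recall that a word $w \in (\Sigma \cup \Op)^\ast$ lies in $\Encs{\Sigma}$ exactly when it has the shape $u_1 a_1 u_2 a_2 \ldots u_n a_n$ with each $a_i \in \Sigma$, each $u_i$ a block that is \emph{homogeneous} (all $\inc$ or all $\dec$), and no prefix has more $\dec$'s than $\inc$'s. The homogeneity condition is local between two successive $\Sigma$-letters: once a block has started with $\dec$ it may not contain an $\inc$, and once it has started with $\inc$ it may not contain a $\dec$, until the next $\Sigma$-letter resets the situation. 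So three ``live'' states suffice to remember the status of the current block: one for ``block empty / just saw a $\Sigma$-letter'' (any $\Op$-symbol is allowed next), one for ``current block is in $\inc$-mode'', and one for ``current block is in $\dec$-mode''. A fourth state is a non-accepting sink into which we move as soon as a forbidden transition is attempted (an $\inc$ in $\dec$-mode, a $\dec$ in $\inc$-mode). All four states are accepting except the sink — more precisely, since $w = \epsilon$ and blocks ending in a $\Sigma$-letter are both valid, and a word ending with a pending nonempty homogeneous block is also a valid encoding (it is $\enc{w}$ where the last pair contributes that block and then there must be an $a_n$ — wait, no: every valid encoding ends in a $\Sigma$-letter), I set $F$ to be the ``block-empty'' state together with the two mode states? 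Let me reconsider: $\Encs{\Sigma}$ contains $\epsilon$ and every $u_1a_1\ldots u_na_n$; it does \emph{not} contain a word ending in $\Op$-symbols. Hence the accepting states are: the initial ``fresh'' state (handles $\epsilon$ and ``just read $a_i$'') only. But then the two mode-states are non-accepting, and the sink is non-accepting, giving four states with exactly one accepting — that matches ``four states.''

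Concretely I would define $\Benc = (Q,\Sigma,\init,F,0,\delta)$ with $Q = \{s, {\incmode}, {\decmode}, \bot\}$, $\init = s$, $F = \{s\}$, threshold $0$ (so the only counter test is the trivial test $k=0$, i.e.\ $\min\{\cv,0\}=0$ always, meaning transitions never depend on the counter value). The transition function is: from $s$, on any $a \in \Sigma$ go to $s$; on $\inc$ go to ${\incmode}$; on $\dec$ go to ${\decmode}$. From ${\incmode}$, on $\inc$ stay in ${\incmode}$; on $a \in \Sigma$ go to $s$; on $\dec$ go to $\bot$. Symmetrically from ${\decmode}$: on $\dec$ stay; on $a$ go to $s$; on $\inc$ go to $\bot$. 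From $\bot$, everything loops in $\bot$. Since $\delta$ is total (every $(q,0,\lab)$ has exactly one target), this is a genuine dOCA by Definition~\ref{def:doca}.

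The verification then splits into two inclusions. For $\vL(\Benc) \subseteq \Encs{\Sigma}$: an accepting run ends in $s$, so the word, read as a sequence of maximal blocks separated by $\Sigma$-letters, ends in a $\Sigma$-letter, and the automaton never entered $\bot$, which by construction forces every maximal $\Op$-block to be homogeneous; the well-formedness (no prefix has more $\dec$'s than $\inc$'s) is automatic because $\vL(\A) \subseteq \WF_\Sigma$ holds for every OCA — a $\dec$ can only be taken when the counter is positive. Conversely, given $w = \enc{v} \in \Encs{\Sigma}$, one reads off the unique run letter by letter: each homogeneous block $u_i$ drives the automaton through $s \to {\incmode}^\ast \to$ or $s \to {\decmode}^\ast \to$ and the following $a_i$ returns it to $s$; well-formedness of $\enc{v}$ guarantees that every $\dec$ is executable (the counter is positive), so the run exists and is accepting, ending in $s$. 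Hence $\vL(\Benc) = \Encs{\Sigma}$.

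The construction is so direct that there is essentially no ``hard part''; the only point that needs a moment of care is making sure the accepting-state set is chosen so that words ending in a nonempty $\Op$-block are correctly rejected (they are, since those runs end in ${\incmode}$ or ${\decmode} \notin F$) while $\epsilon$ is accepted (the empty run stays in $\init = s \in F$), and that the sink indeed absorbs exactly the violations of homogeneity. One might alternatively phrase this even more cleanly by noting that $\Encs{\Sigma}$ is a \emph{regular} language over $\Sigma \cup \Op$ recognized by the four-state DFA just described, and that any DFA over $\Sigma \cup \Op$ that never reads $\dec$ more often than $\inc$ on any accepted word — which is our case — can be read verbatim as a $0$-dOCA with the counter simply ignored by the transitions; this is the essence of why the visibly semantics of a counter-agnostic automaton coincides with its plain language intersected with $\WF_\Sigma$, and here the language already lies inside $\WF_\Sigma$.
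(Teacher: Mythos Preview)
Your construction is correct and essentially identical to the paper's: the paper uses states $\{\incmode,\nmode,\decmode,\bot\}$ with initial and unique final state $\nmode$ (your $s$), and exactly the same transition function. The justification you give for the two inclusions---homogeneity enforced by the sink, the terminal $\Sigma$-letter enforced by $F=\{s\}$, and well-formedness inherited from the general fact $\vL(\A)\subseteq\WF_\Sigma$---is also the paper's reasoning, so nothing further is needed (though you may want to tidy up the exploratory digressions in the write-up).
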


In fact, there is a tight link between the visibly and the observability semantics of OCAs provided the visibly semantics is in a certain normal form (cf.\ Appendix~\ref{app:enc} for the proof):

\begin{lemma}\label{lem:enc}
Let $\A$ be an OCA with alphabet $\Sigma$ such that $\vL(\A) \subseteq \Encs{\Sigma}$.
Then, we have $\vL(\A) = \enc{\oL(\A)}$ and, equivalently, $\oL(\A) = \encmap^{-1}(\vL(\A))$.
\end{lemma}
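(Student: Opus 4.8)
The plan is to prove the two claimed identities $\vL(\A) = \enc{\oL(\A)}$ and $\oL(\A) = \encmap^{-1}(\vL(\A))$; since $\encmap$ is a bijection between $(\Sigma \times \N)^\ast$ and $\Encs{\Sigma}$, and since by assumption $\vL(\A) \subseteq \Encs{\Sigma}$, the two statements are equivalent, so I will focus on showing $\vL(\A) = \enc{\oL(\A)}$. The engine of the proof is a correspondence between accepting runs of $\A$ and the traces they produce: I would show that for any accepting run $\rho$, the two words $\vextr(\trace(\rho))$ and $\oextr(\trace(\rho))$ are related by $\vextr(\trace(\rho)) = \enc{\oextr(\trace(\rho))}$ \emph{whenever} $\vextr(\trace(\rho)) \in \Encs{\Sigma}$. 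Given that $\vL(\A) \subseteq \Encs{\Sigma}$, every accepting run falls into this case, and the identity $\vL(\A) = \enc{\oL(\A)}$ follows by ranging over all accepting runs on both sides.

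The key lemma to establish is therefore: for any word $v \in \Encs{\Sigma}$ and any run $\rho$ of $\A$, if $\vextr(\trace(\rho)) = v$ then $\oextr(\trace(\rho)) = \encmap^{-1}(v)$. To prove this I would work directly from the definitions. Write $v = u_1 a_1 u_2 a_2 \ldots u_n a_n$ in the normal form guaranteed for elements of $\Encs{\Sigma}$, where each $u_i \in \{\inc\}^\ast \cup \{\dec\}^\ast$. Since $\vextr$ erases nothing except turning $(a,x)$ into $a$, a run with $\vextr(\trace(\rho)) = v$ must have trace $\trace(\rho) = u_1 (a_1,x_1) u_2 (a_2,x_2) \ldots u_n (a_n,x_n)$ for the appropriate counter values $x_i$. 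The counter value $x_i$ recorded when the $i$-th $\Sigma$-transition fires is, by the definition of $\srelp{\A}$ on $\inc$/$\dec$ transitions, exactly the net sum of $\inc$'s minus $\dec$'s among $u_1, \ldots, u_i$ (starting from $0$); because each $u_i$ is a block of \emph{only} increments or \emph{only} decrements, this net value is $x_{i-1} + |u_i|$ if $u_i \in \{\inc\}^\ast$ and $x_{i-1} - |u_i|$ if $u_i \in \{\dec\}^\ast$ (with $x_0 = 0$). Hence $|u_i| = |x_i - x_{i-1}|$ and $u_i$ consists of $\inc$'s exactly when $x_i \ge x_{i-1}$; this is precisely the definition of $\textup{sign}(x_i - x_{i-1})^{|x_i-x_{i-1}|}$, so $v = \enc{(a_1,x_1)\ldots(a_n,x_n)} = \enc{\oextr(\trace(\rho))}$, i.e. $\oextr(\trace(\rho)) = \encmap^{-1}(v)$. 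Conversely, if $u = \oextr(\trace(\rho))$ for an accepting run, then $\trace(\rho)$ has $\Sigma$-letters carrying exactly the observed values, and the $\inc$/$\dec$ blocks between them have net effect matching those values, so $\vextr(\trace(\rho)) \in \Encs{\Sigma}$ and equals $\enc{u}$; combined with the assumption $\vL(\A)\subseteq\Encs\Sigma$ this gives both inclusions of $\vL(\A) = \enc{\oL(\A)}$.

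I do not expect any deep obstacle here; the result is essentially a bookkeeping argument unwinding the definition of $\trace$ and the semantics of counter operations. The one point that needs care is the role of the hypothesis $\vL(\A) \subseteq \Encs{\Sigma}$: without it, a run could interleave increments and decrements arbitrarily between two letters (e.g. $\inc\,\dec\,\inc$), so that $\vextr(\trace(\rho))$ would not be a valid encoding even though the observed counter values are perfectly well-defined; the map $\encmap$ would then fail to recover $\vextr(\trace(\rho))$ from $\oextr(\trace(\rho))$. So the argument must use the normal-form characterization of $\Encs{\Sigma}$ at exactly the step where I conclude that the block $u_i$ between consecutive letters is homogeneous (all $\inc$ or all $\dec$) and that its length equals $|x_i - x_{i-1}|$. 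The other mild subtlety is handling the prefix condition "no prefix of $w$ contains more $\dec$'s than $\inc$'s", which is automatic because counter values in configurations are natural numbers and hence every prefix of $\vextr(\trace(\rho))$ has nonnegative net counter effect; this is worth stating explicitly but is immediate. Finally, the equivalence of the two displayed identities is just an application of injectivity of $\encmap$ on $(\Sigma\times\N)^\ast$ together with $\vL(\A)\subseteq\Encs\Sigma = \encmap((\Sigma\times\N)^\ast)$.
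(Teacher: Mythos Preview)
Your proposal is correct and follows essentially the same approach as the paper: both proofs establish the two inclusions of $\vL(\A) = \enc{\oL(\A)}$ by decomposing the trace of an accepting run into homogeneous $\inc$/$\dec$ blocks $u_i$ separated by $\Sigma$-letters, and then computing that the counter value $x_i$ at the $i$-th letter satisfies $x_i = x_{i-1} \pm |u_i|$, so that $u_i = \textup{sign}(x_i - x_{i-1})^{|x_i - x_{i-1}|}$. One wording issue: in your ``Conversely'' sentence, the fact that $\vextr(\trace(\rho)) \in \Encs{\Sigma}$ does not follow from the blocks having the right \emph{net} effect (as you correctly note in your final paragraph), but rather from $\rho$ being accepting together with the hypothesis $\vL(\A) \subseteq \Encs{\Sigma}$; make sure that dependency is stated at the right place.
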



Lemmas~\ref{lem:prod} and \ref{lem:enc} imply the following closure property: 

\begin{proposition}\label{prop:intersect}
Let $\A_1$ and $\A_2$ be OCAs over $\Sigma$ such that $\vL(\A_1) \subseteq \Encs{\Sigma}$ and $\vL(\A_2) \subseteq \Encs{\Sigma}$. Then, $\oL(\A_1 \times \A_2) = \oL(\A_1) \cap \oL(\A_2)$ (where $\A_1 \times \A_2$ is due to Lemma~\ref{lem:prod}).
\end{proposition}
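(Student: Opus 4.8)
The plan is to chain together the two lemmas cited in the statement, using the fact that $\encmap$ is a bijection onto $\Encs{\Sigma}$. First I would invoke Lemma~\ref{lem:prod} to get the product automaton $\A_1 \times \A_2$ with $\vL(\A_1 \times \A_2) = \vL(\A_1) \cap \vL(\A_2)$. Since both $\vL(\A_1)$ and $\vL(\A_2)$ are contained in $\Encs{\Sigma}$ by hypothesis, their intersection is as well, so $\vL(\A_1 \times \A_2) \subseteq \Encs{\Sigma}$. This is exactly the precondition needed to apply Lemma~\ref{lem:enc}.

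Next I would apply Lemma~\ref{lem:enc} three times. Applied to $\A_1 \times \A_2$ it gives $\oL(\A_1 \times \A_2) = \encmap^{-1}(\vL(\A_1 \times \A_2)) = \encmap^{-1}\bigl(\vL(\A_1) \cap \vL(\A_2)\bigr)$. Applied to $\A_1$ and to $\A_2$ separately (whose visibly languages lie in $\Encs{\Sigma}$ by assumption), it gives $\oL(\A_i) = \encmap^{-1}(\vL(\A_i))$ for $i \in \{1,2\}$.

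It then remains to observe that taking preimages under $\encmap$ commutes with intersection: for any $X, Y \subseteq (\Sigma \cup \Op)^\ast$ one has $\encmap^{-1}(X \cap Y) = \encmap^{-1}(X) \cap \encmap^{-1}(Y)$ (this is a general property of preimages and needs no injectivity). Hence
\[
\oL(\A_1 \times \A_2) = \encmap^{-1}\bigl(\vL(\A_1) \cap \vL(\A_2)\bigr) = \encmap^{-1}(\vL(\A_1)) \cap \encmap^{-1}(\vL(\A_2)) = \oL(\A_1) \cap \oL(\A_2),
\]
which is the claimed equality.

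There is essentially no hard step here; the proof is a short assembly of already-established facts. The only point requiring a moment's care is verifying the hypothesis of Lemma~\ref{lem:enc} for the product automaton, namely that $\vL(\A_1 \times \A_2) \subseteq \Encs{\Sigma}$ — but this is immediate once $\vL(\A_1 \times \A_2) = \vL(\A_1) \cap \vL(\A_2)$ is in hand, since the intersection is contained in $\vL(\A_1) \subseteq \Encs{\Sigma}$. (One could also note that $\A_1 \times \A_2$ inherits the bijectivity of $\encmap$ on its relevant domain from $\Encs{\Sigma}$, but this is not even needed for the argument above.)
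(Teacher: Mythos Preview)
Your proof is correct and follows essentially the same approach as the paper: both chain Lemma~\ref{lem:prod} with Lemma~\ref{lem:enc} and finish with a set-theoretic step. The only cosmetic difference is that the paper uses the forward form $\vL(\A_i) = \enc{\oL(\A_i)}$ and then commutes $\encmap$ with intersection (relying on injectivity), whereas you use the equivalent form $\oL(\A_i) = \encmap^{-1}(\vL(\A_i))$ and commute $\encmap^{-1}$ with intersection, which needs no injectivity; this is a minor variation, not a different route.
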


The next lemma constitutes the main ingredient of the determinization procedure. It will eventually allow us to rely on OCAs whose visibly semantics consists only of valid encodings.


\begin{lemma}\label{lem:eoca}
Let $\A$ be an OCA. We can compute, in polynomial time, an eOCA $\eA{\A}$ such that
$\oL(\eA{\A}) = \oL(\A)$ and, for all $w \in \oL(\eA{\A})$, we have $\enc{w} \in \vL(\eA{\A})$.
\end{lemma}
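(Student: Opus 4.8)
\textbf{Proof proposal for Lemma~\ref{lem:eoca}.}

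The plan is to build $\eA{\A}$ so that it simulates a run of $\A$ on the word $w=(a_1,x_1)\ldots(a_n,x_n)$ "in normal form": between two consecutive letters it moves the counter monotonically — incrementing if $x_{i+1}\ge x_i$, decrementing if $x_{i+1}<x_i$ — and never oscillates, with every oscillation phase of the original run replaced by a single guarded $\epsilon$-free step that checks, via an arithmetic-progression formula, that the omitted oscillation is actually possible at the current counter value. First I would fix a "mode" component in the state space of $\eA{\A}$, recording whether the automaton is currently in an increasing mode, a decreasing mode, or has just produced a letter (the initial mode being "increasing", matching the leading $\inc^{x_1}$ of $\enc{w}$). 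The state set of $\eA{\A}$ will therefore be roughly $Q\times Q\times\{\incmode,\decmode,\text{letter}\}$, where the first $Q$-component tracks the state of $\A$ at the last "pivot" (the moment a monotone segment started) and the second tracks the current state of $\A$ along the monotone portion. Acceptance of $\eA{\A}$ in a state whose current $\A$-component is $q$ is the condition that some oscillation leads from $q$ to a final state of $\A$ at the current value; by Lemma~\ref{lem:rpq} this is the guard $\bigvee_{q'\in F}\phi_{q,q'}$, which is in $\Guardsmod$ and polynomial-size.

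The transitions of $\eA{\A}$ are of three kinds. A monotone step in increasing mode from a configuration whose current $\A$-component is $p$: for every pair $p,r$ and every $\A$-transition $(r,k,\inc,r')\in\Delta$, we add a transition of $\eA{\A}$ that performs $\inc$, goes from current-component $p$ to current-component $r'$, and carries the guard "$x\models\phi_{p,r}$ and $\min\{x,\thr\}=k$" — meaning "there is an oscillation from $p$ to $r$ at value $x$, and then $\A$ can legitimately increment from $r$". Here $\phi_{p,r}\in\Guardsmod$ comes from Lemma~\ref{lem:rpq} (and $\phi_{p,p}=\true$ absorbs the no-oscillation case); the threshold test $\min\{x,\thr\}=k$ is itself expressible in $\Guardsmod$ since it is either "$x=k$" or, for $k=\thr$, "$x\ge\thr$", both finite boolean combinations of arithmetic progressions with polynomially bounded constants. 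Decreasing steps are symmetric. A letter step: in any mode, for every pair $p,r$ and every $\A$-transition $(r,k,a,r')\in\Delta$ with $a\in\Sigma$, add a transition that produces $(a,x)$, guarded by "$x\models\phi_{p,r}$ and $\min\{x,\thr\}=k$", moving to current-component $r'$ and into the "letter" mode. Finally, from the "letter" mode one is allowed to choose, for free, a new monotone direction: we make the first monotone step after a letter reset the pivot-component to the current $\A$-component, which is the bookkeeping that makes $\phi_{p,r}$ meaningful. All guards involved are in $\Guardsmod$ with polynomially bounded constants (Lemma~\ref{lem:rpq}), so $\eA{\A}$ has polynomial size and is computable in polynomial time.

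Correctness then splits into two inclusions. For $\oL(\A)\subseteq\oL(\eA{\A})$: given an accepting run $\rho$ of $\A$ with $\oextr(\trace(\rho))=w$, decompose it at the "staircase" moments — the last time the counter reaches each new plateau before a letter, and the letter steps themselves (this is exactly the decomposition sketched before Figure~\ref{fig:stairs}); each intervening segment is an oscillation returning to its starting value or a monotone chunk, and each oscillation from $p$ to $r$ at value $x$ witnesses $x\in\URel{p}{r}{\A}=\sem{\phi_{p,r}}$, so $\eA{\A}$ can take the corresponding guarded step; acceptance of $\rho$ gives $x\models\phi_{q_n,q'}$ for a final $q'$, i.e.\ the acceptance guard of $\eA{\A}$ holds. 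Conversely, $\oL(\eA{\A})\subseteq\oL(\A)$: every step of $\eA{\A}$ is justified by a guard $\phi_{p,r}$, which by definition of $\URel{p}{r}{\A}$ can be expanded into a genuine oscillation of $\A$ from $p$ to $r$ at the recorded value, followed by the genuine $\A$-transition the step is built from; splicing these expansions together with the final acceptance expansion yields an accepting run of $\A$ producing the same $w$. Finally, the normal-form property gives the encoding claim: a run of $\eA{\A}$ producing $w$ emits, on the visibly semantics, exactly $\inc^{x_1}a_1\,\mathrm{sign}(x_2-x_1)^{|x_2-x_1|}a_2\cdots a_n=\enc{w}$, because each monotone segment consists solely of $\inc$'s (resp.\ $\dec$'s) and runs until the next letter — so $\vextr(\trace(\rho))=\enc{w}$, hence $\enc{w}\in\vL(\eA{\A})$.

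The main obstacle I expect is not the construction but verifying the normal-form bookkeeping cleanly: one must be careful that the pivot-component is reset at the right moment (immediately after a letter, not before), that the very first segment uses the initial pivot $\init$ with $x=0$ so that $\inc^{x_1}$ is produced correctly, that a monotone segment of length zero (when $x_{i+1}=x_i$) is handled — here the "$\ge 0$" convention in $\mathrm{sign}$ forces a spurious-looking choice of $\incmode$ but with zero $\inc$'s, so the encoding still matches — and that switching direction is only permitted right after a letter, never mid-segment, which is what guarantees uniqueness of $\enc{w}$. Once the invariant "in state $((p),(r),\mu)$ at value $x$, reading $\enc{w}$, there is an $\A$-run realizing the prefix of $w$ read so far and ending at $(r,x)$, with $p$ the $\A$-state at the start of the current monotone segment" is pinned down, both inclusions follow by a routine induction on the length of the run.
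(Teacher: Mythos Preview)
Your overall strategy---replace each oscillation of $\A$ by a single guarded step using the formulas $\phi_{p,r}$ from Lemma~\ref{lem:rpq}, and express the threshold test $\min\{x,\thr\}=k$ as a $\Guardsmod$-formula---is exactly the paper's idea. But there is one genuine gap and one unnecessary complication.

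\textbf{The gap: your acceptance condition is wrong.} You set the acceptance guard at current $\A$-state $q$ to $\bigvee_{q'\in F}\phi_{q,q'}$, i.e.\ ``some \emph{oscillation} at the current value reaches a final state''. That is too restrictive. After the last letter $(a_n,x_n)$, the original OCA $\A$ may continue with counter operations and accept in a configuration $(q',x')$ with $x'\neq x_n$; nothing forces the counter to return to $x_n$. What you need is $x_n\in Y_{q,q'}^{\A}$, not $x_n\in\URel{q}{q'}{\A}$, i.e.\ the guard must be $\bigvee_{q'\in F}\psi_{q,q'}$, using the \emph{second} family of formulas supplied by Lemma~\ref{lem:rpq}. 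A two-state example makes the failure concrete: take $Q=\{q_0,q_1\}$, $F=\{q_1\}$, $\thr=1$, with transitions $(q_0,0,a,q_0)$ and $(q_0,0,\inc,q_1)$. Then $(a,0)\in\oL(\A)$ via $(q_0,0)\Rightarrow(q_0,0)\Rightarrow(q_1,1)$, yet $\sem{\phi_{q_0,q_1}}=\emptyset$ (there is no way to reach $q_1$ and return to the same value), so your $\eA{\A}$ rejects $(a,0)$. The paper's construction uses exactly $f(p)=\bigvee_{q\in F}\psi_{p,q}$ for this reason.

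\textbf{The complication: the modes and the pivot component are unnecessary.} The lemma does not ask that $\vL(\eA{\A})\subseteq\Encs{\Sigma}$; it only asks that for every $w\in\oL(\eA{\A})$ the word $\enc{w}$ \emph{occurs} in $\vL(\eA{\A})$. So you do not have to force every run into normal form---you only have to make sure a normal-form run exists. The paper therefore keeps the state space equal to $Q$ and simply sets
\[
\Delta'=\bigl\{(p,\;\phi_{p,q}\wedge\kform{k},\;\sigma,\;q')\;\bigm|\;p\in Q,\ (q,k,\sigma,q')\in\Delta\bigr\},
\]
with $\kform{k}$ the obvious $\Guardsmod$-encoding of the threshold test. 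This $\eA{\A}$ can still oscillate (take $p=q$), but for every accepted $w$ the straight ``staircase'' run is available and witnesses $\enc{w}\in\vL(\eA{\A})$; the restriction to $\Encs{\Sigma}$ is only imposed later, in Theorem~\ref{lem:doca-compl}, by taking the product with $\Benc$. Your mode machinery is not wrong, but it triples the bookkeeping for no gain, and your ``pivot'' component is in fact never used in the transitions you describe (your guards $\phi_{p,r}$ already take $p$ to be the \emph{current} component), so it can be dropped outright.
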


\begin{proof}
Suppose $\A = (Q,\Sigma,\init,F,\thr,\Delta)$ is the given OCA.
We first translate a simple ``threshold constraint'' into an arithmetic expression that can be used as a guard in the eOCA $\eA{\A}$: Let $\kform{\thr} = \modform{\thr}{1}$, and
$\kform{k} = \modform{k}{0}$ for all $k \in \{0,\ldots,\thr-1\}$.

We define $\eA{\A} = (Q,\Sigma,\init,\fmap,\Delta')$ as follows:
Essentially, $\eA{\A}$ simulates $\A$ so that it has the same state space. However, when $\A$ allows for a shortcut (oscillation) from state $p$ to state $q$ (which will be checked in terms of $\phi_{p,q}$ from Lemma~\ref{lem:rpq}) and there is a transition $(q,k,\sigma,q')$ of $\A$, then $\eA{\A}$ may perform $\sigma$ and go directly from $p$ to $q'$, provided $\kform{k}$ is satisfied as well.
Formally, the transition relation is given as
$\Delta' = \{(p,\phi_{p,q} \wedge \kform{k},\sigma,q') ~\mid~ p \in Q \text{ and } (q,k,\sigma,q') \in \Delta\}$.
Moreover, a configuration $(p,x)$ is ``final'' in $\eA{\A}$ if the current counter value  $x$ satisfies $\psi_{p,q}$ for some $q \in F$ (cf.\ Lemma~\ref{lem:rpq}).
That is, for all $p \in Q$, we let
$\fmap(p) = \bigvee_{q \in F} \psi_{p,q}$.


The correctness proof of the construction can be found in Appendix~\ref{app:eoca}.
\end{proof}

To transform an eOCA back into an ordinary OCA while determinizing it and preserving its observability semantics, we will need a dOCA that takes care of the modulo constraints:

\begin{lemma}\label{lem:BC}
Let $\SetGuards \subseteq \Guardsmod$ be a nonempty finite set. Set $\thr_\SetGuards \df \max\{\consta \mid \modform{\consta}{\constb}$ is an atomic subformula of some $\phi \in \SetGuards\}+2$. There are a dOCA $\B_\SetGuards=(Q,\Sigma,\init,Q,\thr_\SetGuards,\delta)$ of exponential size and $\eval: Q \to 2^\SetGuards$ such that, for all $(q,x) \in \Conf_{\B_\SetGuards}$ and all runs of $\B_\SetGuards$ ending in $(q,x)$, we have
$\eval(q) = \{\phi \in \SetGuards \mid x \models \phi\}$.
\end{lemma}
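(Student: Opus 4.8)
The plan is to build $\B_\SetGuards$ as a finite gadget whose state records exactly the information about the current counter value that is needed to evaluate every guard in $\SetGuards$. Recall that $x \in \N$ satisfies an atom $\modform{c}{d}$ with $d \ge 1$ iff $x \ge c$ and $x \equiv c \pmod d$, and satisfies $\modform{c}{0}$ iff $x = c$. Hence it is enough to remember, about $x$, the pair $(v,r)$ with $v \df \min\{x,B\}$ and $r \df x \bmod D$, where $B \df \thr_\SetGuards - 1$ (which, by definition of $\thr_\SetGuards$, is strictly larger than every constant $c$ occurring in an atom of $\SetGuards$) and $D$ is the least common multiple of all moduli $d \ge 1$ occurring in an atom of some $\phi \in \SetGuards$ (with $D \df 1$ if there is none). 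So I take $Q \df \setz{B} \times \setz{D-1}$, $\init \df (0,0)$, and $F \df Q$. Since $D$ is bounded by the product of those moduli and they are encoded in unary, $D$ — and hence $|Q| \le (B{+}1)\cdot D$ — is exponential in the size of $\SetGuards$.

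The transition function $\delta$ is chosen so as to maintain the invariant that every configuration $((v,r),x)$ reachable from $(\init,0)$ satisfies $v = \min\{x,B\}$ and $r \equiv x \pmod D$. A $\Sigma$-transition leaves the counter unchanged, so I set $\delta((v,r),k,a) \df (v,r)$ for all $a \in \Sigma$ and $k$. On an increment I set the new capped value to $\min\{v{+}1,B\}$ and replace $r$ by $(r{+}1)\bmod D$; using the invariant one checks $\min\{\min\{x,B\}{+}1,B\} = \min\{x{+}1,B\}$, so $k$ is not even needed here. The decrement is the only subtle case: from the invariant, if $v < B$ then the old counter equals $v$ and the new capped value is $v{-}1$; if $v = B$ then the old counter is $\ge B$, and the test $k = \min\{x,\thr_\SetGuards\}$ tells us whether it was exactly $B$ (then $k = B$, new capped value $B{-}1$) or at least $B{+}1 = \thr_\SetGuards$ (then $k = \thr_\SetGuards$, new capped value $B$); in all cases $r$ becomes $(r{-}1)\bmod D$. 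Values of $\delta$ on argument combinations that never occur in a run — e.g.\ a decrement at counter value $0$ — are set arbitrarily within $\setz{B}$. The automaton is deterministic since $\delta$ is a total function. It is exactly the need to distinguish ``counter $= B$'' from ``counter $> B$'' by a single test on a decrement that forces one extra unit above $B$, i.e.\ $\thr_\SetGuards = B{+}1$ rather than $B$.

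Finally, for a state $q = (v,r)$ define, by induction on the structure of guards, when ``$\phi$ holds at $q$'': the atom $\modform{c}{d}$ holds at $q$ iff either $d \ge 1$, $v \ge c$ and $r \equiv c \pmod d$, or $d = 0$ and $v = c$; the Boolean connectives are interpreted as expected. Put $\eval(q) \df \{\phi \in \SetGuards \mid \phi \text{ holds at } q\}$, a function of $q$ alone. Correctness then follows in two steps. First, an induction on the length of a run yields the invariant above (base case $((0,0),0)$; the inductive step is the case analysis for $\delta$ just sketched). Second, for any reachable $((v,r),x)$ one checks ``$\modform{c}{d}$ holds at $(v,r)$'' $\iff$ $x \models \modform{c}{d}$: when $d \ge 1$, because $v = \min\{x,B\} \le x$ together with $B \ge c$ give $v \ge c \iff x \ge c$, and $d \mid D$ gives $r \equiv c \pmod d \iff x \equiv c \pmod d$; when $d = 0$, because $c < B$ forces $v = c \iff x = c$. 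A further structural induction lifts this equivalence from atoms to all of $\SetGuards$, so $\eval(q) = \{\phi \in \SetGuards \mid x \models \phi\}$ for every reachable $(q,x)$, as required. The only genuinely delicate point is the decrement rule near the top of the threshold window (and the value of $\thr_\SetGuards$ it dictates); determinism, the size bound, and the two inductions are routine.
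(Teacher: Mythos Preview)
Your proof is correct and follows essentially the same approach as the paper: record in the finite state enough information about the current counter value (a bounded ``threshold'' component together with modular residues) to evaluate every atom in $\SetGuards$, and update this information on $\inc/\dec$ using the test value $k$, with the only delicate case being a decrement near the top of the threshold window. The one difference is cosmetic: the paper keeps a separate component $\setz{c_i}\times\setz{d_i'-1}$ for each atom $\modform{c_i}{d_i}$ and takes the product, whereas you collapse everything into a single pair $(\min\{x,B\},\,x \bmod D)$ with $D$ the lcm of the occurring moduli --- both encodings give an exponential-size dOCA and the same invariant-based correctness argument.
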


\newcommand{\incmap}[2]{\delta_{#1,#2}^\inc}
\newcommand{\decmap}[2]{\delta_{#1,#2}^\dec}
\newcommand{\pmap}[3]{\delta_{#1,#2}^{#3}}

\begin{proof}
We sketch the idea. The complete construction is given in Appendix~\ref{app:BC}.

For every arithmetic-progression formula $\modform{\consta}{\constb}$ that occurs in $\SetGuards$ (for simplicity, let us assume $\constb \ge 1$), we introduce a state component $\{0,1,\ldots,\consta\} \times \{0,1,\ldots,\constb-1\}$.
Increasing the counter, we increment the first component until $c$ and then count modulo $d$ in the second. We proceed similarly when decreasing the counter. The current state $(x,y) \in \setz{\consta} \times \setz{\constb-1}$ will then tell us whether $\modform{\consta}{\constb}$ holds, namely iff $x = \consta$ and $y = 0$. Finally, the mapping $\eval$ evaluates a formula based on the outcome for its atomic subformulas. Note that the size of $\B_\SetGuards$ is exponential in the number of arithmetic-progression formulas that occur in $\SetGuards$.
\end{proof}

We will now apply Lemma~\ref{lem:BC} to transform an eOCA into a dOCA
(cf.\ also Lemma~\ref{lem:det}).

\begin{lemma}\label{thm:oca-to-doca}
Let $\A$ be an eOCA. We can compute, in exponential time, a dOCA $\A'$
(deterministic according to Definition~\ref{def:doca})
such that $\eL(\A') = \eL(\A)$ for all $\extr \in \{\cextr,\vextr,\oextr\}$.
\end{lemma}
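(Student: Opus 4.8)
The plan is to combine the modulo-tracking automaton $\B_\SetGuards$ from Lemma~\ref{lem:BC} with the eOCA $\A$ via a product construction, and then determinize the result using the powerset construction that underlies Lemma~\ref{lem:det}. Concretely, suppose $\A = (Q,\Sigma,\init,\fmap,\Delta)$ is the given eOCA. Let $\SetGuards \subseteq \Guardsmod$ be the (finite) set consisting of all guards $\phi$ that appear on a transition of $\A$ together with all acceptance formulas $\fmap(q)$ for $q \in Q$. Apply Lemma~\ref{lem:BC} to obtain the dOCA $\B_\SetGuards = (Q_\B,\Sigma,\init_\B,Q_\B,\thr_\SetGuards,\delta_\B)$ of exponential size and the labeling $\eval: Q_\B \to 2^\SetGuards$, with the property that in any configuration $(q_\B,x)$ reachable in $\B_\SetGuards$ we have $\eval(q_\B) = \{\phi \in \SetGuards \mid x \models \phi\}$.

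First I would form a synchronous product $\A \times \B_\SetGuards$. Since both components are OCAs (the eOCA does not actually move the counter differently from an OCA — guards only restrict enabledness), they manipulate a single shared counter in lockstep, exactly as in the product construction behind Lemma~\ref{lem:prod}. The product has state set $Q \times Q_\B$ with threshold $\thr_\SetGuards$; a transition $(p,k,\sigma,p')$ of the product on counter-test value $k$ exists iff $\delta_\B$ has the corresponding move in its $\B$-component and $\A$ has a transition $(p_1,\phi,\sigma,p'_1)$ whose guard $\phi$ belongs to $\eval(q_\B)$, where $q_\B$ is the current $\B$-state — which is exactly the condition $x \models \phi$ by the correctness of $\B_\SetGuards$. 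This turns the semantic guard check into a purely syntactic test on the (finite) $\B$-state, so the product is an ordinary OCA (with threshold $\thr_\SetGuards$) of exponential size. A state $(p,q_\B)$ is declared final iff $\fmap(p) \in \eval(q_\B)$ as an evaluated formula, i.e.\ iff the current counter value satisfies $\fmap(p)$. Because $\B_\SetGuards$ is deterministic, complete, and accepts in every state, this product simulates every run of $\A$ faithfully and introduces no new behaviors; hence $\eL(\A \times \B_\SetGuards) = \eL(\A)$ for each $\extr \in \{\cextr,\vextr,\oextr\}$, since $\extr$ only depends on the trace $\obsletter_1\ldots\obsletter_n$, which the product reproduces verbatim.

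Finally I would apply Lemma~\ref{lem:det} to the ordinary OCA $\A \times \B_\SetGuards$, obtaining a dOCA $\A'$ of size exponential in $|\A \times \B_\SetGuards|$ with $\eL(\A') = \eL(\A \times \B_\SetGuards) = \eL(\A)$ for all three semantics. Composing the two exponential blow-ups (the product is exponential in $|\A|$, and determinization is exponential in the size of the product) still yields a dOCA computable in time exponential in the size of the original eOCA, as claimed. (If one wants a cleaner bound, note that $\A \times \B_\SetGuards$ is already deterministic in the $\B$-component, so the powerset construction only needs to be applied to the $Q$-component and to the finitely many threshold-test nondeterministic choices of $\A$, which keeps the overall size singly exponential; but the coarse "exponential of exponential is exponential" argument suffices for the statement.)

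The main obstacle is getting the interface between the eOCA's modulo guards and the OCA's bounded threshold tests exactly right: one has to check that $\thr_\SetGuards$ is large enough that the finite-state tracking in $\B_\SetGuards$ is never "fooled" near the boundary (this is why Lemma~\ref{lem:BC} adds $+2$ to the maximal constant), and that the product's counter-test value $k = \min\{x,\thr_\SetGuards\}$ still lets $\B_\SetGuards$ update its residue components correctly — i.e.\ that once the counter exceeds all relevant constants, the $\B$-state has already stabilized so that no further information is lost by clamping the test. Granting the correctness statement of Lemma~\ref{lem:BC}, this is essentially bookkeeping, and the rest of the argument is the routine product-then-powerset pipeline.
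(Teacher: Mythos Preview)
Your approach is essentially the paper's: take the guard set $\SetGuards$, build $\B_\SetGuards$ from Lemma~\ref{lem:BC}, run it alongside $\A$ so that the $\B$-state tells you which guards currently hold, and apply the powerset construction to the $\A$-component. The paper does these two steps in one shot, defining $\A'$ directly with state space $2^Q \times \hat Q$, initial state $(\{\init\},\hat\init)$, final states $\{(P,q)\mid \fmap(p)\in\eval(q)\text{ for some }p\in P\}$, and transition $\delta'((P,q),k,\sigma)=(P',\hat\delta(q,k,\sigma))$ with $P'=\{p'\mid (p,\phi,\sigma,p')\in\Delta,\ p\in P,\ \phi\in\eval(q)\}$.

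There is, however, a genuine slip in your complexity argument. You write that ``the coarse `exponential of exponential is exponential' argument suffices''; it does not. If you first form the product $\A\times\B_\SetGuards$, its state space $Q\times Q_\B$ is already exponential in~$|\A|$ (the blow-up is in $|Q_\B|$), and applying Lemma~\ref{lem:det} blindly to that OCA gives $2^{|Q|\cdot|Q_\B|}$ states, which is doubly exponential. So the two-step pipeline as stated does \emph{not} meet the ``exponential time'' bound in the lemma. The fix you sketch in the parenthetical --- observe that $\B_\SetGuards$ is deterministic and only apply the subset construction to the $Q$-component, yielding $2^{|Q|}\times Q_\B$ --- is not optional polish but the actual argument needed, and it is precisely what the paper's one-step construction implements. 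Once you commit to that, your proof and the paper's coincide.
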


\begin{proof}
Suppose $\A=(Q,\Sigma,\init,\fmap,\Delta)$ is the given eOCA.
Let $\SetGuards \subseteq \Guardsmod$ be the set of formulas that occur in $\Delta$ or $\fmap$, and let $\B_\SetGuards=(\hat Q,\Sigma,\hat\init,\hat Q,\thr_\SetGuards,\hat\delta)$ be the dOCA along with the function $\eval$ according to Lemma~\ref{lem:BC}.

We build the dOCA ${\A'} = (Q',\Sigma,{\init'},{F'},\thr_\SetGuards,{\delta'})$ as follows.
Essentially, we perform a simple powerset construction for $\A$. Moreover, to eliminate modulo guards, we run $\B_\SetGuards$ in parallel.
Thus, the set of states is $Q' = 2^Q \times \hat Q $, with initial state $\init' = (\{\init\},\hat\init)$ and set of final states $F' = \{(P,q) \in Q' \mid \fmap(p) \in \eval(q)$ for some $p \in P\}$. Finally, the transition function is given by
$\delta'((P,q),k,\sigma) = (P',\hat\delta(q,k,\sigma))$
where $P' = \bigl\{p' \mid (p,\phi,\sigma,p') \in \Delta \cap (P \times \eval(q) \times \{\sigma\} \times Q)\bigr\}$.

For the correctness proof, we refer to Appendix~\ref{app:oca-to-doca}.
\end{proof}

There is a ``nondeterministic version'' of Lemma~\ref{thm:oca-to-doca}, which does not perform a powerset construction but rather computes a nondeterministic OCA. The latter is then still of exponential size, but only wrt.\ to the number of arithmetic-progression formulas in $\A$. 

With Theorem~\ref{thm:emptiness}, it follows that nonemptiness for \ECAVs can be solved in PSPACE. We do not know if this upper bound is tight.

\medskip

Let $\A$ be a dOCA with alphabet $\Sigma$ and let $w \in (\Sigma \times \N)^\ast$. By $\rho_{\!\A}(w)$, we denote the unique run of $\A$ such that $\vextr(\trace(\rho_{\!\A}(w))) = \enc{w}$.

By the following observation, which follows directly from Lemma~\ref{lem:enc}, it is justified to call any dOCA $\A$ with $\vL(\A) \subseteq \Encs{\Sigma}$ \emph{deterministic wrt.\ the observability semantics}:

\begin{lemma}\label{rem:unique}
Let $\A$ be a dOCA such that $\vL(\A) \subseteq \Encs{\Sigma}$. For every word $w \in (\Sigma \times \N)^\ast$,
we have $w \in \oL({\A})$ iff $\rho_{\!\A}(w)$ is accepting.
\end{lemma}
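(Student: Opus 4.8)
The plan is to unwind the definitions and invoke Lemma~\ref{lem:enc}. Since $\A$ is a dOCA with $\vL(\A) \subseteq \Encs{\Sigma}$, Lemma~\ref{lem:enc} applies and gives $\oL(\A) = \encmap^{-1}(\vL(\A))$. So for $w \in (\Sigma \times \N)^\ast$, we have $w \in \oL(\A)$ iff $\encmap(w) \in \vL(\A)$ (using that $\encmap$ is a bijection between $(\Sigma \times \N)^\ast$ and $\Encs{\Sigma}$, and $\vL(\A) \subseteq \Encs{\Sigma}$).

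Next I would connect $\encmap(w) \in \vL(\A)$ to the run $\rho_{\!\A}(w)$. By definition, $\rho_{\!\A}(w)$ is the unique run of $\A$ with $\vextr(\trace(\rho_{\!\A}(w))) = \encmap(w)$; such a run exists and is unique because $\A$ is deterministic, hence has exactly one run whose visibly trace equals any given well-formed word, and $\encmap(w)$ is well-formed (no prefix has more $\dec$'s than $\inc$'s, by the characterization of $\Encs{\Sigma}$). Now $\encmap(w) \in \vL(\A)$ means precisely that there is an accepting run $\rho$ with $\vextr(\trace(\rho)) = \encmap(w)$; by uniqueness this run must be $\rho_{\!\A}(w)$. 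Conversely, if $\rho_{\!\A}(w)$ is accepting, then $\encmap(w) = \vextr(\trace(\rho_{\!\A}(w))) \in \vL(\A)$ by definition of $\vL$. Chaining the two equivalences yields $w \in \oL(\A)$ iff $\rho_{\!\A}(w)$ is accepting.

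The only subtlety — and it is minor — is making sure the object $\rho_{\!\A}(w)$ referenced in the statement is well-defined, i.e.\ that a dOCA really has a unique run for the well-formed word $\encmap(w)$. This follows from Definition~\ref{def:doca}: given a word $u \in (\Sigma \cup \Op)^\ast$, the successor configuration after reading each letter is uniquely determined whenever it exists (the state by $\delta$, the counter by the letter's counter effect), and for a well-formed word the counter never needs to go below zero, so the run exists and is unique. Since $\encmap(w) \in \Encs{\Sigma}$ is well-formed, $\rho_{\!\A}(w)$ is indeed well-defined, and no real obstacle remains.
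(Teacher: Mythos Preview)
Your proof is correct and follows exactly the approach the paper indicates: the paper states that Lemma~\ref{rem:unique} ``follows directly from Lemma~\ref{lem:enc}'' without spelling out details, and your argument is precisely the natural unfolding of that claim via the bijection $\encmap$ and the uniqueness of runs in a dOCA.
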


Altogether, we obtain that OCAs are determinizable wrt.\ the observability semantics.

\begin{theorem}[determinizability]\label{lem:doca-compl}
Let $\A$ be an OCA over $\Sigma$.
We can compute, in exponential time, an $\thr$-dOCA ${\A'}$ (with $\thr$ only polynomial) such that
$\oL({\A'}) = \oL(\A)$ and $\vL(\A') \subseteq \Encs{\Sigma}$.
\end{theorem}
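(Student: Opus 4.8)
The plan is to assemble the lemmas of this section into a short pipeline and to verify correctness through the visibly semantics. First I would apply Lemma~\ref{lem:eoca} to $\A$, obtaining in polynomial time an eOCA $\eA{\A}$ with $\oL(\eA{\A}) = \oL(\A)$ and such that $\enc{w} \in \vL(\eA{\A})$ for every $w \in \oL(\eA{\A})$. I would note here that every guard occurring in $\eA{\A}$ is either of the form $\kform{k}$ (constants $\le \thr$) or one of $\phi_{p,q},\psi_{p,q}$ from Lemma~\ref{lem:rpq}, whose constants are polynomially bounded; hence the set $\SetGuards$ of guards of $\eA{\A}$ uses only polynomially bounded constants, and the threshold $\thr_\SetGuards$ of Lemma~\ref{lem:BC} is polynomial. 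Next I would apply Lemma~\ref{thm:oca-to-doca} to $\eA{\A}$ to get, in exponential time, a dOCA $\C$ of threshold $\thr_\SetGuards$ with $\eL(\C) = \eL(\eA{\A})$ for all $\extr \in \{\cextr,\vextr,\oextr\}$; in particular $\oL(\C) = \oL(\A)$, $\vL(\C) = \vL(\eA{\A})$, and therefore $\enc{w} \in \vL(\C)$ for every $w \in \oL(\C)$.

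At this point $\C$ is a dOCA with $\oL(\C) = \oL(\A)$ and polynomial threshold, but its visibly language need not be contained in $\Encs{\Sigma}$ (the construction of $\eA{\A}$ still permits $\inc$'s and $\dec$'s to alternate between two $\Sigma$-letters). To repair this I would take the four-state $0$-dOCA $\Benc$ of Lemma~\ref{lem:benc}, with $\vL(\Benc) = \Encs{\Sigma}$, and set $\A' \df \C \times \Benc$ via Lemma~\ref{lem:prod}. Since both factors are deterministic, $\A'$ is a dOCA; its threshold is $\max\{\thr_\SetGuards,0\} = \thr_\SetGuards$, hence polynomial; and its size is polynomial in the exponential size of $\C$ and the constant size of $\Benc$, so $\A'$ is of exponential size and computable in exponential time. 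By Lemma~\ref{lem:prod}, $\vL(\A') = \vL(\C) \cap \vL(\Benc) = \vL(\C) \cap \Encs{\Sigma} \subseteq \Encs{\Sigma}$, which is the second required property.

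It then remains to check $\oL(\A') = \oL(\A)$. For $\oL(\A') \subseteq \oL(\A)$ I would use that an accepting run of $\A' = \C \times \Benc$ projects onto an accepting run of $\C$ with the very same trace — the product simulates the two components on a shared counter, synchronized on $\Sigma \cup \Op$ — so $\oextr(\trace(\cdot))$ is preserved and $\oL(\A') \subseteq \oL(\C) = \oL(\A)$. For the converse, given $w \in \oL(\A) = \oL(\C)$, I would observe $\enc{w} \in \vL(\C)$ (from the previous step) and $\enc{w} \in \Encs{\Sigma} = \vL(\Benc)$ trivially, hence $\enc{w} \in \vL(\C) \cap \vL(\Benc) = \vL(\A')$; since $\vL(\A') \subseteq \Encs{\Sigma}$, Lemma~\ref{lem:enc} gives $\oL(\A') = \encmap^{-1}(\vL(\A'))$, so $w = \encmap^{-1}(\enc{w}) \in \oL(\A')$.

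There is no deep obstacle here — the statement is essentially a packaging of Lemmas~\ref{lem:eoca}, \ref{thm:oca-to-doca}, \ref{lem:prod} and \ref{lem:benc}. The two points I expect to require care are: keeping the threshold polynomial, which hinges on the polynomially bounded constants provided by Lemma~\ref{lem:rpq} feeding into the bound $\thr_\SetGuards$ of Lemma~\ref{lem:BC}; and the converse inclusion above, where intersecting $\vL(\C)$ with $\Encs{\Sigma}$ must not discard any observable behaviour — this works precisely because Lemma~\ref{lem:eoca} guarantees that $\enc{w} \in \vL(\eA{\A})$ for \emph{every} observable word $w$, not merely that the valid encodings which happen to occur in $\vL(\eA{\A})$ are consistent.
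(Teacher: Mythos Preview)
Your proposal is correct and follows essentially the same pipeline as the paper's proof: apply Lemma~\ref{lem:eoca}, then Lemma~\ref{thm:oca-to-doca}, then take the product with $\Benc$ via Lemmas~\ref{lem:prod} and~\ref{lem:benc}. The paper's proof is terser and simply asserts $\oL(\A') = \oL(\A)$ after the product, whereas you spell out both inclusions and the polynomial-threshold argument explicitly; these are exactly the details the paper leaves implicit.
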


\begin{proof}
Let $\A$ be the given OCA.
We apply Lemmas~\ref{lem:eoca} and \ref{thm:oca-to-doca} to obtain a dOCA $\widetilde{\A}$ of exponential size such that $\oL(\widetilde\A) = \oL(\A)$
and, for all $w \in \oL(\widetilde\A)$, we have $\enc{w} \in \vL(\widetilde\A)$.
We set $\A' = \widetilde\A \mathrel{\times} \Benc$ (cf.\ Lemmas~\ref{lem:prod} and \ref{lem:benc})
and obtain $\oL({\A'}) = \oL(\A)$ and $\vL(\A') \subseteq \Encs{\Sigma}$.
\end{proof}

We conclude that OCAs are complementable wrt.\ the observability semantics:

\begin{theorem}[complementability]\label{thm:complobs}
Let $\A$ be an OCA with alphabet $\Sigma$. We can compute, in exponential time, a dOCA $\bar{\A}$ such that $\oL({\bar{\A}}) = (\Sigma \times \N)^\ast \setminus \oL(\A)$.
\end{theorem}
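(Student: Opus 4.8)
The plan is to reduce complementation under the observability semantics to complementation of a deterministic visibly OCA, using the encoding $\encmap$ and Lemma~\ref{lem:enc} as the bridge. First I would invoke Theorem~\ref{lem:doca-compl} to compute, in exponential time, a dOCA $\A'$ with $\oL(\A') = \oL(\A)$ and $\vL(\A') \subseteq \Encs{\Sigma}$, so that from now on everything takes place inside the set of valid encodings.

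Next I would complement $\A'$ in the visibly world. By Lemma~\ref{lem:compl}, flipping the final and non-final states of $\A'$ yields a dOCA $\cA{\A'}$ with $\vL(\cA{\A'}) = \mathsf{WF}_\Sigma \setminus \vL(\A')$. Since this is the complement only relative to $\mathsf{WF}_\Sigma$ and not relative to $\Encs{\Sigma}$, I would then take the product $\bar{\A} \df \cA{\A'} \times \Benc$ with the four-state dOCA $\Benc$ of Lemma~\ref{lem:benc}. By Lemma~\ref{lem:prod} this product is again deterministic and satisfies $\vL(\bar{\A}) = \vL(\cA{\A'}) \cap \Encs{\Sigma}$; using $\Encs{\Sigma} \subseteq \mathsf{WF}_\Sigma$, this simplifies to $\vL(\bar{\A}) = \Encs{\Sigma} \setminus \vL(\A')$.

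It then remains to read off the observability language. Since $\vL(\bar{\A}) \subseteq \Encs{\Sigma}$, Lemma~\ref{lem:enc} gives $\oL(\bar{\A}) = \encmap^{-1}(\vL(\bar{\A})) = \encmap^{-1}(\Encs{\Sigma} \setminus \vL(\A'))$. Because $\encmap$ is a bijection between $(\Sigma \times \N)^\ast$ and $\Encs{\Sigma}$, the right-hand side equals $\encmap^{-1}(\Encs{\Sigma}) \setminus \encmap^{-1}(\vL(\A')) = (\Sigma \times \N)^\ast \setminus \encmap^{-1}(\vL(\A'))$; and applying Lemma~\ref{lem:enc} to $\A'$ (legitimate as $\vL(\A') \subseteq \Encs{\Sigma}$) gives $\encmap^{-1}(\vL(\A')) = \oL(\A') = \oL(\A)$. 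Hence $\oL(\bar{\A}) = (\Sigma \times \N)^\ast \setminus \oL(\A)$. For the complexity, the only exponential blow-up is in Theorem~\ref{lem:doca-compl}; flipping final states costs nothing, and the product with the constant-size $\Benc$ is polynomial, so $\bar{\A}$ is computed in exponential time.

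The one subtlety to get right is the mismatch between ``complement inside $\mathsf{WF}_\Sigma$'' (all Lemma~\ref{lem:compl} delivers for a dOCA) and ``complement inside $\Encs{\Sigma}$'' (what the encoding demands): one must both intersect with $\Benc$ to discard ill-encoded words and, conversely, keep the $\Benc$ component so that $\vL(\bar{\A}) \subseteq \Encs{\Sigma}$ still holds and Lemma~\ref{lem:enc} applies to translate back to the observability semantics. Everything else is a routine set-theoretic computation riding on the bijectivity of $\encmap$.
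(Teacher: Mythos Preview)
Your proof is correct and follows essentially the same route as the paper: determinize via Theorem~\ref{lem:doca-compl}, complement in the visibly world, intersect with $\Benc$, and translate back through the encoding. The only cosmetic difference is that the paper complements the intermediate automaton $\widetilde\A$ (from inside the proof of Theorem~\ref{lem:doca-compl}) \emph{before} taking the product with $\Benc$ and argues correctness via Lemma~\ref{rem:unique} on unique runs rather than via Lemma~\ref{lem:enc} and the bijectivity of $\encmap$; your black-box use of $\A'$ costs one extra (constant-size) $\Benc$ factor but is equally valid.
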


\begin{proof}
We first transform $\A$ into the dOCA
$\A' = \widetilde\A \mathrel{\times} \Benc$ according to (the proof of) Theorem~\ref{lem:doca-compl}.
Suppose $\widetilde\A = (Q,\Sigma,\init,F,\thr,\delta)$.
Then, we set
$\bar\A = (Q,\Sigma,\init,Q \setminus F,\thr,\delta) \mathrel{\times} \Benc$. Note that $\bar\A$ is indeed a dOCA and that $\vL(\bar\A) \subseteq \Encs{\Sigma}$.
For $w \in (\Sigma \times \N)^\ast$, we have:
\[
\begin{array}[b]{c}
w \in \oL(\bar\A) \stackrel{\textup{Lem.~\ref{rem:unique}}}{\Longleftrightarrow}  \rho_{\!\bar\A}(w) \text{ is accepting}
 \stackrel{\textup{($\ast$)}}{\Longleftrightarrow}  \rho_{\!\A'}(w) \text{ is not accepting} 
\stackrel{\textup{Lem.~\ref{rem:unique}}}{\Longleftrightarrow}  w \not\in \oL(\A')
\end{array}
\]
Equivalence ($\ast$) holds as $\rho_{\!\bar\A}(w)$ and $\rho_{\!\A'}(w)$ have the same projection to the $Q$-component.
\end{proof}

Determinization and complementation of \emph{extended} OCAs are a priori more expensive: Lemmas~\ref{lem:rpq} and \ref{lem:eoca} only apply to OCAs so that one has to go through Lemma~\ref{thm:oca-to-doca} first.

\subsection{Universality and Inclusion Problem wrt.\ Observability Semantics}
\label{sec:univ}

We are now ready to solve the universality and the inclusion problem for OCAs wrt.\ the observability semantics.
The \emph{universality problem} is defined as follows: Given an \OCO $\A$ over some alphabet $\Sigma$, do we have $\oL(\A) = (\Sigma \times \N)^\ast$\,? The \emph{inclusion problem} asks whether, given OCAs $\A_1$ and $\A_2$, we have $\oL(\A_1) \subseteq \oL(\A_2)$.

\begin{theorem}
The universality problem and the inclusion problem for \OCOs wrt.\ the observability semantics are PSPACE-complete. In both cases, PSPACE-hardness already holds when $|\Sigma|=1$.
\end{theorem}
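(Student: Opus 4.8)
The plan is to prove membership in PSPACE and PSPACE-hardness separately.

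\textbf{Upper bound (PSPACE).} For universality, given an \OCO $\A$ over $\Sigma$, I would apply Theorem~\ref{thm:complobs} to build a dOCA $\bar\A$ with $\oL(\bar\A) = (\Sigma \times \N)^\ast \setminus \oL(\A)$; then $\oL(\A) = (\Sigma \times \N)^\ast$ iff $\oL(\bar\A) = \emptyset$. The catch is that $\bar\A$ has exponential size, so one cannot write it down and then invoke Theorem~\ref{thm:emptiness} naively. Instead I would observe that the exponential blow-up is structured: the state space of $\bar\A$ (equivalently $\A' = \widetilde\A \times \Benc$) is a product $2^Q \times \hat Q$, where $\hat Q$ is the state space of $\B_\SetGuards$ from Lemma~\ref{lem:BC}, itself a product over the (polynomially many, polynomially bounded) arithmetic-progression formulas arising in $\eA{\A}$ via Lemma~\ref{lem:rpq}. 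Hence each state of $\bar\A$ has a polynomial-size description, and both the transition function and the final-state test are computable in polynomial time from such descriptions. A nondeterministic reachability search for a final configuration of $\bar\A$ therefore only needs to store a current (polynomial-size) state together with a counter value bounded by $\thr_\SetGuards$ (again polynomial), so it runs in NPSPACE $=$ PSPACE. For the inclusion problem $\oL(\A_1) \subseteq \oL(\A_2)$, I would instead take $\A_1' = \widetilde{\A_1} \times \Benc$ and $\bar{\A_2}$ (both as above, with $\vL$ contained in $\Encs{\Sigma}$), form their product via Lemma~\ref{lem:prod}, and apply Proposition~\ref{prop:intersect}: $\oL(\A_1' \times \bar{\A_2}) = \oL(\A_1) \cap ((\Sigma \times \N)^\ast \setminus \oL(\A_2))$, which is empty iff inclusion holds; the same on-the-fly NPSPACE reachability argument applies since the product again has polynomially describable states and a polynomial threshold.

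\textbf{Lower bound (PSPACE-hardness, already for $|\Sigma| = 1$).} I would reduce from universality of nondeterministic finite automata, which is PSPACE-complete even over a unary alphabet. Given an NFA $N$ over a one-letter alphabet $\{a\}$, I want an \OCO $\A$ with $\Sigma = \{a\}$ such that $\oL(\A) = (\{a\} \times \N)^\ast$ iff $L(N) = \{a\}^\ast$. The idea: a word in $(\{a\} \times \N)^\ast$ is a sequence $(a,x_1)(a,x_2)\ldots(a,x_n)$; the counter value reachable before the $i$-th $a$ is any natural number, so we need $\A$ to be able to emit $(a,x)$ for an \emph{arbitrary} $x$ between consecutive letters and simulate one step of $N$ at each $a$-transition. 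Concretely I would take the states of $N$, add a self-loop $\inc$ and, from each state, a ``reset'' gadget that decrements back to $0$ (using the zero-test available with $\thr = 1$) before each simulated letter, and let the $a$-transitions of $\A$ follow the transition relation of $N$; the final states of $\A$ are the final states of $N$ after returning the counter to any level. Since between two $a$'s the counter can be raised to any $x$ and then lowered again, $\oL(\A)$ contains $(a,x_1)\ldots(a,x_n)$ exactly when $a^n \in L(N)$, independent of the $x_i$. Thus $\oL(\A)$ is universal iff $L(N) = \{a\}^\ast$, giving the reduction. The construction is clearly polynomial.

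\textbf{Main obstacle.} The delicate point is the PSPACE upper bound: one must avoid materializing the exponential-size deterministic automaton and instead argue that all relevant objects (states of $\B_\SetGuards$, states of the powerset/product, the threshold $\thr_\SetGuards$) admit polynomial-size encodings with polynomial-time-computable successor and acceptance relations, so that the emptiness check can be done by an NPSPACE-bounded on-the-fly search; carefully tracking these size bounds through Lemmas~\ref{lem:rpq}, \ref{lem:eoca}, \ref{lem:BC}, \ref{thm:oca-to-doca} and Theorems~\ref{lem:doca-compl}, \ref{thm:complobs} is where the real work lies, whereas the hardness reduction is routine.
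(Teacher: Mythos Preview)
Your overall strategy for the upper bound matches the paper's: build the complement on the fly and run an NPSPACE reachability search on polynomial-size state descriptions. However, one step is wrong. You write that the counter value to be stored is ``bounded by $\thr_\SetGuards$ (again polynomial)''. The threshold $\thr_\SetGuards$ only bounds what the automaton can \emph{test}; the counter itself may exceed it arbitrarily along an accepting run. The correct argument (as in the paper) is that nonemptiness of an OCA admits a witness in which the counter stays polynomially bounded in the \emph{size of that OCA}; since $\bar\A$ has exponential size, the counter values needed are at most exponential in $|\A|$ and therefore fit in polynomially many bits. Without this, your NPSPACE bound does not follow.

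The more serious gap is in the lower bound. You assert that NFA universality ``is PSPACE-complete even over a unary alphabet'' and reduce from that. This is false: universality (and equivalence, inclusion) for \emph{unary} NFAs is only coNP-complete, so your reduction establishes at best coNP-hardness, not PSPACE-hardness. The paper avoids this by reducing from NFA universality over an \emph{arbitrary} finite alphabet $\Gamma=\{a_0,\ldots,a_{n-1}\}$ (which is PSPACE-complete) and encoding the letter $a_i$ by the counter value $i$: an $a_i$-transition of the NFA is replaced by a gadget that nondeterministically outputs $(a,i)$ or $(a,x)$ for any $x\ge n$, so that the resulting OCA over the singleton alphabet $\Sigma=\{a\}$ is universal iff the original NFA is. Your construction, which ignores the counter values entirely and only counts the number of $a$'s, cannot recover the PSPACE lower bound.
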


\begin{proof}
To solve the universality problem for a given \OCO $\A = (Q,\Sigma,\init,F,\thr,\Delta)$ in (nondeterministic) polynomial space, we apply the construction from Theorem~\ref{thm:complobs} (and, in particular, Theorem~\ref{lem:doca-compl}) on the fly to obtain a dOCA $\bar\A$ such that $\oL({\bar{\A}}) = (\Sigma \times \N)^\ast \setminus \oL(\A)$. That is, we have to keep in memory a state of the form $(P,q,r)$, where $P \subseteq Q$, $q$ is the modulo-counting component (Lemma~\ref{lem:BC}), and $r$ is a state of $\Benc$ (Lemma~\ref{lem:benc}). In addition, we will maintain a component for the current counter value. In fact, the latter can be supposed to be polynomially bounded (cf.\ \cite{CCHPW-fossacs16} for a tight upper bound) in the size of $\bar\A$. The size of $\bar\A$ is exponential in the size of $\A$, and so the required information can be stored in polynomial space. To compute a successor state of $(P,q,r)$, we first guess an operation $\sigma \in \Sigma \cup \Op$. We then compute $(P',q')$ according to the proof of Lemma~\ref{thm:oca-to-doca}
and update $r$ to $r'$ according to the type of $\sigma$. Note that this takes polynomial time only, since the function $\eval$ as required in Lemma~\ref{lem:BC} can be computed on the fly. Finally, the algorithm outputs ``non-universal'' when we find a final state of $\bar\A$.

For the inclusion problem, we rely on Proposition~\ref{prop:intersect} and perform the determinization procedure on-the-fly for \emph{both} of the given \OCOs.

\smallskip

For the lower bound, we will restrict to the universality problem, since it is a special case of the inclusion problem. We reduce from the universality problem for ordinary finite automata, which is known to be PSPACE-complete \cite{MeyerS72}.
If we suppose that $\Sigma$ is part of the input, then there is a straightforward reduction, which essentially takes the (ordinary) finite automaton
and adds self-looping increment/decrement transitions to each state.
Assuming $|\Sigma|=1$, the reduction is as follows. Let $\A$ be a finite automaton over some finite alphabet $\Gamma = \{a_0,\ldots,a_{n-1}\}$. We construct an \OCO $\A'$ over the singleton alphabet $\Sigma$ such that $L(\A) = \Gamma^\ast$ iff $L(\A') = (\Sigma \times \N)^\ast$. The idea is to represent letter $a_i$ by (counter) value $i$. To obtain $\A'$, an $a_i$-transition in $\A$ is replaced with a gadget that nondeterministically outputs $i$ or any other natural number strictly greater than $n-1$.
\end{proof}

\section{Relation with Strong Automata}\label{sec:strong-aut}

\newcommand{\varx}{\mathsf{x}}
\newcommand{\vary}{\mathsf{x}'}
\newcommand{\varyy}{\mathsf{y}}
\newcommand{\marker}{\$}
\newcommand{\markerone}{\$_1}
\newcommand{\markertwo}{\$_2}
\newcommand{\enum}{\eta}
\newcommand{\transform}{\Phi}
\newcommand{\Forms}{\mathcal{F}}
\newcommand{\initform}{\Psi}
\newcommand{\incform}{\mu}
\newcommand{\mymatrix}[4]{\smash{[\begin{array}{cc}\scriptstyle{#3} &\! \scriptstyle{#4}\\[-1.6ex]\scriptstyle{#1} &\! \scriptstyle{#2}\end{array}]}}
\newcommand{\transformp}[4]{\smash{\transform[\begin{array}{cc}\scriptstyle{#3} &\! \scriptstyle{#4}\\[-1.6ex]\scriptstyle{#1} &\! \scriptstyle{#2}\end{array}]}}
\newcommand{\initformp}[2]{\smash{\initform[\begin{array}{c}\scriptstyle{#2}\\[-1.7ex]\scriptstyle{#1}\end{array}]}}

In this section, we show that OCAs with counter observability are expressively equivalent to strong automata over $(\N,+1)$ \cite{CzybaST15}. As the latter are descriptive in spirit, OCAs can thus be seen as their operational counterpart.

Let us first give a short account of monadic-second order (MSO) logic over $(\N,+1)$ (see \cite{Tho97handbook} for more details). We have infinite supplies of first-order variables, ranging over $\N$, and second-order variables, ranging over subsets of $\N$. The atomic formulas are $\true$, $\vary = \varx + 1$, $\vary = \varx$, and $\varx \in \mathsf{X}$ where $\varx$ and $\vary$ are first-order variables and $\mathsf{X}$ is a second-order variable. Those formulas have the expected meaning. Further, MSO logic includes all boolean combinations, first-order quantification $\exists \varx \transform$, and second-order quantification $\exists \mathsf{X} \transform$. The latter requires that there is a (possibly infinite) subset of $\N$ satisfying $\transform$. As abbreviations, we may also employ $\vary = \varx - 1$ and formulas of the form $\varx' \in (\varx + c + d\N)$, where $c,d \in \N$. This does not change the expressive power of MSO logic. 

In the following, we assume that $\varx$ and $\vary$ are two distinguished first-order variables. We write $\transform(\varx,\vary)$ to indicate that the free variables of $\transform$ are among $\varx$ and $\vary$. If $\transform(\varx,\vary)$ is evaluated to true when $\varx$ is interpreted as $x \in \N$ and $\vary$ is interpreted as $x' \in \N$, then we write $(x,x') \models \transform$.
In fact, a transition of a strong automaton is labeled with a formula $\transform(\varx,\vary)$ and can only be executed if $(x,x') \models \transform$ where $x$ and $x'$ are the natural numbers read at the previous and the current position, respectively. Thus, two successive natural numbers in a word can be related explicitly in terms of an MSO formula.

\begin{definition}[\!\!\cite{CzybaST15}]\label{def:strong-aut}
A \emph{strong automaton} is a tuple $\SA = (Q,\Sigma,\init,F,\Delta)$ where
$Q$ is the finite set of \emph{states}, $\Sigma$ is a nonempty finite alphabet, $\init \in Q$ is the \emph{initial state}, and $F \subseteq Q$ is the set of \emph{final states}. Further, $\Delta$ is a \emph{finite} set of \emph{transitions}, which are of the form $(q,\transform,a,q')$ where $q,q' \in Q$ are the source and the target state, $a \in \Sigma$, and $\transform(\varx,\vary)$ is an MSO formula.
\end{definition}

Similarly to an OCA, $\SA$ induces a relation
${\srelp{\,\SA}} \subseteq \mathit{Conf}_{\!\SA} \times (\Sigma \times \N) \times \mathit{Conf}_{\!\SA}$, where $\mathit{Conf}_{\!\SA} = Q \times \N$.
For $(q,x),(q',x') \in \mathit{Conf}_{\!\SA}$ and $(a,y) \in \Sigma \times \N$, we have
$(q,\cv) \xRightarrow{(a,y)\,}_{\SA} (q',\cv')$ if $y = x'$ and there is an MSO formula $\transform(\varx,\vary)$ such that $(q,\transform,a,q') \in \Delta$ and $(x,x') \models \transform$.
A \emph{run} of $\SA$ on $w = (a_1,x_1) \ldots (a_n,x_n) \in (\Sigma \times \N)^\ast$ is a sequence
$\rho = (q_0,x_0) \xRightarrow{(a_1,x_1)\,}_{\SA} (q_1,\cv_1) \xRightarrow{(a_2,x_2)\,}_{\SA} \ldots \xRightarrow{(a_n,x_n)\,}_{\SA} (q_n,\cv_n)$
such that $q_0 = \init$ and $x_0 = 0$. It is \emph{accepting} if $q_n \in F$.

The language $L(\SA) \subseteq (\Sigma \times \N)^\ast$ of $\SA$ is defined as the set of words $w \in (\Sigma \times \N)^\ast$ such that there is an accepting run of $\SA$ on $w$.

\begin{figure}[t]
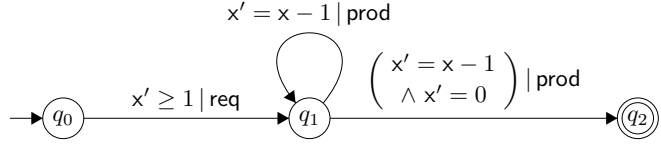

\begin{center}
\scalebox{0.9}{
\begin{gpicture}
\gasset{Nframe=y,Nw=5,Nh=5,Nmr=8,ilength=4,flength=4,AHangle=30} 

\node[Nmarks=i](q0)(0,0){$q_0$}
\node(q1)(30,0){$q_1$}
\node[Nmarks=r](q2)(70,0){$q_2$}

\drawedge(q0,q1){$\varx' \ge 1\,|\,\req$}
\drawloop[ELside=r,loopCW=n,loopdiam=8,loopangle=90](q1){$\varx' = \varx - 1\,|\,\ack$}
\drawedge[ELside=l](q1,q2){$\left(\!\begin{array}{c}\varx' = \varx - 1\\{\wedge}~\varx'=0\end{array}\!\right)|\,\ack$}

\end{gpicture}
}
\end{center}
\caption{A strong automaton over $(\N,+1)$\label{fig:strong-aut}}
\end{figure}

\begin{example}
We refer to the OCA $\A$ from Example~\ref{ex:job}. Figure~\ref{fig:strong-aut} depicts a strong automaton $\SA$ such that $L(\SA) = \oL(\A) = \{(\req,n)(\ack,n-1)(\ack,n-2) \ldots (\ack,0) \mid n \ge 1\}$.
\end{example}

In fact, we can transform any OCA into an equivalent strong automaton preserving the observability semantics, and vice versa:

\begin{theorem}\label{thm:strong-aut}
Let $L \subseteq (\Sigma \times \N)^\ast$.
There is an OCA $\A$ such that $\oL(\A) = L$ iff
there is a strong automaton $\SA$ such that $L(\SA) = L$.
\end{theorem}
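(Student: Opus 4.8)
\textbf{Proof plan for Theorem~\ref{thm:strong-aut}.}
The plan is to prove both directions of the equivalence by direct simulation, using the encoding $\encmap$ and the earlier determinization machinery to control what a run of an OCA looks like.

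For the direction ``OCA $\Rightarrow$ strong automaton'', I would start from an arbitrary OCA $\A$ with $\oL(\A) = L$ and first apply Theorem~\ref{lem:doca-compl} to get a dOCA $\A'$ with $\oL(\A') = L$ and $\vL(\A') \subseteq \Encs{\Sigma}$. The advantage of this normal form is that, by Lemma~\ref{rem:unique}, for each word $w \in (\Sigma \times \N)^\ast$ there is exactly one candidate run $\rho_{\!\A'}(w)$, and it is accepting iff $w \in L$. Moreover, between consecutive $\Sigma$-letters, this run performs a uniform block $\inc^{|x_{i+1}-x_i|}$ or $\dec^{|x_{i+1}-x_i|}$ (or nothing), and a pure-increment prefix of length $x_1$. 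The key observation is that the state reached after processing the block between position $i$ (value $x_i$) and position $i+1$ (value $x_{i+1}$), as a function of the state $q_i$ at position $i$ and the pair $(x_i,x_{i+1})$, depends only on: the sign of $x_{i+1}-x_i$, the residues of the intermediate counter values modulo the (finitely many) thresholds and moduli appearing in $\A'$ — but since $\A'$ is an ordinary OCA, only the threshold $\thr$ matters — and in fact only on $\min\{x_i,\thr\}$, $\min\{x_{i+1},\thr\}$, and whether $x_{i+1} \gtrless x_i$. Hence, for each ordered pair of states $(q,q')$ of $\A'$ and each finite amount of ``boundary data'', the set of pairs $(x,x')$ such that the block from $(q,x)$ reaches $(q',x')$ in $\A'$ is a finite boolean combination of constraints of the form $x \in c + d\N$, $x \le \thr$, $x' = x \pm k$, $x' \le x$, etc. — all expressible as an MSO formula $\transform(\varx,\vary)$ over $(\N,+1)$, using the abbreviations the paper already grants us. I would then build the strong automaton $\SA$ with the same state set as $\A'$, a fresh initial state handling the first block $(0,x_1)$, final states $F$, and a transition $(q,\transform_{q,q',a},a,q')$ whenever $\A'$ has a matching block-then-$a$-transition; correctness follows because runs of $\SA$ on $w$ are in bijection with the partition of $\rho_{\!\A'}(w)$ into blocks.

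For the direction ``strong automaton $\Rightarrow$ OCA'', I would go through $\encmap$ again: given a strong automaton $\SA$ with $L(\SA) = L$, I want an OCA $\A$ with $\oL(\A) = L$, and by Lemma~\ref{lem:enc} it suffices to build $\A$ so that $\vL(\A) = \enc{L} \subseteq \Encs{\Sigma}$. Reading $\enc{w}$, the OCA sees the blocks $\inc^{x_1} a_1\, \textup{sign}(x_2-x_1)^{|x_2-x_1|} a_2 \cdots$ and therefore has the counter value $x_i$ in hand exactly when it is about to read $a_i$; the difficulty is that a transition $(q,\transform,a,q')$ of $\SA$ constrains $(x_{i-1},x_i)$ by an arbitrary MSO formula, whereas an OCA only has threshold tests. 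Here I would invoke the classical fact that every MSO-definable subset of $(\N,+1)$, hence every MSO-definable binary relation $\transform(\varx,\vary) \subseteq \N \times \N$ decomposed along $\varx < \vary$, $\varx = \vary$, $\varx > \vary$, is eventually periodic, so $\transform$ is equivalent to a finite boolean combination of atoms $\varx \in c+d\N$, $\vary \in c'+d'\N$, $\vary = \varx + k$, $\vary = \varx - k$, and threshold comparisons; this is exactly the kind of guard an \emph{extended} OCA (Definition~\ref{def:eoco}) together with modulo-counting gadgets can check. So I would first construct an eOCA that, while traversing the block between $a_{i-1}$ and $a_i$, tracks $x_{i-1} \bmod d$ for all relevant $d$, records the sign and length class of the block, and upon reading $a_i$ fires a $\SA$-transition whose formula is satisfied; then apply Lemma~\ref{thm:oca-to-doca} (the nondeterministic version suffices) to turn the eOCA into an ordinary OCA, and intersect with $\Benc$ via Lemma~\ref{lem:prod} to ensure $\vL \subseteq \Encs{\Sigma}$. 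By Lemma~\ref{lem:enc}, $\oL$ of the result is $L$.

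The main obstacle is the ``strong automaton $\Rightarrow$ OCA'' direction, specifically the eventual-periodicity argument that reduces an arbitrary MSO formula $\transform(\varx,\vary)$ to a modulo/threshold boolean combination in a way that can be checked by a one-pass counter machine reading $\enc{w}$: one must be careful that during a $\dec$-block the counter \emph{decreases}, so the automaton must anticipate, via its finite control and modulo components, the value $x_i$ it will reach at the \emph{end} of the block, and that the formula relates the start value $x_{i-1}$ (already known) to that end value — a two-sided constraint that has to be resolved before the block is read. I would handle this by having the eOCA guess the residue class and comparison outcome at the start of the block and then verify the guess as it decrements, rejecting on mismatch; a short lemma stating that $\Guardsmod$-definable sets are exactly the eventually-periodic sets, and that MSO over $(\N,+1)$ defines only such sets (both standard, via the Büchi–Elgot–Trakhtenbrot theorem or directly), supplies the needed normal form. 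The remaining bookkeeping — the fresh initial block for $x_1$, well-formedness of encodings, and the polynomial bound on thresholds — is routine given Lemmas~\ref{lem:benc}, \ref{lem:enc}, \ref{lem:BC}, and \ref{thm:oca-to-doca}.
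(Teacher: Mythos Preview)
Your plan is viable but departs from the paper in both directions. For ``$\Longrightarrow$'' the paper does not go through Theorem~\ref{lem:doca-compl} at all: it works with the original OCA and shows directly that, for each pair of states $q,q'$, the full oscillating reachability relation $\{(x,x') \mid (q,x) \,(\srelpp{\A}{\sinc\;} \cup \srelpp{\A}{\sdec\;})^\ast\, (q',x')\}$ is MSO-definable over $(\N,+1)$, by simulating $\A$ with a two-way $\omega$-automaton (positions are counter values, two markers encode $x$ and~$x'$) and invoking B\"uchi's theorem. Your route instead flattens oscillation via the normal form of Section~\ref{sec:det} and then argues that the pure-$\inc$/pure-$\dec$ block transition is MSO-definable---a legitimate trade of the two-way-automata argument for the heavier Section~\ref{sec:det} machinery. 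For ``$\Longleftarrow$'' the paper also lands on an eOCA, but rather than rewriting each $\transform$ as a boolean combination of arithmetic atoms as you propose, it builds (via B\"uchi) an ordinary dOCA $\C_\transform$ over $\{\markerone,\markertwo\}$ whose visible language is $\{\inc^{x}\markerone\,\textup{sign}(x'{-}x)^{|x'-x|}\,\markertwo \mid (x,x')\models\transform\}$, and the eOCA then simulates the appropriate $\C_\transform$ during each block, using $\Guardsmod$-guards only to guess the correct entry state. The ``anticipation'' problem you single out for $\dec$-blocks is exactly the delicate step there, solved by an explicit modular trick (counting the number of $d_2$-loops taken before $\markerone$ modulo $d_1$ and enforcing the same residue after).

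One concrete point in your ``$\Longrightarrow$'' analysis needs repair: the state reached after a monotone block in an $\thr$-dOCA does \emph{not} depend only on $\min\{x_i,\thr\}$, $\min\{x_{i+1},\thr\}$, and the sign of $x_{i+1}-x_i$. Once the counter exceeds~$\thr$, every test returns~$\thr$ and the state evolves by iterating a fixed map, so the end state also depends on $|x_{i+1}-x_i|$ modulo the period of that map (a two-state $0$-dOCA that toggles on each $\inc$ already witnesses this). Your next sentence does list atoms of the form $x\in c+d\N$ and $x'=x\pm k$, which are enough to capture this periodicity, so the overall plan survives; but the ``only on'' claim as stated is false and should be replaced by an eventual-periodicity argument in the block length.
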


\begin{proof}
``$\Longrightarrow$'': This is the easier direction. Using the following observation, we can directly transform an OCA into a strong automaton: For all states $q$ and $q'$ of a given OCA $\A$, there is an MSO formula $\transform_{q,q'}(\varx,\vary)$ such that, for all $x,x' \in \N$, we have $(x,x') \models \transform_{q,q'}$ iff $(q,\cv) \mathrel{\smash{\bigl(\srelpp{\A}{\sinc\;} \cup \srelpp{\A}{\sdec\;}\bigr)^\ast}} (q',\cv')$. The existence of $\transform_{q,q'}$ can be shown using a two-way automaton over infinite words \cite{Pecuchet85}, which simulates $\A$ and can be translated into an MSO formula \cite{Tho97handbook}.
We refer to Appendix~\ref{app:strong-aut} for more details.

\medskip

\noindent``$\Longleftarrow$'':
We will transform a strong automaton $\SA$ into an \emph{extended} OCA $\A$ such that $\oL(\A) = L(\SA)$. The main ingredient of $\A$ is a dOCA $\C_\transform$, with $\transform(\varx,\vary)$ an MSO formula, that, starting at 0, goes straight to two counter values $x$ and $x'$, and decides whether $(x,x') \models \transform$ or not:

\begin{claim}\label{cl:Cphi}
Let $\transform(\varx,\vary)$ be an MSO formula. There is a dOCA $\C_\transform$ over the alphabet $\{\markerone,\markertwo\}$ such that $\vL(\C_\transform) = {L}_1 \cup {L}_2$ where
\begin{itemize}\itemsep=0.5ex
\item ${L}_1 = \{\inc^x \markerone \inc^y \markertwo \mid x,y \in \N \text{ such that }  (x,x+y) \models \transform\}$ and

\item ${L}_2 = \{\inc^{x+y} \markerone \dec^y \markertwo \mid x,y \in \N \text{ such that } (x+y,x) \models \transform\}$\,.
\end{itemize}
\end{claim}

\noindent\textit{Proof of Claim~\ref{cl:Cphi}.}~
By Lemma~\ref{lem:det}, it is actually sufficient to come up with a \emph{nondeterministic} OCA. Using B{\"u}chi's theorem (cf.\ \cite{Tho97handbook}), we first transform $\transform(\varx,\vary)$ into
\emph{finite automata} $\F_1$ and $\F_2$
recognizing the following (regular) languages over the alphabet $\{\markerone,\markertwo,\Box\}$:
\begin{itemize}\itemsep=0.5ex
\item $L(\F_1) = \{\Box^x \markerone \Box^{y} \markertwo  \mid x,y \in \N \text{ such that } (x,x + y) \models \transform\}$

\item $L(\F_2) = \{\Box^x \markertwo \Box^{y} \markerone  \mid x,y \in \N \text{ such that } (x+y,x) \models \transform\}$
\end{itemize}
OCAs are easily seen to be closed under union (wrt.\ any semantics). So, it is enough to translate $\B_1$ and $\B_2$ into OCAs $\C_1$ and $\C_2$ such that $\vL(\C_1) = L_1$ and $\vL(\C_2) = L_2$.
While the construction of $\C_1$ is immediate, it is less obvious how to transform $\B_2$ into a suitable $\C_2$.

Note that $L(\B_2)$ can be written as the finite union of sets $\{\Box^{x} \markertwo \Box^{y} \markerone \mid x \in \sem{\modform{c_1}{d_1}}$ and $y \in \sem{\modform{c_2}{d_2}}\}$ with $c_1,d_1,c_2,d_2 \in \N$.
This is achieved by determinizing $\F_2$ and splitting it into components as illustrated in the upper part of Figure~\ref{fig:Bphi}. As we can of course handle finite unions, we assume that $\F_2$ is of that form.
For simplicity, we also suppose $d_1,d_2 \ge 1$.

\begin{figure}[t]
\begin{center}
\scalebox{0.9}{
\begin{gpicture}
  \gasset{Nframe=y,Nw=4,Nh=4,Nmr=8}
  \unitlength=0.8mm
  \node[Nframe=n](n3)(-13,0){$\F_2$}
  \node[iangle=180,Nmarks=i](q0)(0,0){}
  \node(q1)(25,0){}
  \node(q2)(50,0){}
  \node(q3)(75,0){}
  \node[Nmarks=r](q4)(100,0){}    
  \drawedge(q0,q1){$\Box^{c_1}$}
  \drawedge(q1,q2){$\markertwo$}
  \drawedge(q2,q3){$\Box^{c_2}$}
  \drawedge(q3,q4){$\markerone$}
  \drawloop[loopCW=n,loopdiam=6,loopangle=90,ELside=r](q1){~~$\Box^{d_1}$}
  \drawloop[loopCW=n,loopdiam=6,loopangle=90,ELside=r](q3){~~$\Box^{d_2}$}

\put(0,-15){
  \node[Nframe=n](n3)(-13,0){$\C_2$}
  \node[iangle=180,Nmarks=i](q0)(0,0){}
  \node(q1)(25,0){}
  \node(q2)(50,0){}
  \node(q3)(75,0){}
  \node(q4)(100,0){}    
  \drawedge(q0,q1){${\inc}^{c_1}$}
  \drawedge(q1,q2){$\epsilon$}
  \drawedge(q2,q3){${\inc}^{c_2}$}
  \drawedge(q3,q4){$\markerone$}
  \drawloop[loopCW=y,loopdiam=6,loopangle=-90,ELside=l](q1){~~${\inc}^{d_1}$}
  \drawloop[loopCW=y,loopdiam=6,loopangle=-90,ELside=l](q3){~~${\inc}^{d_2}$}

  \node(q5)(125,0){}
  \node[Nmarks=r](q6)(150,0){}      
  \drawedge(q4,q5){${\dec}^{c_2}$}
  \drawedge(q5,q6){$\markertwo$}
  \drawloop[loopCW=y,loopdiam=6,loopangle=-90,ELside=l](q5){~~${\dec}^{d_2}$}
  
  \node[Nframe=n](q5)(85,-9){$\left.\begin{array}{c}~\\[0.4ex]~\end{array}\right\}\scriptstyle{n_2\textup{-times}}$}
  \node[Nframe=n](q5)(115,-9){$\scriptstyle{n_2'\textup{-times}}\left\{\begin{array}{c}~\\[0.4ex]~\end{array}\right.$}
  \node[Nframe=n](n3)(100,-14){$\scriptstyle{n_2 \;\mathrel{\text{mod}}\; d_1 \,=\, n_2' \;\mathrel{\text{mod}}\; d_1}$}
  \node[Nframe=n](n4)(115,8){$\overbrace{~~~~~~~~~~~~~~~~~}^{\textup{counter value} \;\ge\; c_1}$}
}
\end{gpicture}
}
\end{center}
\caption{Transformation of $\B_2$ into $\C_2$\label{fig:Bphi}}
\end{figure}

From $\F_2$, we obtain the $c_1$-OCA $\C_2$ for $L_2$ depicted in the lower part of Figure~\ref{fig:Bphi}. It works as follows: It first produces $\inc^{x+y}\markerone$, where $x = c_1 + d_1n_1$ and $y = c_2 + d_2n_2$ for some $n_1,n_2 \in \N$. In doing so, it counts, modulo $d_1$, how often it traverses the ${\inc}^{d_2}$-loop. That is, after reading $\markerone$, it will be aware of $k \df n_2 \mathrel{\text{mod}} d_1$. It then continues reading $\dec^{z} \markertwo$ where $z = c_2 + d_2n_2'$ for some $n_2' \in \N$ such that $k = n_2' \mathrel{\text{mod}} d_1$ and $x + y - z \ge c_1$ (recall that we define a $c_1$-OCA).

We claim that
$(x+y,x+y-z) \models \transform$, which shows $\vL(\C_2) \subseteq L_2$. Since $n_2 \mathrel{\text{mod}} d_1 = n_2' \mathrel{\text{mod}} d_1 =: k$, we have $n_2 = m_1 d_1 + k$ and $n_2' = m_2 d_1 + k$ for some $m_1,m_2 \in \N$. Thus, 
\[\begin{array}{rl}
x + y - z\!\!\!\!
& =\, c_1 + d_1n_1 + c_2 + d_2n_2 - (c_2 + d_2n_2')\\
& =\, c_1 + d_1n_1 + c_2 + d_2(m_1 d_1 + k) - c_2 - d_2(m_2 d_1 + k)\\
& =\, c_1 + d_1n_1 + d_2 m_1 d_1 - d_2 m_2 d_1\\
& =\, c_1 + d_1(n_1 + d_2 m_1 - d_2 m_2) \;\in\; \sem{\modform{c_1}{d_1}}\,.
\end{array}
\]
Note that membership in $\sem{\modform{c_1}{d_1}}$ holds due to $x + y - z \ge c_1$.
We obtain $\Box^{x + y - z} \markertwo \Box^{z} \markerone \in L(\F_2)$ and, therefore, $(x + y,x + y - z) \models \Phi$.

It remains to prove $L_2 \subseteq \vL(\C_2)$.
Suppose $x,y \in \N$ such that $(x+y,x) \models \transform$. We have $\Box^{x} \markertwo \Box^{y} \markerone \in L(\F_2)$. Thus, $x \in \sem{\modform{c_1}{d_1}}$ and $y \in \sem{\modform{c_2}{d_2}}$. But this implies $\inc^{x+y} \markerone \dec^y \markertwo \in \vL(\C_2)$: The automaton $\C_2$ takes the same number of $d_2$-loops before and after reading $\markerone$, namely $(y-c_2)/d_2$ many. \qed$_{\textup{Claim~\ref{cl:Cphi}}}$

\medskip

We now turn to the actual translation of a strong automaton $\SA = (Q,\Sigma,\init,F,\Delta)$ into an eOCA $\A$ such that $\oL(\A) = L(\SA)$. Let $\Forms$ denote the set of MSO formulas that occur in the transitions of $\SA$. Without loss of generality, we assume that $\Forms$ contains $\false \df \neg\true$.

For all $\transform \in \Forms$, we compute the dOCA $\C_\transform = (Q_\transform,\{\markerone,\markertwo\},\init_\transform,F_\transform,m_\transform,\delta_\transform)$ according to Claim~\ref{cl:Cphi}.
Without loss of generality, we assume that there is $m \in \N$ such that $m = m_\transform$ for all $\transform \in \Forms$. That is, $\delta_\transform$ is of the form $Q_\transform \times \setz{m} \times (\{\markerone,\markertwo\} \mathrel{\cup} \Op) \to Q_\transform$ (Definition~\ref{def:doca}). We will also assume that the sets $(Q_\transform)_{\transform \in \Forms}$ are pairwise disjoint.
For $r \in Q_\transform$, we can easily compute a formula $\incform_r \in \Guardsmod$ such that $\sem{\incform_r} =
\{x \in \N \mid (\init_\transform,0) \mathrel{\bigl(\srelpp{\,\C_\transform}{\sinc\;}\bigr)^x} (r,x)\}$.

We now define the eOCA $\A = (Q',\Sigma,I,\fmap,\Delta')$ satisfying $\oL(\A) = L(\SA)$. Note that we use a set of initial states $I \subseteq Q'$ instead of one initial state; it is easily seen that this extension does not add expressive power to eOCAs. Whenever $\A$ outputs a counter value $x$ (and at the very beginning of an execution), it guesses (i) a formula $\transform$ that will label the next transition of $\SA$ to be simulated, as well as (ii) an \emph{entrance point} in $\C_\transform$. The latter is a state $r$ of $\C_\transform$ that can be reached from $\init_\transform$ by reading $\inc^x$ (which will be guaranteed by the guard $\incform_r$). Then, $\A$ will continue simulating $\C_\transform$ until another counter value $x'$ is output, so that $\A$ can determine whether $(x,x') \models \transform$.

Accordingly, a state of $\A$ is a tuple $(p,\transform,q) \in Q' = Q \times \Forms \times \bigl(\bigcup_{\transform \in \Forms} Q_\transform\bigr)$, where $p$ represents the current state of $\SA$, $\transform$ is the guessed next transition formula, and $q$ is the current state of $\C_\transform$.
The set of initial states is $I = \{(\init,\transform,\delta_\transform(\init_\transform,0,\markerone)) \mid \transform \in \Forms\}$. The guess $\transform = \false$ is used to signalize that no further transition will be taken in $\SA$. Thus, the acceptance condition $\fmap$ maps every state $(p,\transform,q)$ such that $p \in F$ and $\transform = \false$ to $\true$, and all other states to $\neg\true$. It remains to define the transition relation $\Delta'$:
\begin{itemize}\itemsep=1ex
\item $\Sigma$-transitions:  For all $(p,\transform,a,p') \in \Delta$ with $\transform \neq \false$, all $k \in \setz{m}$ (recall that $\C_\transform$ is an $m$-dOCA), all $q \in  Q_{\transform}$ such that $\delta_\transform(q,k,\markertwo) \in F_\transform$, all $\transform' \in \Forms$, and all $r \in Q_{\transform'}$, we have a transition
$(p,\transform,q) \xrightarrow{\incform_r \wedge \kform{k}\,|\, a\,} (p',\transform',\delta_{\transform'}(r,k,\marker_1))$, with $\kform{k}$ as in the proof of Lemma~\ref{lem:eoca}.

\item $\Op$-transitions: For all $(p,\transform,q) \in Q'$ with $\transform \neq \false$, all $k \in \setz{m}$, and all $\op \in \Op$, we have a transition
$(p,\transform,q) \xrightarrow{\kform{k}\,|\,\op\,} (p,\transform,\delta_\transform(q,k,\op))$.
\end{itemize}
The proof of $\oL(\A) = L(\SA)$ can be found in Appendix~\ref{app:strong-aut}.
\end{proof}

\section{Conclusion}\label{sec:conclusion}

The observability semantics opens several directions for follow-up work.
We may carry it over to other classes of infinite-state systems such as  Petri nets. Are there infinite-state restrictions of Petri nets other than 1-VASS whose visibly semantics is robust?

A direct application of our results is that the language $L \subseteq (\Sigma \times \N)^\ast$ of an OCA with observability semantics/strong automaton is learnable (in the sense of Angluin) in terms of a visibly one-counter automaton for $\enc{L}$ \cite{NeiderLoeding10}. It would be worthwhile to transfer results on visibly one-counter/pushdown automata that concern Myhill-Nerode congruences or minimization \cite{AlurKMV05,ChervetW07}.

Another interesting question is to which extent we can relax the requirement that the counter value be output with every letter $a \in \Sigma$. It may indeed be possible to deal with a bounded number of $\Sigma$-transitions between any two counter outputs. Note that there have been relaxations of the visibility condition in pushdown automata, albeit preserving closure under boolean operations \cite{NowotkaS07}.

\paragraph*{Acknowledgments.}

The author is grateful to C.\ Aiswarya, Stefan G{\"{o}}ller, Christoph Haase, and Arnaud Sangnier for numerous helpful discussions and pointers to the literature.

\bibliography{lit}


\appendix

\clearpage


\section{Proof of Lemma~\ref{lem:prod}}\label{app:prod}

Let $\A_1 = (Q_1,\Sigma,\init_1,F_1,\thr_1,\Delta_1)$ and $\A_2 = (Q_2,\Sigma,\init_2,F_2,\thr_2,\Delta_2)$ be OCAs. Then, the OCA $\A_1 \times \A_2$ is given as $(Q_1 \times Q_2,\Sigma,(\init_1,\init_2),F_1 \times F_2,\max\{\thr_1,\thr_2\},\Delta)$.
For all $(q_1,q_2),(q_1',q_2') \in Q_1 \times Q_2$, $k \in \setz{\max\{m_1,m_2\}}$, and $\sigma \in \Sigma \cup \Op$, the transition relation $\Delta$ contains
\[((q_1,q_2),k,\sigma,(q_1',q_2'))\]
iff $(q_1,\min\{k,m_1\},\sigma,q_1') \in \Delta_1$ and $(q_2,\min\{k,m_2\},\sigma,q_2') \in \Delta_2$.

\smallskip

Let $(q_1,q_2),(q_1',q_2') \in Q_1 \times Q_2$, $x \in \N$, and $\tau \in (\Sigma \times \{x\}) \cup \Op$. Define $\sigma \in \Sigma \cup \Op$ as follows: If $\tau \in \Op$, then $\sigma = \tau$, and if $\tau = (a,x) \in \Sigma \times \{x\}$, then $\sigma = a$.
Moreover, let $x' = x$ if $\sigma \in \Sigma$, $x' = x + 1$ if $\sigma = \inc$, and $x' = x - 1$ if $\sigma = \dec$.
We have:
\[
\begin{array}{rl}
& ((q_1,q_2),x) \xRightarrow{\;\tau\;}_{\!\A_1 \times \A_2} ((q_1',q_2'),x')\\[1ex]
\textup{iff} & ((q_1,q_2),\min\{x,\max\{m_1,m_2\}\},\sigma,(q_1',q_2')) \in \Delta\\[1ex]
\textup{iff} & (q_1,\min\{\min\{x,\max\{m_1,m_2\}\},m_1\},\sigma,q_1') \in \Delta_1 \textup{ and}\\
& (q_2,\min\{\min\{x,\max\{m_1,m_2\}\},m_2\},\sigma,q_2') \in \Delta_2\\[1ex]
\textup{iff} & (q_1,\min\{x,m_1\},\sigma,q_1') \in \Delta_1 \textup{ and }
(q_2,\min\{x,m_2\},\sigma,q_2') \in \Delta_2\\[1ex]
\textup{iff} & (q_1,x) \xRightarrow{\;\tau\;}_{\!\A_1} (q_1',x') \textup{ and }
(q_2,x) \xRightarrow{\;\tau\;}_{\!\A_2} (q_2',x')
\end{array}
\]
We conclude $\vL(\A_1 \times \A_2) = \vL(\A_1) \cap \vL(\A_2)$. Note that, if $\A_1$ and $\A_2$ are deterministic, then clearly so is $\A_1 \times \A_2$.

\section{Proof of Lemma~\ref{lem:benc}}\label{app:benc}

We define $\Benc=(Q,\Sigma,\init,F,0,\delta)$ by $Q = \{\incmode,\nmode,\decmode,\bot\}$, $\init = \nmode$, and $F = \{\nmode\}$. Moreover, for $q \in Q$ and $\sigma \in \Sigma \cup \Op$, we let
\[\delta(q,0,\sigma) =
\begin{cases}
\incmode & \text{if } q \in \{\incmode\,,\nmode\} \text{ and } \sigma = \inc\\
\decmode & \text{if } q \in \{\decmode\,,\nmode\} \text{ and }  \sigma = \dec\\
\nmode & \text{if } q \in \{\incmode\,,\nmode,\decmode\} \text{ and } \sigma \in \Sigma\\
\bot & \text{otherwise}
\end{cases}\]
It is easy to see that $\vL(\Benc) = \Encs{\Sigma}$.
In fact, $\Benc$ enters an ``increasing'' or ``decreasing'' mode as soon as it performs $\inc$ or, respectively, $\dec$. Such a mode can only be quit by reading a letter from $\Sigma$ or entering the sink state $\bot$. 
This avoids forbidden reversals between $\inc$ and $\dec$. As, moreover, the only final state is $\nmode$, any accepted word that is not empty ends in a letter from $\Sigma$.

\section{Proof of Lemma~\ref{lem:enc}}\label{app:enc}

We have to show $\vL(\A) = \enc{\oL(\A)}$.

\medskip

Towards ``\,$\subseteq$\,'', let $w = u_1a_1 \ldots u_na_n \in \vL(\A) \subseteq \Encs{\Sigma}$  where $a_i \in \Sigma$ and $u_i \in \{\inc\}^\ast \mathrel{\cup} \{\dec\}^\ast$ for all $i \in \set{n}$. There is an accepting run $\rho$ of $\A$ such that $\vextr(\trace(\rho)) = w$. By the definition of ${\srelp{\A}}$, we have $\oextr(\trace(\rho)) = (a_1,x_1) \ldots (a_n,x_n)$ where (letting $x_0 = 0$), for all $i \in \set{n}$, we have $x_i = x_{i-1} + |u_i|$ if $u_i \in \{\inc\}^\ast$, and $x_i = x_{i-1} - |u_i|$ if $u_i \in \{\dec\}^\ast$. We easily verify that
$\enc{\oextr(\trace(\rho))} = \vextr(\trace(\rho)) = w$. As $\oextr(\trace(\rho)) \in \oL(\A)$, we are done.

\medskip

To show ``\,$\subseteq$\,'', let $w=(a_1,x_1) \ldots (a_n,x_n) \in \oL(\A)$.
Since $\vL(\A) \subseteq \Encs{\Sigma}$,
there is an accepting run $\rho$ of $\A$ of the form
\[(q_0,\cv_0) \srelpp{\A}{u_1\;} \, \xRightarrow{(a_1,x_1)\,}_{\!\A} (q_1,\cv_1) \ldots \srelpp{\A}{u_n\;} \, \xRightarrow{(a_n,x_n)\,}_{\!\A} (q_n,\cv_n)\]
where $u_i \in \{\inc\}^\ast \cup \{\dec\}^\ast$ for all $i \in \set{n}$ (and with the expected meaning of $\srelpp{\A}{u_i\;}$). But this implies
that $u_i = \textup{sign}(x_i - x_{i-1})^{|x_i - x_{i-1}|}$ for all $i \in \set{n}$. Thus, $u_1a_1 \ldots u_na_n = \enc{w}$. As, moreover, $\vextr(\trace(\rho)) = u_1a_1 \ldots u_na_n \in \vL(\A)$, this concludes the proof.

\section{Proof of Proposition~\ref{prop:intersect}}\label{app:intersect}
We will show $\oL(\A_1 \times \A_2) = \oL(\A_1) \cap \oL(\A_2)$:
\[
\begin{array}{rcl}
\oL(\A_1 \times \A_2) & \stackrel{\textup{Lem.~\ref{lem:prod},\ref{lem:enc}}}{=} & \encmap^{-1}(\vL(\A_1 \times \A_2))\\
& \stackrel{\textup{Lem.~\ref{lem:prod}}}{=} & \encmap^{-1}(\vL(\A_1) \cap \vL(\A_2))\\
& \stackrel{\textup{Lem.~\ref{lem:enc}}}{=} & \encmap^{-1}(\enc{\oL(\A_1)} \cap \enc{\oL(\A_2)})\\[0.5ex]
& = & \encmap^{-1}(\enc{\oL(\A_1) \cap \oL(\A_2)})\\[0.5ex]
& = & \oL(\A_1) \cap \oL(\A_2)
\end{array}
\]

\section{Proof Details for Lemma~\ref{lem:eoca}}\label{app:eoca}

We will show
\[\oL(\eA{\A}) \stackrel{\textup{(a)}}{\subseteq} \oL(\A) \stackrel{\textup{(b)}}{\subseteq} \encmap^{-1}(\vL(\eA{\A}) \cap \Encs{\Sigma}) \stackrel{\textup{(c)}}{\subseteq} \oL(\eA{\A})\]

Case (c) is obvious.

\paragraph{(a)}

Let $w \in \oL(\eA{\A})$.
There is an accepting run
\[\rho' = (q_0,\cv_0) \srelpp{\eA{\A}}{\obsletter_1\;} (q_1,\cv_1) \srelpp{\eA{\A}}{\obsletter_2\;} \ldots \srelpp{\eA{\A}}{\obsletter_n\;} (q_n,\cv_n)\]
of $\eA{\A}$ such that $w = \oextr(\trace(\rho'))$. We will
construct an accepting run $\rho$
of $\A$ such that $\oextr(\trace(\rho)) = \oextr(\trace(\rho'))$. This proves $w \in \oL(\A)$. Pick $i \in \set{n}$.
We have $(q_{i-1},\cv_{i-1}) \srelpp{\eA{\A}}{\obsletter_i\;} (q_i,\cv_i)$. By the definition of $\Delta'$, there exist $\hat{q}_{i-1} \in Q$, $k_{i-1} \in \setz{\thr}$, and $\sigma_i \in \Sigma \cup \Op$ such that
\begin{enumerate}\itemsep=0.5ex
\item $x_{i-1} \models \phi_{q_{i-1},\hat{q}_{i-1}} \wedge \kform{k_{i-1}}$,
\item $(\hat{q}_{i-1},k_{i-1},\sigma_i,q_i) \in \Delta$, and
\item one of the following holds:
\begin{itemize}\itemsep=0.5ex
\item $\tau_i = (a,x) \in \Sigma \times \N$ and $x_i = x_{i-1} = x$ and $\sigma_i = a$,
\item $\tau_i = \inc$ and $x_i = x_{i-1} + 1$ and $\sigma_i = \inc$, or
\item $\tau_i = \dec$ and $x_i = x_{i-1} - 1$ and $\sigma_i = \dec$.
\end{itemize}
\end{enumerate}
Note that 1.\ implies $k_{i-1} = \min\{x_{i-1},\thr\}$. By Lemma~\ref{lem:rpq}, we get
\[(q_{i-1},\cv_{i-1}) \mathrel{\smash[b]{\stackrel{\epsilon}{\Longrightarrow}}_{\!\A}^\ast} (\hat{q}_{i-1},\cv_{i-1}) \srelpp{\A}{\obsletter_i\;} (q_i,\cv_i)\]
where we set ${\smash[b]{\stackrel{\epsilon}{\Longrightarrow}}_{\!\A}} = {\srelpp{\A}{\sinc\;} \cup \srelpp{\A}{\sdec\;}}$. We have $x_n \models \bigvee_{q \in F} \psi_{q_n,q}$.
Thus, by Lemma~\ref{lem:rpq},
\[(q_n,\cv_n) \mathrel{\smash[b]{\stackrel{\epsilon}{\Longrightarrow}}_{\!\A}^\ast} (q,\cv)\] for some $q \in F$ and $\cv \in \N$.
Note that $(q,x)$ is a final configuration of $\A$.
Putting everything together, we obtain an accepting run
\[
\begin{array}{rll}
\rho = (q_0,\cv_0) & \mathrel{\smash[b]{\stackrel{\epsilon}{\Longrightarrow}}_{\!\A}^\ast} (\hat{q}_{0},\cv_{0})
& \srelpp{\A}{\;\obsletter_1\;} (q_1,\cv_1)\\
&  \mathrel{\smash[b]{\stackrel{\epsilon}{\Longrightarrow}}_{\!\A}^\ast} (\hat{q}_{1},\cv_{1})
& \srelpp{\A}{\;\obsletter_2\;} (q_{2},\cv_{2})\\
& \qquad\qquad\qquad\qquad \vdots\\
&  \mathrel{\smash[b]{\stackrel{\epsilon}{\Longrightarrow}}_{\!\A}^\ast} (\hat{q}_{n-1},\cv_{n-1})
& \srelpp{\A}{\obsletter_n\;} (q_{n},\cv_{n})
 \mathrel{\smash[b]{\stackrel{\epsilon}{\Longrightarrow}}_{\!\A}^\ast} (q,\cv)
\end{array}
\]
of $\A$. Here, the star operation $\ast$ stands for an arbitrary but concrete number of repetitions. Obviously, $\oextr(\trace(\rho)) = \oextr(\tau_1 \ldots \tau_n) = \oextr(\trace(\rho'))$ so that we are done.

\paragraph{(b)}

\newcommand{\qst}[2]{p_{#1,#2}}
\newcommand{\qstp}[2]{q_{#1,#2}}
\newcommand{\qstpp}[2]{q_{#1,#2}'}

Next, we show that $\oL(\A) \subseteq \encmap^{-1}(\vL(\eA{\A}) \cap \Encs{\Sigma})$.
Let $w = (a_1,x_1) \ldots (a_n,x_n) \in \oL(\A)$.
There is an accepting run $\rho$ of $\A$ such that $w=\oextr(\trace(\rho))$.
The run $\rho$ can be decomposed as
\[
\begin{array}{rll}
(q_0,x_0) & \left(\mathrel{\smash[b]{\stackrel{\epsilon}{\Longrightarrow}}_{\!\A}^\ast}\; \mathrel{\xRightarrow{\op_1\,}_{\!\A}}\right)^{x_1}
& \mathrel{\smash[b]{\stackrel{\epsilon}{\Longrightarrow}}_{\!\A}^\ast}\; \xRightarrow{(a_1,x_1)\,}_{\!\A} (q_1,x_1)\\
& \left(\mathrel{\smash[b]{\stackrel{\epsilon}{\Longrightarrow}}_{\!\A}^\ast}\; \mathrel{\xRightarrow{\op_2\,}_{\!\A}}\right)^{|x_2 - x_1|}
& \mathrel{\smash[b]{\stackrel{\epsilon}{\Longrightarrow}}_{\!\A}^\ast}\; \xRightarrow{(a_2,x_2)\,}_{\!\A} (q_2,x_2)\\
& \qquad\qquad\qquad \vdots\\
& \left(\mathrel{\smash[b]{\stackrel{\epsilon}{\Longrightarrow}}_{\!\A}^\ast}\; \mathrel{\xRightarrow{\op_n\,}_{\!\A}}\right)^{|x_n - x_{n-1}|}
& \mathrel{\smash[b]{\stackrel{\epsilon}{\Longrightarrow}}_{\!\A}^\ast}\; \xRightarrow{(a_n,x_n)\,}_{\!\A} (q_n,x_n)
\mathrel{\smash[b]{\stackrel{\epsilon}{\Longrightarrow}}_{\!\A}^\ast} (q_n',x_n')
\end{array}
\]
where $\op_i = \textup{sign}(x_{i} - x_{i-1})$ (as defined in Section~\ref{sec:obs-vis}) and where, for all $i \in \set{n}$, the partial run
\[(q_{i-1},x_{i-1})
\left(\mathrel{\smash[b]{\stackrel{\epsilon}{\Longrightarrow}}_{\!\A}^\ast}\; \mathrel{\xRightarrow{\op_i\,}_{\!\A}}\right)^{|x_i - x_{i-1}|}
\mathrel{\smash[b]{\stackrel{\epsilon}{\Longrightarrow}}_{\!\A}^\ast}\; \xRightarrow{(a_i,x_i)\,}_{\!\A} (q_i,x_i)\]
is actually of the form
\[
\begin{array}{rll}
(q_{i-1},x_{i-1}) = (p_0,y_0)
& \mathrel{\smash[b]{\stackrel{\epsilon}{\Longrightarrow}}_{\!\A}^\ast} (r_0,y_0)
& \mathrel{\xRightarrow{\op_i\,}_{\!\A}} (p_1,y_1)\\[1ex]
& \mathrel{\smash[b]{\stackrel{\epsilon}{\Longrightarrow}}_{\!\A}^\ast} (r_1,y_1)
& \mathrel{\xRightarrow{\op_i\,}_{\!\A}} (p_2,y_2)\\
& \qquad\qquad\vdots\\
& \mathrel{\smash[b]{\stackrel{\epsilon}{\Longrightarrow}}_{\!\A}^\ast} (r_{\ell-1},y_{\ell-1})
& \mathrel{\xRightarrow{\op_i\,}_{\!\A}} (p_\ell,y_\ell)\\[1ex]
& \mathrel{\smash[b]{\stackrel{\epsilon}{\Longrightarrow}}_{\!\A}^\ast} (r_\ell,y_\ell)
& \xRightarrow{(a_i,x_i)\,}_{\!\A} (q_i,x_i)
\end{array}
\]
where $\ell = |x_i - x_{i-1}|$ and $y_j = x_{i-1} + j \cdot \op_i$ (with the expected meaning).

Pick $j \in \{1,\ldots,\ell\}$. Since
\[(p_{j-1},y_{j-1})
\mathrel{\smash[b]{\stackrel{\epsilon}{\Longrightarrow}}_{\!\A}^\ast} (r_{j-1},y_{j-1})
\mathrel{\xRightarrow{\op_i\,}_{\!\A}} (p_j,y_j)\,,\] we have
$y_{j-1} \models \phi_{p_{j-1},r_{j-1}}$ and
$(r_{j-1},\min\{y_{j-1},\thr\},\op_i,p_j) \in \Delta$.
This implies $(p_{j-1},\phi_{p_{j-1},r_{j-1}} \wedge \kform{\min\{y_{j-1},\thr\}},\op_i,p_j) \in \Delta'$. Therefore,
\[(p_{j-1},y_{j-1}) \mathrel{\xRightarrow{\op_i\,}_{\!\eA{\A}}} (p_{j},y_j)\,.\]
Letting $j$ range over $\{1,\ldots,\ell\}$,
we obtain a partial run
\[\rho_i \df (q_{i-1},x_{i-1}) \left(\mathrel{\xRightarrow{\op_i\,}_{\!\eA{\A}}}\right)^{|x_i - x_{i-1}|}
(p_\ell,y_\ell)\]
of $\eA{\A}$.
Similarly, by
\[(p_\ell,y_\ell)
\mathrel{\smash[b]{\stackrel{\epsilon}{\Longrightarrow}}_{\!\A}^\ast} (r_\ell,y_\ell)
\xRightarrow{(a_i,x_i)\,}_{\!\A} (q_i,x_i)\,,\] we obtain
$y_\ell \models \phi_{p_\ell,r_\ell}$ and
$(r_\ell,\min\{y_\ell,\thr\},a_i,q_i) \in \Delta$.
Thus, $(p_\ell,\phi_{p_\ell,r_\ell} \wedge \kform{\min\{y_\ell,\thr\}},a_i,q_i) \in \Delta'$, which implies
\[{\rho}_i' \df (p_\ell,y_\ell) \xRightarrow{(a_i,x_i)\,}_{\!\eA{\A}} (q_i,x_i)\,.\]

\smallskip

Finally, concatenating $\rho_1,{\rho}_1',\ldots,\rho_n,{\rho}_n'$ yields a run $\rho'$ of $\eA{\A}$.  Since $\rho$ is accepting, we have $q_n' \in F$. By $(q_n,x_n)
\mathrel{\smash[b]{\stackrel{\epsilon}{\Longrightarrow}}_{\!\A}^\ast} (q_n',x_n')$, we also have $x_n \models \psi_{q_n,q_n'}$. Thus, $(q_n,x_n)$ is a ``final configuration'' of $\eA{\A}$ so that $\rho'$ is accepting.

Obviously, $\oextr(\trace(\rho)) = \oextr(\trace(\rho'))$.
By the definition of $\encmap$, we also have $\vextr(\trace(\rho')) = \enc{w}$.
We conclude that $w \in \encmap^{-1}(\vL(\eA{\A}) \cap \Encs{\Sigma})$.

\section{Proof of Lemma~\ref{lem:BC}}\label{app:BC}

Let $\SetGuards \subseteq \Guardsmod$ be a nonempty finite set and let $\thr_\SetGuards = \max\{\consta \mid \modform{\consta}{\constb}$ is an atomic subformula of some $\phi \in \SetGuards\}+2$. Suppose $\{\modform{\consta_1}{\constb_1},\ldots,\modform{\consta_n}{\constb_n}\}$ is the set of all atomic formulas $\modform{\consta}{\constb}$ that occur as subformulas in $\SetGuards$.
Note that $m_\SetGuards = \max\{c_1,\ldots,c_n\}+2$. We construct $\B_\SetGuards=(Q,\Sigma,\init,Q,\thr_\SetGuards,\delta)$ and the mapping $\eval$ as follows:
\begin{itemize}\itemsep=0.5ex
\item $Q = \prod_{i \in \set{n}} (\{0,\ldots,\consta_i\} \times \{0,\ldots,\constb_i'-1\})$ where $\constb_i' = \max\{1,\constb_i\}$

\item $\init = ((0,0),\ldots,(0,0))$

\item $\delta((q_1,\ldots,q_n),k,\sigma) =
\begin{cases}
(q_1,\ldots,q_n) & \text{if } \sigma \in \Sigma\\
(\delta_1(q_1,k,\sigma),\ldots,\delta_n(q_n,k,\sigma)) & \text{if } \sigma \in \{\inc,\dec\}
\end{cases}$
\end{itemize}
Here, for $i \in \set{n}$, $(x,y) \in \setz{\consta_i} \times \setz{\constb_i'-1}$, $k \in \setz{\thr_\SetGuards}$, and $\sigma \in \Op$,
\[\begin{array}{l}
\begin{array}{lcl}
\delta_i((x,y),k,\sigma) & = &
\left.
\begin{cases}
(c_i,0) & \text{if }
\left(\begin{array}{rl}
& \sigma = \inc \,\wedge\, k = c_i-1\\
\vee & \sigma = \dec \,\wedge\, k = c_i+1
\end{array}\right)\\
(c_i,1) & \text{otherwise} 
\end{cases}~~~\right| \text{ if } d_i = 0
\end{array}\\~\\
\left.
\begin{array}{lcl}
\delta_i((x,y),k,\inc) & = &
\begin{cases}
(x,(y+1) \mathrel{\textup{mod}} d_i) & \text{if } k \ge c_i\\
(x+1,0) & \text{if } k < c_i
\end{cases}\\~\\
\delta_i((x,y),k,\dec) & = &
\begin{cases}
(x,(y-1) \mathrel{\textup{mod}} d_i) & \text{if } k > c_i+1\\
(\max\{0,x-1\},0) & \text{if } k \le c_i+1
\end{cases}
\end{array}~~~\right| \text{ if } d_i \ge 1
\end{array}\]
with $-1 \mathrel{\textup{mod}} d_i = d_i-1$.

Let us define the mapping $\eval$. Let $\bar{q} = ({q}_1,\ldots,{q}_n) \in Q$.
For $i \in \set{n}$, we write $\bar{q} \models \modform{c_i}{d_i}$ iff ${q}_i = (c_i,0)$.
Moreover, given $\phi \in \SetGuards$, we write $\bar{q} \models \phi$ iff $\phi$ is evaluated to true given the truth values of $\bar{q} \models \modform{c}{d}$ for the atomic subformulas $\modform{c}{d}$ of $\phi$. With this, we set $\eval(\bar{q}) = \{\phi \in \SetGuards \mid \bar{q} \models \phi\}$.

\section{Proof Details for Lemma~\ref{thm:oca-to-doca}}\label{app:oca-to-doca}

Let $\extr \in \{\cextr,\vextr,\oextr\}$. We first show $\eL(\A') \subseteq \eL(\A)$. Suppose $w  \in \eL(\A')$. There is an accepting run
\[\rho' = ((P_0,q_0),\cv_0) \srelpp{\A'}{\obsletter_1\;} ((P_1,q_1),\cv_1) \srelpp{\A'}{\obsletter_2\;} \ldots \srelpp{\A'}{\obsletter_n\;} ((P_n,q_n),\cv_n)\]
such that $w = \extr(\tau_1 \ldots \tau_n)$. We will determine $p_0,p_1,\ldots,p_n \in Q$ such that
\[\rho = (p_0,\cv_0) \srelpp{{\A}}{\obsletter_1\;} (p_1,\cv_1) \srelpp{{\A}}{\obsletter_2\;} \ldots \srelpp{{\A}}{\obsletter_n\;} (p_n,\cv_n)\]
is an accepting run of ${\A}$. This will imply $w = \extr(\tau_1 \ldots \tau_n) \in \eL(\A)$.

As $\rho'$ is accepting, we have $(P_n,q_n) \in F'$. That is, there is $p_n \in P_n$ such that $\fmap(p_n) \in \eval(q_n)$.
The latter implies $x_n \models \fmap(p_n)$ so that $(p_n,x_n)$ is a final configuration of $\A$.

Now, suppose that we already have $p_i \in P_i$ for some $i \in \set{n}$. We find $p_{i-1}$ as follows.
If $\tau_i \in \Op$, set $\sigma_i = \tau_i$, and if $\tau_i = (a,x_i)$ for some $a \in \Sigma$, then set $\sigma_i = a$.
By $((P_{i-1},q_{i-1}),\cv_{i-1}) \srelpp{\A'}{\obsletter_i\;} ((P_i,q_i),\cv_i)$,
we have
\[\delta'((P_{i-1},q_{i-1}),k,\sigma_i) = (P_i,\hat\delta(q_{i-1},k,\sigma_i))\] where $k = \min\{x_{i-1},\thr_\SetGuards\}$.
Thus, there are $p_{i-1} \in P_{i-1}$ and $\phi \in \eval(q_{i-1})$ such that $(p_{i-1},\phi,\sigma_i,p_i) \in \Delta$ is a transition of $\A$. Since $\phi \in \eval(q_{i-1})$ implies $x_{i-1} \models \phi$, we have $(p_{i-1},\cv_{i-1}) \srelpp{\A}{\obsletter_i\;} (p_i,\cv_i)$. We proceed in that way until we found $p_0,p_1,\ldots,p_n$.

\medskip

Conversely, let us show $\eL(\A) \subseteq \eL(\A')$. Take any word
$w \in \eL(\A)$. There is an accepting run
\[\rho = (p_0,\cv_0) \srelpp{\A}{\obsletter_1\;} (p_1,\cv_1) \srelpp{\A}{\obsletter_2\;} \ldots \srelpp{\A}{\obsletter_n\;} (p_n,\cv_n)\]
of $\A$ such that $w = \extr(\tau_1 \ldots \tau_n)$. Since $\A'$ is deterministic, there is a unique run of $\A'$ of the form
\[\rho' = ((P_0,q_0),\cv_0) \srelpp{\A'}{\obsletter_1\;} ((P_1,q_1),\cv_1) \srelpp{\A'}{\obsletter_2\;} \ldots \srelpp{\A'}{\obsletter_n\;} ((P_n,q_n),\cv_n)\]
We will show that $\rho'$ is accepting so that we are done.

To do so, we show that $p_i \in P_i$ for all $i \in \{0,\ldots,n\}$.
Of course, $p_0 = \init \in \{\init\} = P_0$.
From $p_{i-1} \in P_{i-1}$ (with $i \ge 1$), we will now infer $p_i \in P_i$.
Let $\sigma_i$ be defined as above.
We have $(p_{i-1},\cv_{i-1}) \srelpp{\A}{\obsletter_i\;} (p_i,\cv_i)$.
Therefore, there is $\phi \in \Guardsmod$ such that
$(p_{i-1},\phi,\sigma_i,p_i) \in \Delta$ is a transition of $\A$
and $x_{i-1} \models \phi$. The latter implies $\phi \in \eval(q_{i-1})$ so that,
we have $p_i \in P_i$.

We finally obtain $p_n \in P_n$. Recall that $F' = \{(P,q) \in Q' \mid \fmap(p) \in \eval(q)$ for some $p \in P\}$.
Since $x_n \models \fmap(p_n)$, we have $\fmap(p_n) \in \eval(q_n)$. Thus, $(P_n,q_n) \in F'$ so that $\rho'$ is indeed accepting.

\section{Proof Details for Theorem~\ref{thm:strong-aut}}\label{app:strong-aut}

\newcommand{\T}{\mathcal{T}}

``$\Longrightarrow$'':\; Given $q,q'$, we transform $\A$ into a two-way automaton $\T_{q,q'}$ over infinite words \cite{Pecuchet85} over the alphabet $2^{\{\$,\$'\}}$. The idea is that word positions represent counter values (the first position marking $0$, the second $1$, and so on) and $\$$ and $\$'$ represent $x$ and $x'$, respectively. Thus, we are only interested in words in which $\$$ and $\$'$ each occur exactly once. Clearly, this is a regular property. At the beginning, $\T_{q,q'}$ goes to the position carrying $\$$. It then simulates $\A$ starting in $q$, and it accepts if it is on the position carrying $\$'$ and in state $q'$. The simulation itself is straightforward: Counter increments and decrements of an OCA translate to going to the left and to the right, respectively, and a zero test simply checks whether the automaton is at the first position of the word. Note that $\T_{q,q'}$ checks for the markers $\$$ and $\$'$ only at the beginning and at the end of an execution, but not during the actual simulation of $\A$.
Let $w=z_0z_1z_2 \ldots \in \bigl(2^{\{\$,\$'\}}\bigr)^\omega$ and $x,x' \in \N$ be unique positions such that $\$ \in z_x$ and $\$' \in z_{x'}$. Then $w$ is accepted by $\T_{q,q'}$ iff $(q,\cv) \mathrel{{(\srelpp{\A}{\sinc\;} \cup \srelpp{\A}{\sdec\;})^\ast}} (q',\cv')$.

It is well known that two-way word automata are expressively equivalent to one-way automata. Therefore, by B{\"u}chi's theorem, the word language accepted by $\T_{q,q'}$ can be translated into a corresponding MSO formula without free variables but with subformulas of the form ``position $\mathsf{y}$ carries $\$$'' and ``position $\mathsf{y}$ carries $\$'$\,'' \cite{Tho97handbook}. We replace the latter two by $\mathsf{y} = \varx$ and $\mathsf{y} = \varx'$, respectively, and finally obtain $\transform_{q,q'}$ as required.

\bigskip

\noindent ``$\Longleftarrow$'':\; It remains to show $\oL(\A) = L(\SA)$.

\medskip

Towards ``$\supseteq$'', let
$w = (a_1,x_1) \ldots (a_n,x_n) \in L(\SA)$.
There is an accepting run
\[\rho = (p_0,x_0) \xRightarrow{(a_1,x_1)\,}_{\SA} (p_1,\cv_1) \xRightarrow{(a_2,x_2)\,}_{\SA} \ldots \xRightarrow{(a_n,x_n)\,}_{\SA} (p_n,\cv_n)\]
of $\SA$ on $w$.
There are $\transform_1,\ldots,\transform_{n}$ such that
$(p_{i-1},\transform_{i},a_i,p_i) \in \Delta$ and $(x_{i-1},x_i) \models \transform_{i}$ for all $i \in \set{n}$. In addition, we set $\transform_{n+1} = \false$.
Moreover, we set $r_i = \delta_{\transform_i}(\init_{\transform_i},0,\inc^{x_{i-1}})$
(which is defined as expected)
and $q_i = \delta_{\transform_i}(\init_{\transform_i},0,\inc^{x_{i-1}}\markerone)$
for all $i \in \set{n+1}$. Note that $r_1 = \init_{\transform_1}$ and $q_1 = \delta_{\transform_1}(\init_{\transform_1},0,\markerone)$.

Pick $i \in \set{n}$.
Suppose  $x_{i-1} \ge x_i$. Then, $\inc^{x_{i-1}} \markerone \dec^{x_{i-1} - x_{i}} \markertwo \in L(\C_{\transform_i})$. Therefore, there is an accepting run of $\C_{\transform_i}$ of the form 
\[(\init_{\transform_i},0) \left(\xRightarrow{\;\sinc\;}_{\C_{\transform_i}}\right)^{x_{i-1}} (r_{i},x_{i-1}) \xRightarrow{\,(\markerone,x_{i-1})\,}_{\C_{\transform_i}} (q_{i},x_{i-1}) \left(\xRightarrow{\;\sdec\;}_{\C_{\transform_i}}\right)^\ell (q_i',x_i) \xRightarrow{\,(\markertwo,x_i)\,}_{\C_{\transform_i}} (\hat{q}_i,x_i)\]
where $\ell = x_{i-1} - x_i$ and $\hat{q}_i \in F_{\transform_i}$. Thus, $\A$ exhibits a sequence of transitions
\[(p_{i-1},\transform_{i},q_{i}) \xrightarrow{\,\pi_{k_0}\,|\,\sdec\,} \ldots \xrightarrow{\,\pi_{k_{\ell-1}}\,|\,\sdec\,} (p_{i-1},\transform_{i},q_i') \xrightarrow{\,\incform_{r_{i+1}} \wedge \pi_{k_\ell}\,|\,a_i\,} (p_i,\transform_{i+1},q_{i+1})\]
where $k_j = {\min\{x_{i-1}-j,m\}}$ for all $j \in \setz{\ell}$.
By $r_{i+1} = \delta_{\transform_{i+1}}(\init_{\transform_{i+1}},0,\inc^{x_{i}})$, we have $x_i \in \sem{\incform_{r_{i+1}}}$. Thus,
\[((p_{i-1},\transform_{i},q_{i}),x_{i-1}) \left(\srelpp{\A}{\sdec\;}\right)^\ell
((p_{i-1},\transform_{i},q_i'),x_i)
\xRightarrow{(a_i,x_i)\,}_{\!\A} ((p_i,\transform_{i+1},q_{i+1}),x_i)\,.
\]
The case $x_{i-1} < x_i$ is handled analogously.

Altogether, we obtain
\[((p_0,\transform_{1},q_{1}),x_0) \mathrel{\smash[b]{\stackrel{\epsilon}{\Longrightarrow}}_{\!\A}^\ast}\; \xRightarrow{(a_1,x_1)\,}_{\!\A}
\ldots  \mathrel{\smash[b]{\stackrel{\epsilon}{\Longrightarrow}}_{\!\A}^\ast}\; \xRightarrow{(a_n,x_n)\,}_{\!\A}
((p_n,\transform_{n+1},q_{n+1}),x_n)
\]
where
${\smash[b]{\stackrel{\epsilon}{\Longrightarrow}}_{\!\A}} = {\srelpp{\A}{\sinc\;} \cup \srelpp{\A}{\sdec\;}}$.
Note that $(p_0,\transform_{1},q_{1})$ is an initial state, since $p_0 = \init$ and $q_1 = \delta_{\transform_1}(\init_{\transform_1},0,\inc^0 \markerone)$.
Moreover, $((p_n,\transform_{n+1},q_{n+1}),x_n)$ is a final configuration, since $p_n \in F$ and $\transform_{n+1} = \false$.
We obtain $(a_1,x_1) \ldots (a_n,x_n) \in \oL(\A)$.

\medskip

Towards ``$\subseteq$'', suppose $(a_1,x_1) \ldots (a_n,x_n) \in \oL(\A)$.
We first observe that $\vL(\A) \subseteq \Encs{\Sigma}$. This is by the languages $L(\C_\transform)$ as well as the definition of $\Delta'$. In fact, there is an accepting run of $\A$ of the form
\[
\begin{array}{rlll}
((p_0,\transform_1,q_1),x_0)  & 
\left(\mathrel{\xRightarrow{\;\sinc\;}_{\!\A}}\right)^{x_1} &
((p_0,\transform_1,q_1'),x_1) &
\mathrel{\xRightarrow{(a_1,x_1)\,}_{\!\A}}
((p_1,\transform_2,q_2),x_1)\\
& \left(\mathrel{\xRightarrow{\;\op_2\;}_{\!\A}}\right)^{|x_2-x_1|} &
((p_1,\transform_2,q_2'),x_2) &
\mathrel{\xRightarrow{(a_2,x_2)\,}_{\!\A}}
((p_2,\transform_3,q_3),x_2)\\
& & \vdots\\
((p_{i-1},\transform_i,q_i),x_{i-1}) & \left(\mathrel{\xRightarrow{\;\op_i\;}_{\!\A}}\right)^{|x_i-x_{i-1}|} &
((p_{i-1},\transform_i,q_i'),x_i) &
\mathrel{\xRightarrow{(a_i,x_i)\,}_{\!\A}}
((p_i,\transform_{i+1},q_{i+1}),x_i)\\
& & \vdots\\
& \left(\mathrel{\xRightarrow{\;\op_n\;}_{\!\A}}\right)^{|x_n-x_{n-1}|} &
((p_{n-1},\transform_n,q_n'),x_n) &
\mathrel{\xRightarrow{(a_n,x_n)\,}_{\!\A}}
((p_n,\transform_{n+1},q_{n+1}),x_n)
\end{array}
\]
where $\transform_{n+1} = \false$, $\op_i = \textup{sign}(x_i-x_{i-1})$ for all $i \in \{2,\ldots,n\}$, and $(p_{i-1},\transform_i,a_i,p_i) \in \Delta$ for all $i \in \set{n}$.
We will show that
\[\rho = (p_0,x_0) \xRightarrow{(a_1,x_1)\,}_{\SA} (p_1,\cv_1) \xRightarrow{(a_2,x_2)\,}_{\SA} \ldots \xRightarrow{(a_n,x_n)\,}_{\SA} (p_n,\cv_n)\]
is an accepting run of $\SA$.
Pick $i \in \set{n}$. It only remains to show $(x_{i-1},x_i) \models \transform_i$.
By the definition of $\Delta'$, there is $r_i \in Q_{\transform_i}$ such that
$q_i = \delta_{\transform_i}(r_i,\min\{x_{i-1},m\},\markerone)$ and $x_{i-1} \in \sem{\incform_{r_i}}$.
Thanks to the definition of $\incform_{r_i}$ and $\Delta'$, there is an accepting run of $\C_{\transform_i}$ of the form
\[(\init_{\transform_i},0) \left(\xRightarrow{\;\sinc\;}_{\C_{\transform_i}}\right)^{x_{i-1}} (r_{i},x_{i-1}) \xRightarrow{\,(\markerone,x_{i-1})\;}_{\C_{\transform_i}} (q_{i},x_{i-1}) \left(\xRightarrow{\;\op_i\;}_{\C_{\transform_i}}\right)^{\ell_i} (q_i',x_i) \xRightarrow{\,(\markertwo,x_i)\;}_{\C_{\transform_i}} (\hat{q}_i,x_i)\]
where $\ell_i = |x_i - x_{i-1}|$. But this implies
$\inc^{x_i} \markerone (\op_i)^{\ell_i} \markertwo \in \vL(\C_{\transform_i})$ and, by Claim~\ref{cl:Cphi},
$(x_{i-1},x_i) \models \transform_i$. Since $p_0 = \init$ and $p_n \in F$, we obtain that $\rho$ is indeed an accepting run.
We conclude $(a_1,x_1) \ldots (a_n,x_n) \in L(\SA)$.

\end{document}